\documentclass{article}

\usepackage[english]{babel}
\usepackage{fullpage}
\usepackage{authblk}
\usepackage{hyperref}
\usepackage{booktabs,tabularx}

\usepackage{lineno}

\usepackage{amsmath}
\usepackage{amsthm}
\usepackage{amssymb}
\usepackage{graphicx}
\usepackage{mathtools}

\usepackage{tikz}
\usetikzlibrary{calc}
\usetikzlibrary{hobby}

\newtheorem{lemma}{Lemma}
\newtheorem{theorem}{Theorem}

\newtheorem{definition}{Definition}
\newtheorem{claim}{Claim}

\usepackage{thmtools,thm-restate}

\newcommand{\ghom}[1]{\textsc{Hom}(#1)}
\newcommand{\plainghom}{\textsc{Graph Homomorphism}}
\newcommand{\subiso}[1]{\textsc{Subgraph Isomorphism}(#1)}

\newcommand{\VCSP}[1]{\textsc{VCSP}(#1)}
\newcommand{\plainVCSP}{\textsc{VCSP}}
\newcommand{\plainCSP}{\textsc{CSP}}
\newcommand{\valhom}[1]{\textsc{ValHom}(#1)}

\newcommand{\majcover}[1]{\text{pmn}(#1)}

\newcommand{\struct}[1]{\mathbf{#1}}

\newcommand{\feas}[1]{\text{Feas}(#1)}
\newcommand{\supp}[1]{\text{supp}(#1)}
\newcommand{\opt}[1]{\text{opt}(#1)}
\newcommand{\cost}[1]{\text{cost}(#1)}

\newcommand{\sa}[3]{\textbf{SA}_{#1,#2}(#3)}
\newcommand{\comp}[1]{\text{Comp}(#1)}

\newcommand{\problemStatement}[3]{%
  \begin{center}
  \begin{tabularx}{\columnwidth}{@{}lX@{}}
  \toprule
  \multicolumn{2}{@{}c@{}}{\textsc{#1}}\tabularnewline
  \midrule
  \bfseries Input:    & #2 \\
  \bfseries Goal: & #3 \\
  \bottomrule
  \end{tabularx}
  \end{center}
}

\title{Faster algorithms for graph homomorphism via tractable constraint satisfaction}
\author{Clément Carbonnel}
\affil{CNRS, LIRMM, University of Montpellier}
\date{}

\begin{document}
\maketitle

\begin{abstract}
    We show that the existence of a homomorphism from an $n$-vertex graph $G$ to an $h$-vertex graph $H$ can be decided in time $2^{O(n)}h^{O(1)}$ and polynomial space if $H$ comes from a family of graphs that excludes a topological minor. The algorithm is based on a reduction to a single-exponential number of constraint satisfaction problems over tractable languages and can handle cost minimization. We also present an improved randomized algorithm for the special case where the graph $H$ is an odd cycle.
\end{abstract}

\section{Introduction}

A homomorphism from a graph $G$ to a graph $H$ is a mapping $g : V(G) \to V(H)$ such that $\{g(u),g(v)\}$ is an edge of $H$ whenever $\{u,v\}$ is an edge of $G$. The $\plainghom$ problem takes a pair of graphs $(G,H)$ as input and asks whether there exists a homomorphism from $G$ to $H$.  
This problem is NP-complete and can be solved in time $2^{O(n \log h)}$ by exhaustive enumeration of all possible mappings from $V(G)$ to $V(H)$, where $n = |V(G)|$ and $h = |V(H)|$. Cygan et al.~\cite{DBLP:journals/jacm/CyganFGKMPS17} proved that no algorithm can solve $\plainghom$ in time $2^{o(n \log h)}$ unless the ETH fails, providing strong evidence that no significant improvement is possible over the brute-force algorithm.

Given a family of graphs ${\mathcal H}$, the problem $\ghom{\mathcal H}$ is the restriction of $\plainghom$ to inputs $(G,H)$ where $H \in \mathcal H$. If ${\mathcal H}$ is composed of a single graph $H$, a theorem of Hell and Ne\v{s}et\v{r}il~\cite{DBLP:journals/jct/HellN90} states that either $H$ is bipartite and $\ghom{\mathcal H}$ can be solved in polynomial time, or $H$ contains an odd cycle and $\ghom{\mathcal H}$ is NP-complete. Finding a homomorphism to a bipartite graph is a trivial problem (equivalent to finding a vertex coloring of $G$ with two colors), so this theorem implies that $\ghom{\mathcal H}$ does not have interesting polynomial-time cases. On the other hand, there is some variability when $\ghom{\mathcal H}$ is NP-complete: certain of these problems can be solved in time $2^{O(n)} h^{O(1)}$ (which is the best runtime one can hope for under the ETH, based on a result of Jonsson, Lagerkvist, and Roy~\cite{DBLP:journals/toct/JonssonLR21}), while for many others no algorithm with single-exponential dependency on $n$ and $h$ is known. This motivates a more finely grained complexity analysis, in which the goal is to characterize the families of graphs $\mathcal H$ for which $\ghom{\mathcal H}$ admits a single-exponential algorithm. The possibility of such a characterization was raised as an open problem by Cygan et al.~\cite{DBLP:journals/jacm/CyganFGKMPS17}.

The most prominent results in this line of research concern $\mathcal K$, the family of all complete graphs. $\ghom{\mathcal K}$ corresponds to the graph coloring problem where the number of colors is part of the input. Lawler~\cite{DBLP:journals/ipl/Lawler76} observed that $\ghom{\mathcal K}$ can be solved in time $O(2.443^n)$ via dynamic programming. After a series of improvements, an algorithm with runtime $2^nn^{O(1)}$ was presented by Björklund, Husfeldt, and Koivisto~\cite{DBLP:journals/siamcomp/BjorklundHK09} based on the inclusion-exclusion principle. This algorithm is the fastest to date, although Björklund et al.~\cite{DBLP:conf/soda/BjorklundCHKP25} recently showed that $\ghom{\mathcal K}$ could be solved in time $O(1.99982^n)$ conditionally on a conjecture of Strassen~\cite{Strassen1994}.

For less specific families of graphs, Fomin, Heggernes, and Kratsch~\cite{DBLP:journals/mst/FominHK07} proved that if $\mathcal H$ has bounded treewidth then $\ghom{\mathcal H}$ can be solved in time $2^{O(n)}h^{O(1)}$. This result was generalized to all families of graphs of bounded cliquewidth by Wahlström~\cite{DBLP:journals/mst/Wahlstrom11} at the cost of an exponential dependency on $h$, and then to all families of graphs of bounded extended cliquewidth by Bulatov and Dadsetan~\cite{DBLP:conf/icalp/BulatovD20}. (Bounded extended cliquewidth is a more general property than bounded cliquewidth that applies, for example, to grid and hypercube graphs.) These three algorithms are based on a form of dynamic programming; as such, they require exponential space but can also solve various counting and optimization problems related to $\ghom{\mathcal H}$. Orthogonally, it is a folklore result that $\ghom{\mathcal H}$ can be solved in time $2^{O(n)}h^{O(1)}$ by a standard branching algorithm if $\mathcal H$ has bounded maximum degree.

\paragraph{Our contributions.} Our main contribution is an algorithm that solves $\ghom{\mathcal H}$ in single-exponential time and polynomial space if $\mathcal H$ excludes at least one topological minor. Excluding a topological minor is a hereditary property that generalizes both bounded treewidth and bounded maximum degree, while being incomparable in general with bounded (extended) cliquewidth.

The central component of our algorithm is a polynomial-time procedure that colors the vertex set of $H$ with a constant number $k$ of colors such that, for any vertex coloring of $G$, the problem of finding a color-preserving homomorphism from $G$ to $H$ can be solved in polynomial time. Since any homomorphism from $G$ to $H$ is color-preserving for at least one coloring of $G$, we can then solve the general homomorphism problem in single-exponential time and polynomial space by exhaustive enumeration of the $2^{O(n)}$ possible colorings of $G$ with at most $k$ colors.

For computing an appropriate coloring of $H$, our starting point is a simple observation that connects track layouts to tractable constraint satisfaction problems. A track layout of a graph is a partition of its vertex set into linearly ordered independent sets (tracks), with the constraint that edges with endpoints in the same tracks do not cross. Considering the vertex coloring of $H$ induced by a track layout, this topological constraint ensures that the color-preserving homomorphism subproblems can be formulated as constraint satisfaction problems of bounded relational width. This property causes a standard, polynomial-sized linear relaxation of the subproblems to have integrality gap $1$.

An influential theorem of Dujmovi\'c et al.~\cite{DBLP:journals/jacm/DujmovicJMMUW20} implies that all graphs that exclude a fixed minor admit a track layout with a constant number of tracks. Their proof is constructive and this layout can be computed in polynomial time. Combined with our observation, this theorem provides a single-exponential algorithm for $\ghom{\mathcal H}$ when $\mathcal H$ excludes a minor.  However, there exist families of graphs that exclude a topological minor but do not admit a track layout with a constant number of tracks. A canonical example (due to Wood~\cite{DBLP:journals/dmtcs/Wood08}) is the family of all graphs of maximum degree $3$. We address this issue by introducing a universal-algebraic generalization of track layouts, \emph{persistent majority colorings}, that is also bounded on graphs of bounded maximum degree and always induces color-preserving homomorphism subproblems of bounded relational width. We then prove that all families of graphs that exclude a topological minor admit polynomially computable persistent majority colorings with a bounded number of colors, using a structure theorem of Grohe and Marx~\cite{DBLP:journals/siamcomp/GroheM15} and adapting an argument of Dujmovi\'c, Morin and Wood~\cite{DBLP:journals/jct/DujmovicMW17}.

We state our result on a more general homomorphism problem that we call $\valhom{\mathcal H}$. In this variant, $G$ and $H$ are directed graphs and the underlying graph of $H$ belongs to $\mathcal H$. Each pair of arcs $(a_1,a_2) \in A(G) \times A(H)$ is associated with a (possibly infinite) cost $\eta(a_1,a_2)$. The goal is to find a homomorphism $g$ from $G$ to $H$ that minimizes the sum of the costs of mapped arcs, $\sum_{(u,v) \in A(G)} \eta((u,v),(g(u),g(v)))$. In the theorem below, $|\eta|$ denotes the encoding size of the cost function $\eta$.

 \begin{restatable}{theorem}{thmtopo}
    \label{thm:valhom}
    If $\mathcal H$ is a family of graphs that excludes at least one topological minor, then $\valhom{\mathcal H}$ can be solved in time $2^{O(n)} (h+|\eta|)^{O(1)}$ and polynomial space.
\end{restatable}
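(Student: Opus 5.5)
The plan follows the architecture outlined in the introduction. I reduce $\valhom{\mathcal H}$ to a single-exponential number of valued CSPs over languages on which a polynomial-size linear relaxation (the basic LP, or Sherali--Adams of bounded level) is exact. The proof has three parts.

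\emph{First}, I define persistent majority colorings. A coloring $c : V(H) \to [k]$ is a persistent majority coloring if, for every color class and every choice of color classes $C_1,\dots,C_m$, the relations obtained by restricting the arc relation of $H$ to $C_i \times C_j$ admit a common majority polymorphism. Moreover, this polymorphism must be ``persistent'' in the sense that it survives the unary restrictions (domain sets $C_i$) and is compatible across all pairs simultaneously. For track layouts, the natural majority on each track is the median with respect to the track order. The non-crossing condition then ensures that each binary relation $E \cap (T_i \times T_j)$ is closed under the componentwise median, because non-crossing edge sets between two ordered tracks are monotone ``staircase'' relations.

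For valued problems I need more than a crisp majority. Every binary cost function $\eta((u,v),(\cdot,\cdot))$ restricted to $C_i \times C_j$ must be handled. Since the costs are arbitrary, the polymorphism must impose no constraint on the costs themselves. The key fact I would prove is this: if the support relation of every binary constraint, together with the unary domains, has a majority polymorphism (so bounded relational width, in fact width $(2,3)$), and the instance is a binary VCSP whose constraint graph structure is given by $G$ with the arbitrary costs attached, then the coloring still induces tractability. I would use the known characterization that a VCSP is solved by a bounded level of Sherali--Adams iff the language admits a fractional polymorphism without certain cycles. Concretely, I would exploit that binary relations with a majority polymorphism on ordered domains yield ``$2$-decomposable'' structures. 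I would then argue through a symmetric fractional polymorphism obtained from the persistence assumption.

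\emph{Second}, I would give the enumeration algorithm. For each of the $k^n = 2^{O(n)}$ colorings $\gamma$ of $V(G)$, I restrict each $g(u)$ to $C_{\gamma(u)}$ and solve the resulting VCSP via the $\sa{\ell}{m}{\cdot}$ relaxation in time $(h+|\eta|)^{O(1)}$. The algorithm returns the minimum over all colorings. It uses polynomial space since the colorings are enumerated one at a time.

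\emph{Third}, and this is the main obstacle, I must show that every $\mathcal H$ excluding a topological minor admits polynomially computable persistent majority colorings with $k=O(1)$ colors. I would apply the Grohe--Marx structure theorem: such graphs have tree decompositions of bounded adhesion whose torsos either exclude a fixed minor, or have bounded degree apart from a bounded number of apex vertices. For the minor-free torsos I would use the Dujmovi\'c et al.\ track layout. For bounded-degree parts I would use a distance-$2$ proper coloring, with a constant number of colors; inside a color class no two vertices share a neighbor, so every relation $E\cap(C_i\times C_j)$ is a partial matching, which is closed under a suitable majority. The apex vertices get singleton colors.

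The gluing along the tree decomposition follows Dujmovi\'c--Morin--Wood's layered-partition argument. I would BFS-layer the decomposition tree and take the product of a coloring by layer index modulo a constant with the local colorings. This makes bags at different depths interact only with bags at adjacent depths, and adhesion vertices receive consistent colors. I would then verify that majority operations defined piecewise on each local structure combine into a global majority, which is where persistence is needed. I expect this combination step to require the most care.
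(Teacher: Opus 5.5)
\textbf{There is a genuine gap.} Your overall architecture is the paper's: enumerate the $k^n$ colorings of $G$, solve each color-compliant subproblem with $\sa{2}{3}{\cdot}$, and obtain $O(1)$-color colorings of $H$ via Grohe--Marx plus a Dujmovi\'c--Morin--Wood style gluing. But the tractability criterion everything rests on is wrong.

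Your ``key fact'' is false. It says that a class-preserving majority polymorphism of the relations $E\cap(C_i\times C_j)$ makes the compliant subproblems tractable for arbitrary $\eta$. For a counterexample, take $H=C_4$ (symmetric arcs) with the proper $2$-coloring into $\{a,b\}$ and $\{c,d\}$. The only nonempty relations are $\{a,b\}\times\{c,d\}$ and its reverse, and every conservative majority preserves them. Yet the compliant subproblems encode Max-Cut. Subdivide each edge $xy$ of a Max-Cut instance by a vertex $m$ of the second class. Give the arc $(x,m)$ infinite cost unless it is mapped to $(a,c)$ or $(b,d)$, and give $(y,m)$ cost $1$ exactly when it is mapped to $(a,c)$ or $(b,d)$; the optimum then counts uncut edges. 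The paper's closing remark on $\struct{\Gamma}_{C_9,\mathcal{S}_4}$ shows the same phenomenon.

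The missing idea is the persistent majority triple. This is a triple of operations $f_1,f_2,f_3$ with $f_1$ majority such that, on every triple of vertices and on every triple of tuples from each $R_{ij}$, the output $(f_1(t_1,t_2,t_3),f_2(t_1,t_2,t_3),f_3(t_1,t_2,t_3))$ is a permutation of $(t_1,t_2,t_3)$. Giving weight $1/3$ to each $f_i$ then yields a fractional polymorphism whose defining inequality holds with \emph{equality} for every cost function. So Thapper--\v{Z}ivn\'y applies to every valuation at once, and Theorem~\ref{thm:charmaj} shows that a condition of this permutation type is also necessary.

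Your local verifications must therefore establish this stronger property. Median-closure of staircase relations is not enough. For track layouts you need that $(\text{median}_\prec,\text{min}_\prec,\text{max}_\prec)$ permutes tuples, because non-crossing makes $\prec$ induce a linear order on each $R_{ij}$ (Lemma~\ref{lem:stack}). For distance-$2$ colorings you need the Cohen et al.\ triple (Lemma~\ref{lem:degree}).

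The gluing step has a second, independent gap. Layering the decomposition \emph{tree} by depth does not confine interactions to adjacent layers. A vertex may lie in the adhesion sets along an arbitrarily long root-to-leaf path and be adjacent to vertices introduced at every depth below it. You do not say how the local colorings are made to agree on such vertices, or how properness and the polymorphism survive these long-range edges, with $O(1)$ colors.

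The paper proceeds differently:
\begin{itemize}
\item It turns adhesion sets into cliques, giving a $k_2$-rich decomposition of a supergraph (Lemma~\ref{lem:rich}).
\item It takes a shadow-complete layering of the \emph{graph itself}, whose layers have $(k-1)$-rich decompositions (Lemma~\ref{lem:layers}), and recurses on the richness.
\item In Lemma~\ref{lem:shadowcombine}, a vertex's color records its local color, its layer index mod $3$, and the \emph{signature} of its parent clique.
\item Lemma~\ref{lem:permu-out} shows that $F_{i-1}$ acts on all monochromatic triples of three cliques by a single coordinate permutation. This is what allows $F_i$ to be extended consistently to the next layer.
\end{itemize}
Lemma~\ref{lem:permu-out} is a statement about the permutation property. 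It has no direct counterpart for the single majority operation your definition provides.
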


Various homomorphism problems on $\mathcal H$ can be reduced to $\valhom{\mathcal H}$ in polynomial time, with no increases to either $n$ or $h$. These include for example the problem of determining whether there exists a list homomorphism from a graph to another, or the vertex-based minimum-cost homomorphism problem on directed graphs introduced by Gutin et al.~\cite{DBLP:journals/dam/GutinRYT06}. Finally, the machinery behind Theorem~\ref{thm:valhom} is not limited to excluded topological minors: it provides single-exponential algorithms for all families of graphs with bounded persistent majority number, and some of those families do not exclude a topological minor. Further discussion on this topic, as well as formal definitions, can be found in Section~\ref{sec:dmn}.

Our second contribution is a randomized algorithm for $\ghom{C_{2k+1}}$, where $C_{2k+1}$ is the cycle on $2k+1$ vertices. The fine-grained complexity of this problem is of particular interest due to the fundamental role played by odd cycles in the Hell-Ne\v{s}et\v{r}il theorem. A deterministic algorithm of Fomin, Heggernes, and Kratsch~\cite{DBLP:journals/mst/FominHK07} solves $\ghom{C_{2k+1}}$ in time $(c_k)^n n^{O(1)}$, where $c_k = (2k+1)^{\frac{1}{2k+1}}(\frac{2k+1}{2k})^{\frac{2k}{2k+1}}$. For small values of $k$, specialized algorithms from Beigel and Eppstein~\cite{DBLP:journals/jal/BeigelE05} and Fomin, Heggernes and Kratsch~\cite{DBLP:journals/mst/FominHK07} yield the improved complexity bounds $O(1.329^n)$ for $k=1$ and $(\sqrt{2})^n n^{O(1)}$ for $2 \leq k \leq 4$. 

The $(c_k)^n n^{O(1)}$ algorithm of Fomin, Heggernes, and Kratsch boils down to the following observation: for any vertex $v \in V(C_{2k+1})$, the restricted list homomorphism problem in which all lists are either $\{v\}$ or $V(C_{2k+1}) \setminus \{v\}$ can be solved in polynomial time. The algorithm then solves $\ghom{C_{2k+1}}$ by enumerating all possible assignments of lists to vertices and solving the corresponding list homomorphism subproblems. The automorphisms of $C_{2k+1}$ ensure that only assignments that map at most $n/(2k+1)$ vertices to list $\{v\}$ need to be enumerated, which yields the claimed complexity bound.

We show that this idea can be further refined. Concretely, we construct a simple family $\mathcal S$ of three subsets of $V(C_{2k+1})$ that induces a tractable list homomorphism problem. These subsets are large and have significant pairwise overlaps, so assigning a list from $\mathcal S$ to a vertex of $G$ at random has a fairly high chance to be consistent with a hypothetical homomorphism. With an appropriate choice of probability distribution over $\mathcal S$, sampling approximately $(2^{-1/(2k+1)} \cdot c_k)^n$ tractable subproblems is sufficient to determine with high certainty whether there exists a homomorphism from $G$ to $C_{2k+1}$.

 \begin{restatable}{theorem}{thmodd}
    \label{thm:randodd}
    There exists a randomized algorithm that solves $\ghom{C_{2k+1}}$ in time $(\alpha_k)^n n^{O(1)}$ and polynomial space, where $\alpha_k = \left(\frac{2k+1}{2} \right)^{\frac{1}{2k+1}} \left( \frac{2k+1}{2k} \right)^{\frac{2k}{2k+1}}$.
\end{restatable}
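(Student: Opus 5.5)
The plan is to replace the deterministic enumeration of Fomin, Heggernes and Kratsch with random sampling over a three-element family $\mathcal S=\{P,Q_1,Q_2\}$ of lists. Identify $V(C_{2k+1})$ with $\mathbb Z_{2k+1}$, with edges $\{i,i+1\}$, and take the three arcs
\[
P=\{1,\dots,2k\},\qquad Q_1=\{0,1,\dots,k\},\qquad Q_2=\{k+1,\dots,2k,0\}.
\]
The vertex $0$ lies in both $Q_1$ and $Q_2$ but not in $P$. Every other vertex lies in $P$ and in exactly one of $Q_1,Q_2$.

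The algorithm repeats the following step $N=\alpha_k^n\cdot n$ times. Independently for each $u\in V(G)$, draw a list $L(u)$: $P$ with probability $1-q$, and $Q_1$ or $Q_2$ with probability $q/2$ each, where $q=\tfrac{2}{2k+1}$. Then solve the resulting list homomorphism instance in polynomial time (see below). Output ``yes'' if some sample is satisfiable. The algorithm has one-sided error and uses polynomial space, since each sample is handled independently.

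\textbf{Success probability.} Suppose a homomorphism $g$ exists. Rotations are automorphisms of $C_{2k+1}$, so composing $g$ with a suitable rotation gives a homomorphism $g'$ with $m:=|g'^{-1}(0)|\le n/(2k+1)$. The sample is consistent with $g'$ when every $u$ has $g'(u)\in L(u)$.
\begin{itemize}
\item A vertex with $g'(u)=0$ is covered with probability $q$, since $0$ lies in both $Q_1$ and $Q_2$.
\item A vertex with $g'(u)\neq 0$ is covered with probability $1-q+q/2=1-q/2$, since its value lies in $P$ and in exactly one $Q_i$.
\end{itemize}
So the probability of success is $q^m(1-q/2)^{n-m}$. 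We have $q<1-q/2$ for $k\ge1$, so this quantity is decreasing in $m$. It is therefore at least its value at $m=n/(2k+1)$, which is
\[
\left(\tfrac{2}{2k+1}\right)^{\frac{n}{2k+1}}\left(\tfrac{2k}{2k+1}\right)^{\frac{2kn}{2k+1}}=\alpha_k^{-n}.
\]
The choice $q=2/(2k+1)$ is exactly the maximizer of $q^m(1-q/2)^{n-m}$ at $m=n/(2k+1)$. With $N=\alpha_k^n\cdot n$ samples, the failure probability is at most $e^{-n}$. The running time is $\alpha_k^n n^{O(1)}$, provided each sample is polynomial-time solvable.

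\textbf{Tractability (expected main obstacle).} It remains to show that list homomorphism to $C_{2k+1}$ with all lists in $\{P,Q_1,Q_2\}$ is solvable in polynomial time. Each list induces a path in $C_{2k+1}$, so restricted to a single list the target is bipartite. The difficulty is the interaction along edges of $G$ whose endpoints carry different lists. For example, an edge between a $Q_1$-vertex and a $Q_2$-vertex may be realised through $\{0,1\}$, through $\{0,2k\}$, or through $\{k,k+1\}$.

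I would first try to exhibit a majority polymorphism of the structure $(C_{2k+1};E,P,Q_1,Q_2)$, which gives bounded relational width in the spirit of the track-layout and persistent-majority argument of Section~\ref{sec:dmn}. The candidate is defined as follows. For each triple, take a list containing all three values; if none exists, take a list containing two of them. Order that arc linearly and return the median. One must check that this is well defined and preserves $E$; the delicate cases are triples in which the value $0$ and values from both halves of the cycle appear together.

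If the majority check fails, the fallback is a direct reduction. The vertices with list $P$ form a path target, handled as interval constraints. Branching on whether $Q_i$-vertices take the value $0$ does not blow up, because forcing value $0$ propagates deterministically to neighbours. Arc consistency followed by a 2-SAT encoding of the ``which half'' choice should then decide the instance.
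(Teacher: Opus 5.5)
\textbf{The tractability step fails.} Your sampling analysis is correct, and it uses the same distribution as the paper: the special vertex is covered with probability $2/(2k+1)$ and every other vertex with probability $2k/(2k+1)$. The problem is the step you flagged as the main obstacle. For your choice of lists it is not merely hard to prove but false for every $k\ge 2$, unless P$=$NP, so neither the majority route nor the 2-SAT fallback can succeed.

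Already the lists $\{P,Q_1\}$ give an NP-complete problem. Take any graph $F$ and replace each edge $uv$ by a path of length $2k-1$. Give the original vertices the list $Q_1=\{0,\dots,k\}$ and the $2k-2$ internal vertices the list $P$. The possible images of the two endpoints are exactly the pairs $(a,b)\in Q_1^2$ joined in $C_{2k+1}$ by a walk of length $2k-1$ whose interior avoids $0$.
\begin{itemize}
\item No such walk is closed, since $2k-1<2k+1$ is odd.
\item If $a=0\neq b$, the walk leaves $0$ through $1$ or $2k$ and then stays in the path $1,2,\dots,2k$. Hence it reaches every $b\in\{1,\dots,k\}$: odd $b$ via $1$, even $b$ via $2k$.
\item If $a,b\neq 0$, the whole walk lies in this bipartite path, so exactly the pairs of opposite parity occur.
\end{itemize}
The gadget therefore defines the loopless graph on $\{0,\dots,k\}$ in which $0$ is adjacent to every other vertex and odd and even vertices are completely joined. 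This graph contains the triangle $\{0,1,2\}$ and maps to $K_3$, so $G$ has a list homomorphism iff $F$ is $3$-colourable. For $k=2$ the defined relation is literally $K_3$ on $\{0,1,2\}$, realised by walks such as $(0,4,3,2)$, $(0,1,2,1)$, $(1,2,1,2)$. One can also check directly that already for $k=2$ no majority polymorphism exists, even a multi-sorted one: the cross relations with $Q_1$ and with $Q_2$ force $f_P(1,2,3)$ into $\{1,3\}$ and into $\{2,4\}$ respectively.

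\textbf{What goes wrong, and the paper's fix.} Your $Q_i$ are long arcs, so a path through $P$-vertices can wrap around the cycle via $0$ and realise both short and long displacements; this is what lets the instance simulate an odd cycle. The paper keeps your $P$ and your probabilities but replaces the two arcs by the two parity classes, each augmented by the special vertex. In your labelling these are $\{0,2,4,\dots,2k\}$ and $\{0,1,3,\dots,2k-1\}$. Each induces a single edge, and every non-special vertex still lies in $P$ and in exactly one of them, so your probability computation is unchanged. For this family the paper constructs an explicit majority polymorphism: the median on same-parity triples, and a clamped reflection of the odd-parity element between the other two otherwise. It then decides each sample exactly with $\sa{2}{3}{I}$. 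For $k=1$ your family coincides with the paper's, which is why the failure only appears from $k=2$ onward.
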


These bounds improve upon previous results for all $k \geq 3$ at the cost of randomization. Numerically, we have $\alpha_3 < 1.365$, $\alpha_4 < 1.313$, $\alpha_5 < 1.274$ and $\alpha_6 < 1.244$. The previously best exponent bases for the corresponding values of $k$ were $\sqrt{2} > 1.414$ for $k\in\{3,4\}$, $c_5 > 1.356$ for $k = 5$, and $c_6 > 1.311$ for $k = 6$. 

\section{Preliminaries}

If $k$ is a positive integer and $M$ is a multiset, then $k \times M$ denotes the multiset obtained from $M$ by increasing the multiplicity of each element by a factor of $k$. We use the double braces notation $\{\{ \cdot \}\}$ for multisets and $[k]$ as shorthand for the set $\{1,\ldots,k\}$. The symbol $\overline{\mathbb{Q}} = \mathbb{Q} \cup \{\infty\}$ denotes the set of rational numbers with positive infinity. Summation over $\overline{\mathbb{Q}}$ is done with the convention that $c + \infty = \infty + c = \infty$ for all $c \in \overline{\mathbb{Q}}$.

\subsection{Graphs}
\label{sec:graphs}

 Unless stated otherwise, all graphs are assumed to be undirected and simple. We use $V(G)$ and $E(G)$ to denote the vertex set and edge set of a graph $G$, respectively. We usually write $uv$ instead of $\{u,v\}$ when referring to an edge of a graph. If $X \subseteq V(G)$, then $G[X]$ denotes the subgraph of $G$ induced by $X$. For a positive integer $q$, a \emph{distance-q coloring} of a graph $G$ with $k$ colors is a mapping from $V(G)$ to $[k]$ such that any two distinct vertices $u,v \in V(G)$ connected by a path of length at most $q$ in $G$ are mapped to different colors (i.e. elements of $[k]$). We use $C_{k}$ and $K_k$ to denote, respectively, the cycle graph and the complete graph on $k$ vertices.

In the case of directed graphs, the set of edges $E(G)$ is replaced with a set of \emph{arcs} $A(G) \subseteq V(G)^2$. In particular, we allow loops but all arcs must be distinct. The \emph{underlying graph} of a directed graph $G$ is the graph $G_u$ with vertex set $V(G_u) = V(G)$, such that $vw \in E(G_u)$ if $v \neq w$ and either $(v,w) \in A(G)$ or $(w,v) \in A(G)$.

\paragraph{Minors and topological minors.} Given a graph $G$ and an edge $uv \in E(G)$, the \emph{contraction} of $uv$ consists in merging $u$ and $v$ into a single vertex $w$ whose neighborhood is the union of the neighborhoods of $u$ and $v$ (minus $u$ and $v$, which are removed from $G$). The \emph{deletion} of a vertex $v \in V(G)$ removes $v$ from $V(G)$ along with all edges that contain $v$. A graph $H$ is a \emph{minor} of $G$ if $H$ can be obtained from $G$ through a sequence of edge contractions, vertex deletions, and edge deletions. A \emph{subdivision} of a graph $H$ is a graph obtained from $H$ by replacing edges with pairwise disjoint paths of positive length. A graph $H$ is a \emph{topological minor} of $G$ if there exists a subgraph of $G$ that is a subdivision of $H$.

\paragraph{Tree decompositions.} A \emph{tree decomposition} of a graph $G$ is a pair $(T,\beta)$ such that $T$ is a tree, $\beta$ is a mapping from $V(T)$ to $2^{V(G)}$, for every edge $uv \in E(G)$ there exists $x_{uv} \in V(T)$ such that $uv \subseteq \beta(x_{uv})$, and for every vertex $v \in V(G)$ the set $\{x \in V(T) \mid v \in \beta(x) \}$ induces a non-empty connected subgraph of $T$. The set $\beta(x)$ is called the \emph{bag} of $x \in V(T)$. The \emph{torso} of $x \in V(T)$ is the subgraph of $G$ obtained by adding to $G[\beta(x)]$ all edges $uv$ such that $u,v \in \beta(x) \cap \beta(y)$ for any $xy \in E(T)$. A tree decomposition $(T,\beta)$ of $G$ is \emph{$k$-rich} if for all $xy \in E(T)$, the set $\beta(x) \cap \beta(y)$ induces a clique of size at most $k$ in $G$. Given a subgraph $H$ of $G$, a tree decomposition $(T_H,\beta_H)$ of $H$ is \emph{contained} in $(T,\beta)$ if for all $x \in V(T_H)$, there exists $y \in V(T)$ such that $\beta_H(x) \subseteq \beta(y)$.

\paragraph{Layerings.} A \emph{layering} of a graph $G$ is an ordered partition $L = (V_0,\ldots,V_t)$ of $V(G)$ into \emph{layers} such that for each edge $uv \in E(G)$, the respective layers of $u$ and $v$ are either equal or consecutive in $L$. For $0 \leq i \leq t$, we write $V_{\leq i} = V_0 \cup \ldots \cup V_i$ and $V_{\geq i} = V_i \cup \ldots \cup V_t$. We denote by $G_{\leq i}$, $G_i$, and $G_{\geq i}$ the subgraphs of $G$ respectively induced by $V_{\leq i}$, $V_i$, and $V_{\geq i}$. If $i > 0$, the \emph{shadow} of a connected subgraph $H$ of $G_{\geq i}$ is the set of vertices in $V_{i-1}$ that have at least one neighbor in $V(H)$. A layering of $G$ is \emph{shadow-complete} if every shadow induces a clique in $G$.

\paragraph{Track layouts.} Given a linear ordering $\prec$ of the vertex set of a graph $G$, we say that two edges $u_1u_2,v_1v_2 \in E(G)$ are $\prec$-\emph{crossing} if $u_1 \prec v_1$ and $v_2 \prec u_2$. A \emph{track layout} 
of $G$ is a proper vertex coloring $\gamma$ of $G$ alongside a linear ordering $\prec$ of $V(G)$ such that no two edges $u_1u_2,v_1v_2 \in E(G)$ with $\gamma(u_1) = \gamma(v_1)$ and $\gamma(u_2) = \gamma(v_2)$ are $\prec$-crossing. The color classes of $\gamma$ are called \emph{tracks}. The \emph{track number} of $G$ is the least integer $k$ such that $G$ has a track layout with $k$ tracks.

\paragraph{Homomorphism problems.} A \emph{homomorphism} from a graph $G$ to a graph $H$ is a mapping $g : V(G) \to V(H)$ such that $uv \in E(G) \Rightarrow g(u)g(v) \in E(H)$. For directed graphs, this condition is replaced with $(u,v) \in A(G) \Rightarrow (g(u),g(v)) \in A(H)$. The problem $\plainghom$ takes as input two graphs $G,H$ and asks whether there exists a homomorphism from $G$ to $H$. If $\mathcal H$ is a family of graphs, we write $\ghom{\mathcal H}$ for the restriction of $\plainghom$ to inputs where $H \in \mathcal{H}$. In the case where $\mathcal H$ contains a single graph $H$, we slightly abuse notation and write $\ghom{H}$ instead of $\ghom{\{ H \}}$. Our main result concerns the following problem, which is an optimization variant of $\ghom{\mathcal H}$ for directed graphs.
\problemStatement{
$\valhom{\mathcal H}$
}
{
\begin{minipage}[t]{\textwidth}
Two directed graphs $G,H$ such that the underlying graph of $H$ belongs to $\mathcal H$\\ A mapping $\eta : A(G) \times A(H) \to \overline{\mathbb{Q}}$
\end{minipage}
}
{
Compute the minimum of $\text{cost}(g) = \sum_{(u,v) \in A(G)}\eta((u,v),(g(u),g(v)))$ over all the homomorphisms $g$ from $G$ to $H$, or return $\infty$ if no homomorphism exists.
}
In all theorems involving $\ghom{\mathcal H}$ or $\valhom{\mathcal H}$, the parameter $n$ is the number of vertices in $G$, the parameter $h$ is the number of vertices in $H$, and $|\eta|$ is the encoding size of the mapping $\eta$.

\subsection{Valued constraint satisfaction problems}

Given a finite domain $D$ and a positive integer $k$, a \emph{cost function} over $D$ of arity $k$ is a mapping $D^k \to \overline{\mathbb{Q}}$. A \emph{valued constraint language} over $D$ is a finite set of cost functions over $D$. A valued constraint language is \emph{conservative} if it contains all $\{0,1\}$-valued unary cost functions over its domain. 

A \emph{crisp constraint language} over $D$ is a finite set of relations over $D$. We write symbols for crisp constraint languages in boldface in order to avoid confusion with their valued counterparts. Given two crisp constraint languages $\mathbf{\Gamma},\mathbf{\Delta}$, we write $\mathbf{\Gamma} \leq \mathbf{\Delta}$ if for every $R \in \mathbf{\Gamma}$ there exists $R' \in \mathbf{\Delta}$ such that $R \subseteq R'$. The \emph{feasibility relation} of a cost function $\phi$ of arity $k$ is the $k$-ary relation $\feas{\phi} = \{ t \in D^k \mid \phi(t) < \infty \}$, and the \emph{feasibility language} of a valued constraint language $\Gamma$ is the crisp constraint language $\feas{\Gamma} = \{ \feas{\phi} \mid \phi \in \Gamma\}$. We call a valued constraint language $\Gamma$ a \emph{valuation} of a crisp constraint language $\mathbf{\Gamma}$ if $\feas{\Gamma} \leq \mathbf{\Gamma}$.

\begin{definition}
Let $\Gamma$ be a valued constraint language over a domain $D$. A \emph{\plainVCSP\ instance} $I$ over $\Gamma$ is a finite set of variables $X = \{x_1,\ldots,x_n\}$ together with an objective function of the form
\[
\Phi_I(x_1,\ldots,x_n) = \sum_{i = 1}^m\phi_i(\mathbf{x_i})
\] 
where each $\phi_i$ is a cost function from $\Gamma$, and each $\mathbf{x_i}$ is a tuple of variables from $X$ matching the arity of $\phi_i$.
\end{definition}

The \emph{cost} of an assignment $\alpha : X \to D$ is $\cost{\alpha} = \Phi_I(\alpha(x_1),\ldots,\alpha(x_n))$. We say that $\alpha$ is \emph{feasible} if $\cost{\alpha}$ is finite. The \emph{optimum} $\opt{I}$ of $I$ is the minimum of $\cost{\alpha}$ over all assignments $\alpha$. The problem \VCSP{$\Gamma$} takes as input a \plainVCSP\ instance over $\Gamma$ and asks to compute its optimum. Throughout the paper, we will assume a naive encoding for \plainVCSP\ instances where cost functions are represented as tables matching tuples to their cost.

For any valued constraint language $\Gamma$, \VCSP{$\Gamma$} is either solvable in polynomial time or NP-hard (by results of Kolmogorov, Krokhin, Rol{\'{\i}}nek~\cite{DBLP:journals/siamcomp/KolmogorovKR17}, Bulatov~\cite{DBLP:conf/focs/Bulatov17}, and Zhuk~\cite{DBLP:journals/jacm/Zhuk20}). For our purposes, we will need an algebraic characterization of the crisp constraint languages $\mathbf{\Gamma}$ such that \emph{every} valuation of $\mathbf{\Gamma}$ gives rise to a \plainVCSP\ solvable in polynomial time. We could not find the exact result we require in the literature; however, a simple characterization can be derived from the work of Thapper and \v{Z}ivn\'y~\cite{DBLP:journals/siamcomp/ThapperZ17} on linear relaxations for \plainVCSP s. The remainder of this section presents the concepts from which this characterization (Theorem~\ref{thm:charmaj}) will be constructed.\\

Let $I$ be a \plainVCSP\ instance with domain $D$ and objective function $\Phi_I(x_1,\ldots,x_n)$. Given two integers $0 < k \leq l$, the linear program $\sa{k}{l}{I}$ is defined as follows. For $j \in \mathbb{N}$, let $\phi_{0}^j$ denote the cost function of arity $j$ over $D$ that maps all tuples to $0$. Add to $\Phi_I$ all possible terms of the form $\phi_0^j(\mathbf{x})$ with $j \in [l]$ and $\mathbf{x}$ a tuple of variables of length $j$. Let $\Phi^{l}_I(x_1,\ldots,x_n) = \sum_{i = 1}^m\phi_i(\mathbf{x_i})$ denote the resulting \plainVCSP\ instance. $\sa{k}{l}{I}$ has one variable $\lambda_i(t)$ for all $i \in [m]$ and assignment $t : \text{Set}(\mathbf{x_i}) \to D$, where $\text{Set}(\mathbf{x_i})$ denotes the set of all variables that appear in the tuple $\mathbf{x_i}$.
\begin{align*}
\min \; &\sum_{i= 1}^m \; \; \sum_{\substack{t : \text{Set}(\mathbf{x_i}) \to D\\ \phi_i(t(\mathbf{x_i})) < \infty}} \lambda_i(t) \phi_i(t(\mathbf{x_i}))\\
\text{ subject to } &\sum_{t : \text{Set}(\mathbf{x_i}) \to D} \lambda_i(t) = 1 & \forall i \in [m]\\
                    &\sum_{\substack{t : \text{Set}(\mathbf{x_i}) \to D\\ t_{\restriction{\text{Set}(\mathbf{x_j})}} = s}} \lambda_i(t) = \lambda_j(s) & \forall i,j \in [m], \, s : \text{Set}(\mathbf{x_j}) \to D : \text{Set}(\mathbf{x_j}) \subseteq \text{Set}(\mathbf{x_i}), |\text{Set}(\mathbf{x_j})| \leq k\\
                    &\lambda_i(t) = 0 & \forall i \in [m], \, t : \text{Set}(\mathbf{x_i}) \to D :  \phi_i(t(\mathbf{x_i})) = \infty\\
                    &\lambda_i(t) \geq 0 & \forall i \in [m], \, t : \text{Set}(\mathbf{x_i}) \to D
\end{align*}
Integer solutions to $\sa{k}{l}{I}$ are in one-to-one correspondence with the feasible assignments of $I$, with identical cost. Higher values for $k$ and $l$ provide a hierarchy of increasingly tight linear relaxations of $I$. $\sa{k}{k}{I}$ is often called the \emph{$k$-th level Sherali-Adams relaxation of $I$}. Note that for fixed values of $k$ and $l$, the encoding size of $\sa{k}{l}{I}$ is polynomial in that of $I$.

\begin{definition}
A valued constraint language $\Gamma$ has valued relational width $(k,l)$ if the optimum of $\sa{k}{l}{I}$ is equal to $\opt{I}$ for all instances $I$ of $\VCSP{\Gamma}$.
\end{definition}

Let $k,q$ be two positive integers. If $f$ is an operation $D^k \to D$ and $t_1,\ldots,t_k \in D^q$, we let $f(t_1,\ldots,t_k)$ be the tuple in $D^q$ obtained by componentwise application of $f$ on $(t_1,\ldots,t_k)$:
\[
f(t_1,\ldots,t_k) = ( f(t_1[1], \ldots, t_k[1]), \ldots, f(t_1[q], \ldots, t_k[q]) ).
\]
The operation $f$ is a \emph{polymorphism} of a relation $R$ if $f(t_1,\ldots,t_k) \in R$ for all $t_1,\ldots,t_k \in R$. By extension, $f$ is a polymorphism of a crisp constraint language $\mathbf{\Gamma}$ if $f$ is a polymorphism of every relation in $\mathbf{\Gamma}$. We denote by $\text{Pol}^k(\mathbf{\Gamma})$ the set of all polymorphisms of $\mathbf{\Gamma}$ of arity $k$. 

A $k$-ary \emph{fractional polymorphism} of a valued constraint language $\Gamma$ is a mapping $\omega : \text{Pol}^k(\feas{\Gamma}) \to \mathbb{Q}$ such that $\omega(f) \geq 0$ for all $f \in \text{Pol}^k(\feas{\Gamma})$, $\sum_{f \in \text{Pol}^k(\feas{\Gamma})} \omega(f) = 1$, and
\[
    \sum_{f \in \text{Pol}^k(\feas{\Gamma})} \omega(f)\phi(f(t_1,\ldots,t_k)) \leq \frac{1}{k} \sum_{i = 1}^k \phi(t_i)
\]
for all $\phi \in \Gamma$ and $t_1,\ldots,t_k \in \feas{\phi}$. The \emph{support} of $\omega$ is the set $\supp{\omega} = \{ f \in \text{Pol}^k(\feas{\Gamma}) \mid \omega(f) > 0 \}$, and $\supp{\Gamma}$ is the set of all operations that belong to the support of at least one fractional polymorphism of $\Gamma$. An operation $f : D^3 \to D$ is \emph{majority} if $f(b,a,a) = f(a,b,a) = f(a,a,b) = a$ for all $a,b \in D$.

\begin{theorem}[Thapper and \v{Z}ivn\'y~\cite{DBLP:journals/siamcomp/ThapperZ17}]
\label{thm:vcspwidth}
Let $\Gamma$ be a conservative valued constraint language. If $\supp{\Gamma}$ contains a majority operation then $\Gamma$ has valued relational width $(2,3)$. Otherwise, $\VCSP{\Gamma}$ is NP-hard.
\end{theorem}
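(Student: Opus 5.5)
The plan is to derive the two halves of the statement from two separate results in the literature. The tractable direction comes from the general Sherali--Adams characterization of Thapper and \v{Z}ivn\'y~\cite{DBLP:journals/siamcomp/ThapperZ17}; the hardness direction comes from the dichotomy for conservative valued languages (Kolmogorov and \v{Z}ivn\'y, as refined in~\cite{DBLP:journals/siamcomp/KolmogorovKR17}). Conservativity is used throughout so that we may restrict attention to subdomains. Every unary $\{0,1\}$-valued function lies in $\Gamma$, so every operation in $\text{Pol}(\feas{\Gamma})$ is conservative, i.e. $f(a_1,\ldots,a_k)\in\{a_1,\ldots,a_k\}$. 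As a consequence, all the relevant structure can be read off two-element subsets $\{a,b\}\subseteq D$.

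\emph{Tractable direction.} Suppose $m\in\supp{\Gamma}$ is a majority operation. It is a polymorphism of $\feas{\Gamma}$, so every feasibility relation is $2$-decomposable, and the crisp part is solved by $(2,3)$-consistency. The valued part needs more. The characterization in~\cite{DBLP:journals/siamcomp/ThapperZ17} states that $\sa{k}{l}{I}$ is exact for all instances once $\supp{\Gamma}$ contains suitable (symmetric or cyclic) operations and $\feas{\Gamma}$ has bounded width. I would show that a majority operation in the support, combined with conservativity, is enough to obtain such fractional polymorphisms. The route is to compose $\omega$ (the fractional polymorphism witnessing $m$) with projections and with itself, in the spirit of the ``generated fractional polymorphisms'' of Kolmogorov--Thapper--\v{Z}ivn\'y, and then symmetrize. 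This should produce fractional polymorphisms of every arity whose supports consist of operations agreeing with a majority on every two-element subset. From there, the standard rounding argument applies: take an optimal solution of $\sa{2}{3}{I}$, fix a large arity $N$, and apply a random operation from the symmetric fractional polymorphism coordinatewise to $N$ samples drawn from the local distributions. Consistency at level $(2,3)$ is exactly what the majority/$2$-decomposability argument needs to make the output feasible. The fractional polymorphism inequality then shows that its cost does not exceed the LP value.

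\emph{Hardness direction.} Suppose no majority lies in $\supp{\Gamma}$. I would invoke the conservative dichotomy. It says $\VCSP{\Gamma}$ is tractable iff, for every pair $\{a,b\}$, $\supp{\Gamma}$ contains binary operations forming a symmetric tournament pair on $\{a,b\}$, or ternary operations behaving as majority/minority (an MJN multimorphism) on $\{a,b\}$, with a global compatibility condition between the pairs. I would then argue that every tractable case in that classification in fact yields a majority in $\supp{\Gamma}$. Two kinds of pair arise: on STP-pairs the operation $\min\{\sqcap(\sqcap(x,y),\sqcup(y,z)),\ldots\}$-type composites give a majority, and on MJN-pairs the majority component is used directly. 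Because operations are conservative, these pairwise choices can be glued into a single ternary operation. Its membership in the support is obtained by composing fractional polymorphisms, since supports are closed under superposition.

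The main obstacle is this gluing step: producing one global majority operation in $\supp{\Gamma}$ out of pair-by-pair behaviour, and checking that supports really are closed under the compositions used. The first thing I would try is the standard ``chain of improvements'' trick. Start from a conservative ternary operation $f$ in the support. Repeatedly compose it with operations that fix more two-element subsets to majority behaviour. Use a potential function, namely the number of pairs $\{a,b\}$ and patterns $(b,a,a)$ on which $f$ is not yet majority, and show that this count strictly decreases at each step.
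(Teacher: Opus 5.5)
The paper gives no proof of this statement. It is imported as a black box from Thapper and \v{Z}ivn\'y, so your reduction to the literature has to stand on its own.

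\textbf{Hardness half.} This part works, and the gluing you flag as the main obstacle can be done explicitly. Take the Kolmogorov--\v{Z}ivn\'y multimorphisms $\langle\sqcap,\sqcup\rangle$ and $\langle Mj_1,Mj_2,Mn_3\rangle$, and put $d(x,y,z)=\sqcup(\sqcup(\sqcap(x,y),\sqcap(y,z)),\sqcap(x,z))$. On a pair where $\langle\sqcap,\sqcup\rangle$ is an STP, $d$ is the median. On any other pair, $\sqcap$ is a non-commutative conservative operation, hence a projection, with $\sqcup$ the complementary projection; so $d$ is a projection there. Hence $Mj_1(d(x,y,z),d(y,z,x),d(z,x,y))$ is majority on every two-element subset. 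It lies in $\supp{\Gamma}$ because $\supp{\Gamma}$ is a clone.

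One correction to your setup. The $\{0,1\}$-valued unary functions have feasibility relation $D$, so they impose nothing on $\text{Pol}(\feas{\Gamma})$, and its operations need not be conservative. What the unary costs force to be conservative are the operations in $\supp{\Gamma}$, via the fractional-polymorphism inequality. That is all the gluing needs.

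\textbf{Tractable half: the genuine gap.} Thapper and \v{Z}ivn\'y's criterion is a bounded width condition on the clone $\supp{\Gamma}$ itself (weak near-unanimity operations in the support). It is not stated as symmetric or cyclic support operations plus bounded width of $\feas{\Gamma}$. The route you sketch toward symmetric operations fails at two points.
\begin{itemize}
\item[(a)] For conservative $\Gamma$, no fractional polymorphism has a support made only of operations that are majority on two-element subsets. Apply a ternary $\omega$ to the unary $\phi$ with $\phi(a)=1$, $\phi(b)=0$ on the triple $(a,a,b)$. The inequality forces weight at least $1/3$ on operations with $f(a,a,b)=b$, and similarly in every arity.
\item[(b)] Symmetric fractional polymorphisms of all arities are the Kolmogorov--Thapper--\v{Z}ivn\'y criterion for exactness of the basic LP, and a majority in the support does not produce them. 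Take $D=\{0,1\}$ and let $\Gamma$ be crisp disequality together with all $\{0,1\}$-valued unary functions. Then $\langle\mathrm{maj},\mathrm{maj},\mathrm{minority}\rangle$ is a multimorphism, so a majority is in $\supp{\Gamma}$. Yet no binary symmetric $f$ preserves $\neq$: applied to $(0,1),(1,0)$ it returns $(f(0,1),f(1,0))$, whose entries coincide. Correspondingly, the basic LP has value $0$ on an odd cycle of $\neq$-constraints, although that instance is infeasible.
\end{itemize}
So your symmetrize-and-round step would prove something false.

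\textbf{The missing idea is support closure, not symmetric rounding.} A majority $m\in\supp{\Gamma}$ gives near-unanimity operations of every arity $\ge 3$ by adding dummy arguments. Hence the bounded width condition holds and Thapper and \v{Z}ivn\'y's main theorem applies. Unpacked, the argument is as follows. Take an optimal solution of $\sa{2}{3}{I}$ in the relative interior of the optimal face. Its positive-weight local assignments are closed under $\supp{\Gamma}$. They form a nonempty $(2,3)$-minimal crisp instance with a majority polymorphism, so by $2$-decomposability that instance has a solution. By complementary slackness, this solution's cost equals the LP optimum.
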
 

We now present a characterization of crisp constraint languages whose valuations always induce tractable \plainVCSP s. This result is essentially a corollary of Theorem~\ref{thm:vcspwidth}. The general idea is that all valuations of a crisp constraint language $\struct{\Gamma}$ induce polynomial-time \plainVCSP s if and only if their union $\Gamma$ does, and $\supp{\Gamma}$ contains a majority operation if and only if $\struct{\Gamma}$ has a family of polymorphisms that satisfies the three conditions listed in the following definition.

\begin{definition}
\label{def:triple}
Let $\struct{\Gamma}$ be a crisp constraint language over a domain $D$. A \emph{persistent majority support} for $\struct{\Gamma}$ is a tuple $F = (f_1,\ldots,f_{3k})$ of operations from $D^3$ to $D$ such that 
\begin{itemize}
    \item $f_1$ is majority,
    \item for all $a_1,a_2,a_3 \in D$ we have $\{ \{ f_1(a_1,a_2,a_3), \ldots, f_{3k}(a_1,a_2,a_3) \} \} = k \times \{ \{ a_1, a_2, a_3 \} \}$, and
    \item for all $R \in \struct{\Gamma}$ and $t_1,t_2,t_3 \in R$ we have $\{ \{ f_1(t_1,t_2,t_3), \ldots, f_{3k}(t_1,t_2,t_3) \} \} = k \times \{ \{ t_1, t_2, t_3 \} \}$.
\end{itemize}
If $k = 1$ then we call $F$ a \emph{persistent majority triple} for $\struct{\Gamma}$.
\end{definition}

It is easy to see that if $\mathbf{\Gamma}, \mathbf{\Delta}$ are two crisp constraint languages such that $\mathbf{\Gamma} \leq \mathbf{\Delta}$, then every persistent majority support for $\mathbf{\Delta}$ is also a persistent majority support for $\mathbf{\Gamma}$. To simplify notation, we will occasionally interpret persistent majority supports as mappings from $D^3$ to $D^{3k}$ and write $F(a_1,a_2,a_3)$ for the tuple $(f_1(a_1,a_2,a_3),\ldots,f_{3k}(a_1,a_2,a_3))$. In a similar fashion, if $t_1,t_2,t_3$ are tuples of the same arity then we write $F(t_1,t_2,t_3)$ for the tuple $(f_1(t_1,t_2,t_3),\ldots,f_{3k}(t_1,t_2,t_3))$.

\begin{theorem}
\label{thm:charmaj}
Let $\mathbf{\Gamma}$ be a crisp constraint language. Either there exists a valuation $\Gamma$ of $\mathbf{\Gamma}$ such that $\VCSP{\Gamma}$ is NP-hard, or $\mathbf{\Gamma}$ has a persistent majority support and every valuation of $\mathbf{\Gamma}$ has valued relational width $(2,3)$.
\end{theorem}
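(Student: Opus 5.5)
We reduce Theorem~\ref{thm:charmaj} to Theorem~\ref{thm:vcspwidth} by building one ``universal'' conservative valuation of $\struct{\Gamma}$. Let $D$ be the domain, and define $\Gamma^*$ to contain, for each $R \in \struct{\Gamma}$ and each tuple $t \in R$, the cost function $\phi_{R,t}$ that is $1$ on $t$, $0$ on $R\setminus\{t\}$, and $\infty$ outside $R$. We also add the $\{0,1\}$-valued unary functions over $D$ (including the all-zero one), so that $\Gamma^*$ is conservative. Its feasibility language is $\struct{\Gamma}$ together with the unary relations, so $\Gamma^*$ is a valuation of $\struct{\Gamma}$. Theorem~\ref{thm:vcspwidth} gives a dichotomy for $\Gamma^*$. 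If $\supp{\Gamma^*}$ contains no majority operation, $\VCSP{\Gamma^*}$ is NP-hard and the first alternative holds. It therefore remains to prove two implications: (a) a majority in $\supp{\Gamma^*}$ yields a persistent majority support for $\struct{\Gamma}$; and (b) a persistent majority support makes every valuation $\Gamma$ of $\struct{\Gamma}$ have valued relational width $(2,3)$.

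For (a), let $\omega$ be a ternary fractional polymorphism of $\Gamma^*$ whose support contains a majority $m$. Because all weights are rational, we can write $\omega(f)=c_f/N$ with integers $c_f$. Fix $R\in\struct{\Gamma}$ and $t_1,t_2,t_3\in R$. Summing the fractional polymorphism inequality over all $\phi_{R,t}$ with $t \in R$ gives equality, since both sides count total mass on $R$. The individual inequalities must therefore all be tight. This means the pushforward distribution of $(f(t_1,t_2,t_3))_{f\sim\omega}$ equals the uniform distribution on $\{\{t_1,t_2,t_3\}\}$. The same argument applied to the conservative unary functions gives the corresponding statement on single elements of $D$. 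Listing each $f$ with multiplicity $c_f$ and putting $m$ first produces a tuple $F$ of length $N$. We then need $N=3k$ and the multiset identities with factor $k$; both follow from the pushforward equality after multiplying through by $N$, with $N$ divisible by $3$ because the uniform mass on three tuples is $1/3$. One needs care when some of $t_1,t_2,t_3$ coincide, but the multiset statement is exactly what the distribution equality gives.

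For (b), let $F=(f_1,\ldots,f_{3k})$ be a persistent majority support and let $\Gamma$ be any valuation of $\struct{\Gamma}$. Let $\Gamma'$ be $\Gamma$ together with all $\{0,1\}$ unaries; it remains a valuation of $\struct{\Gamma}$ extended by unary relations, and $F$ still works by the second condition of Definition~\ref{def:triple}. Define $\omega$ to put weight $1/(3k)$ on each $f_i$, accumulating weight on repeated operations. Each $f_i$ maps $R$-tuples into $R$ by the third condition, and hence maps tuples of any relation contained in $R$ as well, since the outputs are a multiset of the inputs. So each $f_i$ is a polymorphism of $\feas{\Gamma'}$. For $\phi\in\Gamma'$ and $t_1,t_2,t_3\in\feas{\phi}$, the multiset identity gives $\frac{1}{3k}\sum_i\phi(f_i(t_1,t_2,t_3))=\frac13\sum_j\phi(t_j)$, which is the fractional polymorphism condition with equality. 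Hence $f_1$, a majority, lies in $\supp{\Gamma'}$. Theorem~\ref{thm:vcspwidth} then gives width $(2,3)$ for $\Gamma'$, and therefore for $\Gamma\subseteq\Gamma'$, since its instances are instances of $\Gamma'$.

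The main obstacle is (a): extracting exact multiset equalities, rather than inequalities, from a fractional polymorphism. The choice of the indicator-style valuations $\phi_{R,t}$ is what forces tightness. One point to check is that $\sup$-type issues do not arise: $\omega$ is finitely supported with rational weights, so the common denominator $N$ exists.
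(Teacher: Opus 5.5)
Your parts (a) and (b) are correct and follow the paper's argument. Part (b) is the paper's weighting of $f_1,\ldots,f_{3k}$ by multiplicity$/(3k)$. In (a), your summation-and-tightness argument over the indicator functions $\phi_{R,t}$ is an equivalent substitute for the paper's single cost function that is $0$ on $t_j$ and $1$ elsewhere on $R$. One minor point: $3 \mid N$ cannot always be deduced from the equalities (e.g.\ when $|D|=1$). You can simply take the common denominator to be a multiple of $3$, as the paper does by writing $\omega(g_i)=a_i/(3k)$.

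The genuine gap is your claim that $\Gamma^*$ is a valuation of $\struct{\Gamma}$. By definition this requires $\feas{\Gamma^*} \leq \struct{\Gamma}$: every feasibility relation must be contained in some relation of $\struct{\Gamma}$. But each $\{0,1\}$-valued unary cost function has the full unary relation $D$ as its feasibility relation. That relation is contained in no member of $\struct{\Gamma}$ unless $\struct{\Gamma}$ happens to contain it. In the paper's own application (Lemma~\ref{lem:algo}) the language is purely binary, so this does arise. Your part (b) implicitly concedes the point when it calls $\Gamma'$ a valuation of $\struct{\Gamma}$ ``extended by unary relations''. Hence the NP-hardness of $\VCSP{\Gamma^*}$ supplied by Theorem~\ref{thm:vcspwidth} does not yet give the first alternative of the theorem.

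What is missing is a polynomial reduction from $\VCSP{\Gamma^*}$ to $\VCSP{\Gamma_0}$, where $\Gamma_0=\{\phi_{R,t}\}$ is a genuine valuation. The paper does this in the last paragraph of its proof, and it works in your setting as follows:
\begin{itemize}
\item Variables occurring only in the added unary terms are optimized independently, contributing a constant $b$.
\item Every added unary term $\psi(x)$ on any other variable $x$ is absorbed into some term $\phi_{R,t}(\mathbf{x})$ in which $x$ sits at position $p$ of $\mathbf{x}$. Replace $\psi(x)$ by $\sum_{s\in R,\,\psi(s[p])=1}\phi_{R,s}(\mathbf{x})$.
\item This sum equals $\psi(x)$ whenever the assignment maps $\mathbf{x}$ into $R$, which the existing term already enforces.
\end{itemize}
The resulting instance $I'$ is over $\Gamma_0$ and satisfies $\opt{I}=\opt{I'}+b$.
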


\begin{proof}
Suppose that $\struct{\Gamma}$ has a persistent majority support $(f_1,\ldots,f_{3k})$ and let $\Gamma$ be a valuation of $\struct{\Gamma}$. Let $\Gamma_c$ be the conservative valued constraint language obtained from $\Gamma$ by addition of all possible $\{0,1\}$-valued unary cost functions over its domain. The second and third items in Definition~\ref{def:triple} ensure that $f_1,\ldots,f_{3k}$ are polymorphisms of $\feas{\Gamma_c}$. Let $m_f$ denote the multiplicity of $f \in \{f_1,\ldots,f_{3k}\}$ in $\{\{ f_1,\ldots,f_{3k} \}\}$, and $\omega : \text{Pol}^3(\feas{\Gamma_c}) \to \mathbb{Q}$ be the mapping that assigns weight $m_f/(3k)$ to each $f \in \{f_1,\ldots,f_{3k}\}$ and $0$ elsewhere. Then, for each function $\phi \in \Gamma_c$ and $t_1,t_2,t_3 \in \feas{\phi}$ we have
\begin{align*}
\sum_{f \in \supp{\omega}} \omega(f) \phi(f(t_1,t_2,t_3)) &= \frac{1}{3k} \left( \sum_{f \in \{f_1,\ldots,f_{3k}\}} m_f \cdot  \phi(f(t_1,t_2,t_3)) \right)\\
&= \frac{1}{3k} \left( \sum_{t \in \{t_1,t_2,t_3\}} \phi(t) \left( \sum_{\substack{f \in \{f_1,\ldots,f_{3k}\}\\ f(t_1,t_2,t_3) = t}} m_f  \right) \right)\\
 &= \frac{1}{3} \left( \sum_{t \in \{t_1,t_2,t_3\}} \phi(t) \right)
\end{align*}
because $\{ \{ f_1(t_1,t_2,t_3), f_2(t_1,t_2,t_3), f_3(t_1,t_2,t_3) \} \} = k \times \{ \{ t_1, t_2, t_3 \} \}$. (This is due to the third item in Definition~\ref{def:triple} if $\feas{\phi}$ is a subset of some relation in $\struct{\Gamma}$, and to the second item otherwise.) Therefore, $\omega$ is a fractional polymorphism of $\Gamma_c$. In addition, $f_1$ is majority so $\supp{\Gamma_c}$ contains a majority operation. By Theorem~\ref{thm:vcspwidth}, we deduce that $\Gamma_c$ has valued relational width $(2,3)$. Since $\Gamma \subseteq \Gamma_c$, $\Gamma$ has valued relational width $(2,3)$ as well.

Conversely, suppose that $\struct{\Gamma}$ does not have a persistent majority support. Let $\Gamma$ be the union of all valuations of $\struct{\Gamma}$ with values in $\{0,1,\infty\}$ and $\Gamma_c$ be the valued constraint language obtained from $\Gamma$ by addition of all possible $\{0,1\}$-valued unary cost functions over its domain. We will prove that $\VCSP{\Gamma_c}$ is NP-hard using Theorem~\ref{thm:vcspwidth} and then deduce that $\VCSP{\Gamma}$ is NP-hard by reduction from $\VCSP{\Gamma_c}$. As $\Gamma$ is a valuation of $\struct{\Gamma}$, the theorem will follow.

For the first step, suppose for the sake of contradiction that $\supp{\Gamma_c}$ contains a majority operation $g_1$. Let $\omega$ be a fractional polymorphism of $\Gamma_c$ with support $\supp{\omega} = \{g_1,\ldots,g_q\}$ and $k, a_1,\ldots,a_q$ be positive integers such that $\omega(g_i) = a_i/(3k)$ for all $1 \leq i \leq q$. Let $F=(f_1,\ldots,f_{3k})$ be the tuple of operations
\[
F = (\underbrace{g_1,\ldots,g_1}_{a_1 \text{ times }},\underbrace{g_2,\ldots,g_2}_{a_2 \text{ times }},\ldots,\underbrace{g_q,\ldots,g_q}_{a_q \text{ times }}).
\]
The operation $f_1 = g_1$ is majority but $F$ cannot be a persistent majority support for $\struct{\Gamma}$. If $F$ violates the third item of Definition~\ref{def:triple}, then there exist $R \in \struct{\Gamma}$, $t_1,t_2,t_3 \in R$ and $j \in \{1,2,3\}$ such that there are fewer occurrences of $t_j$ in $\{\{ f_1(t_1,t_2,t_3), \ldots, f_{3k}(t_1,t_2,t_3) \} \}$ than in $k \times \{ \{ t_1, t_2, t_3 \} \}$. Consider the function $\phi \in \Gamma_c$ such that $\feas{\phi} = R$, $\phi(t_j) = 0$, and $\phi(t) = 1$ for all tuples $t \in \feas{\phi} \setminus \{t_j\}$. Then, we have
\begin{align*}
\sum_{g \in \supp{\omega}} \omega(g) \phi(g(t_1,t_2,t_3)) = \sum_{\substack{i \in [q]\\ g_i(t_1,t_2,t_3) \neq t_j}} \frac{a_i}{3k}
\; &= \; \frac{1}{3k} | \{i \in [3k] \mid f_i(t_1,t_2,t_3) \neq t_j\} |\\
&> \frac{1}{3k} \left(  k \cdot  | \{i \in [3] \mid  t_i \neq t_j\} | \right) = \frac{1}{3} \sum_{i = 1}^3 \phi(t_i)
\end{align*}
which is impossible because $\omega$ is a fractional polymorphism of $\Gamma_c$. If the second item is violated instead, then the same argument using the additional unary cost functions in $\Gamma_c$ instead yields a contradiction as well. We conclude that $\supp{\Gamma_c}$ does not contain a majority operation. By Theorem~\ref{thm:vcspwidth}, $\VCSP{\Gamma_c}$ is NP-hard.

We now sketch a reduction from $\VCSP{\Gamma_c}$ to $\VCSP{\Gamma}$. Let $I$ be an instance of $\VCSP{\Gamma_c}$. Remove all variables that only appear in unary terms in the objective function $\Phi$ and let $b \in \overline{\mathbb{Q}}$ denote the minimum cost of an assignment to these variables. For any other variable $x_i$, there exists a term $\phi_j(\mathbf{x_j})$ in $\Phi$ such that $x_i$ appears in the tuple $\mathbf{x_j}$ and $\phi_j \in \Gamma$. Replace each unary term $\phi_k(x_i)$ such that $\phi_k \notin \Gamma$ with a new term $\phi_j^k(\mathbf{x_j})$, where $\feas{\phi_j^k} = \feas{\phi_j}$ and $\phi_j^k(a(\mathbf{x_j})) = \phi_k(a(x_i))$ for all assignments $a : \text{Set}(\mathbf{x_j}) \to D$ such that $a(\mathbf{x_j}) \in \feas{\phi_j^k}$. The resulting $\plainVCSP$ instance $I'$ is over $\Gamma$ and its optimum satisfies $\opt{I} = \opt{I'}+b$, where $b$ can be computed in linear time.
\end{proof}

We remark without proof that there exists a universal constant $c$ such that a crisp constraint language has a persistent majority support if and only if it has one composed of at most $c$ operations. We suspect that three operations always suffice, although we could not prove this particular bound. Thus, the problem of deciding whether a crisp constraint language satisfies the criterion of Theorem~\ref{thm:charmaj} is in NP. We finish this section with a straightforward lemma that will allow us to ignore this problem for the rest of the paper.

\begin{lemma}
\label{lem:unif}
There exists a polynomial-time algorithm that takes as input a valued constraint language $\Gamma$ and a \plainVCSP\ instance $I$ over $\Gamma$, and either outputs an assignment of minimum cost to $I$ or concludes that $\feas{\Gamma}$ does not have a persistent majority support.
\end{lemma}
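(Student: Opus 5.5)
The algorithm solves $\sa{2}{3}{I}$ and uses self-reduction to extract an integral assignment, checking at each step that the LP behaves as it must when $\feas{\Gamma}$ has a persistent majority support. If any check fails, it reports that no such support exists, which is justified by Theorem~\ref{thm:charmaj}: if $\feas{\Gamma}$ had a persistent majority support, then every valuation of $\feas{\Gamma}$ would have valued relational width $(2,3)$. The point is that this never requires deciding whether a support exists, only certifying the consequence on the instances we generate.

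Let $D$ be the domain and $X=\{x_1,\ldots,x_n\}$ the variables of $I$. Let $\Gamma'$ be $\Gamma$ together with all unary cost functions $\psi_a$ for $a\in D$, where $\psi_a(a)=0$ and $\psi_a(b)=\infty$ for $b\neq a$. Then $\feas{\psi_a}=\{a\}$. Any persistent majority support for $\feas{\Gamma}$ is also one for $\feas{\Gamma'}$: for a singleton relation $\{a\}$ and $t_1=t_2=t_3=a$, the second item of Definition~\ref{def:triple} gives $\{\{f_i(a,a,a)\}\}=3k\times\{\{a\}\}$. Since $\Gamma'$ is a valuation of $\feas{\Gamma'}$, Theorem~\ref{thm:charmaj} shows that under the hypothesis every instance $J$ over $\Gamma'$ satisfies $\opt{J}=\text{opt}(\sa{2}{3}{J})$.

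\textbf{Step 1.} Compute $v_0=\text{opt}(\sa{2}{3}{I})$ by linear programming in polynomial time; the LP has polynomial size for fixed $(2,3)$. Infeasibility is encoded as value $\infty$, and the relaxation is always a lower bound. If $v_0=\infty$, then $I$ has no feasible assignment, and we output any assignment, which then has minimum cost $\infty$.

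\textbf{Step 2.} Otherwise, for $i=1,\ldots,n$, take the current instance $I_{i-1}$ and form $I_{i-1}+\psi_a(x_i)$ for each $a\in D$. Compute the LP optimum of each, and choose an $a$ whose value equals $v_{i-1}$; that instance becomes $I_i$ with $v_i=v_{i-1}$. If no $a$ achieves $v_{i-1}$, halt and report that there is no persistent majority support. This is correct because, under the hypothesis, $v_{i-1}=\opt{I_{i-1}}$, some optimal assignment of $I_{i-1}$ gives $x_i$ some value $a$, and that $a$ then yields LP value $\opt{I_{i-1}+\psi_a(x_i)}=v_{i-1}$.

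\textbf{Step 3.} After $n$ steps the resulting assignment $\alpha$ is fully determined. Compute $\cost{\alpha}$ directly with respect to $I$. If it equals $v_0$, output $\alpha$; it is optimal since $v_0\le\opt{I}$. Otherwise, report that there is no support. When the hypothesis holds, $\cost{\alpha}=\opt{I_n}=v_n=v_0$, so this check always passes.

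The running time is $O(n|D|)$ LP solves of polynomial size. The main subtlety is the exact rational LP solving with $\infty$ costs, handled by zeroing infeasible $\lambda$'s as in the definition. The other is arguing that the added unary crisp functions preserve the support, which we do through the second item of Definition~\ref{def:triple} as above.
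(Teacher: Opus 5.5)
Your proposal is correct and follows the paper's proof: it uses the same self-reduction on $\sa{2}{3}{\cdot}$ with crisp unary constants, and reports failure when no value preserves the LP optimum, which Theorem~\ref{thm:charmaj} justifies for the extended language. Your final cost check in Step~3 is redundant, since once every variable is pinned the marginalization constraints force the LP value to equal the assignment's cost, but it is harmless. Your explicit argument that a support for $\feas{\Gamma}$ also handles the singleton relations supplies the direction of the ``if and only if'' that the paper states without proof.
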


\begin{proof}
We use a standard self-reduction argument. Let $D$ be the domain of $\Gamma$. For each $a \in D$, let $\phi_a$ be the unary cost function over $D$ such that $\phi_a(a) = 0$ and $\phi_a(b) = \infty$ for all $b \in D \setminus \{a\}$. Let $\Gamma^* = \Gamma \cup \{ \phi_a \mid a \in D \}$. We say that a variable $x$ in a \plainVCSP\ instance over $\feas{\Gamma^*}$ is \emph{fixed} if the objective function contains a term $\phi_a(x)$ for some $a \in D$.

The algorithm starts by solving $\sa{2}{3}{I}$. If $\sa{2}{3}{I}$ is infeasible, then all assignments to the variables in $I$ have infinite cost. In this case, we return an arbitrary assignment. Otherwise, let $s$ denote the optimum of $\sa{2}{3}{I}$. Repeat the following procedure until all variables in $I$ are fixed. Pick a variable $x_i$ that is not fixed. For each possible $a \in D$, construct a \plainVCSP\ instance $I_{x_i \gets a}$ by adding to the objective function of $I$ the term $\phi_{a}(x_i)$. Solve all linear programs $\sa{2}{3}{I_{x_i \gets a}}$. If all optimum values are strictly greater than $s$, return that $\feas{\Gamma}$ does not have a persistent majority support. Otherwise, pick $a$ such that the optimum of $\sa{2}{3}{I_{x_i \gets a}}$ is exactly $s$ and set $I \vcentcolon= I_{x_i \gets a}$. If all variables in $I$ are fixed, then return the assignment that maps each variable $x_i$ to the (unique) $a \in D$ such that the objective function of $I$ contains a term $\phi_a(x_i)$.

Whenever the algorithm above returns an assignment, its cost matches the optimum $s$ of $\sa{2}{3}{I}$ (or positive infinity if $\sa{2}{3}{I}$ is not feasible). Since $s \leq \opt{I}$, this assignment must be one of minimum cost. On the other hand, if the algorithm returns that $\feas{\Gamma}$ does not have a persistent majority support, then there exists a \plainVCSP\ instance $I^*$ over $\Gamma^*$ such that $\sa{2}{3}{I^*}$ is feasible with optimum value $s$ but no variable assignment has cost $s$. This implies that $\Gamma^*$ does not have valued relational width $(2,3)$. Then, by Theorem~\ref{thm:charmaj}, $\feas{\Gamma^*}$ does not have a persistent majority support. Since $\feas{\Gamma^*}$ has a persistent majority support if and only if $\feas{\Gamma}$ does, the output of the algorithm is correct.
\end{proof}

\section{Persistent majority colorings}
\label{sec:dmn}

Consider an input $(G,H,\eta)$ to $\valhom{\mathcal H}$ and two fixed mappings $\gamma_G : V(G) \to [k]$, $\gamma_H : V(H) \to [k]$. The problem of computing the minimum cost of a homomorphism $g$ from $G$ to $H$ such that $\gamma_H(g(v)) = \gamma_G(v)$ for all $v \in V(G)$ has a direct formulation as a \plainVCSP\ instance with one term per arc of $G$. Its constraint language may or may not be tractable; in general, this depends on the mappings and the cost function $\eta$. In this section, we describe a structural property of $\gamma_H$ (based on Theorem~\ref{thm:charmaj}) that guarantees tractability independently of $\gamma_G$ and $\eta$.

Let $G$ be a graph and $\gamma : V(G) \to [k]$ be a proper vertex coloring of $G$. For each pair of colors $(i,j) \in [k]^2$, define $R_{ij}$ as the binary relation $R_{ij} = \{ (u,v) \in V(G)^2 \mid (\gamma(u) = i) \land (\gamma(v) = j) \land (uv \in E(G)) \}$. Let $\struct{\Gamma_{G,\gamma}}$ denote the crisp constraint language $\{ R_{ij} \mid (i,j) \in [k]^2\}$. 

\begin{definition}
\label{def:col}
    A \emph{persistent majority coloring} of a graph $G$ is a proper vertex coloring $\gamma$ of $G$ such that $\struct{\Gamma_{G,\gamma}}$ has a persistent majority triple.
\end{definition}

The \emph{persistent majority number} $\majcover{G}$ of a graph $G$ is the least integer $k$ such that $G$ has a persistent majority coloring with $k$ colors.

\begin{lemma}
\label{lem:algo}
If ${\mathcal H}$ is a family of graphs of bounded persistent majority number, then $\valhom{\mathcal H}$ can be solved in time $2^{O(n+h)}|\eta|^{O(1)}$ and polynomial space. In addition, if there exists a polynomial-time algorithm that takes as input a graph $H \in \mathcal H$ and outputs a persistent majority coloring of $H$ with $O(1)$ colors, then $\valhom{\mathcal H}$ can be solved in time $2^{O(n)} (h + |\eta|)^{O(1)}$ and polynomial space.
\end{lemma}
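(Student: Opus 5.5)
The strategy is to enumerate colorings of $G$ and solve each color-preserving subproblem as a tractable \plainVCSP\ via Lemma~\ref{lem:unif}. Let $k$ be the bound on $\majcover{H}$ over $H \in \mathcal H$, and let $\gamma_H : V(H) \to [k]$ be a persistent majority coloring of the underlying graph $H_u$. Every homomorphism $g$ from $G$ to $H$ is color-preserving with respect to the coloring $\gamma_G = \gamma_H \circ g$ of $V(G)$. Therefore the optimum of the $\valhom{\mathcal H}$ instance equals the minimum, over all $k^n = 2^{O(n)}$ mappings $\gamma_G : V(G) \to [k]$, of the optimum of the color-preserving subproblem. We enumerate these mappings one at a time, keeping only the current best value, which uses polynomial space.

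For a fixed $\gamma_G$, I would formulate the subproblem as a \plainVCSP\ instance $I$ with one variable per vertex of $G$ and domain $V(H)$. Each arc $a = (u,v) \in A(G)$ contributes a binary term $\phi_a(x_u,x_v)$, where $\phi_a(p,q) = \eta(a,(p,q))$ if $(p,q) \in A(H)$, $\gamma_H(p) = \gamma_G(u)$ and $\gamma_H(q) = \gamma_G(v)$, and $\phi_a(p,q) = \infty$ otherwise. Loops of $G$ need some care. A loop at $u$ forces a loop at $g(u)$, but $\gamma_H$ only constrains edges of $H_u$, which has no loops. I would handle a loop by a unary term, or by a binary term on $(x_u,x_u)$ whose feasibility relation lies inside the diagonal restricted to one color class. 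Such a relation is preserved by every persistent majority triple by the second item of Definition~\ref{def:triple}. One can also simply add the diagonal relations $\{(p,p) : \gamma_H(p)=i\}$ to $\struct{\Gamma_{H_u,\gamma_H}}$ and check that the triple still works.

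For a non-loop arc, the feasibility relation consists of pairs $(p,q)$ with $pq \in E(H_u)$, $\gamma_H(p) = i$ and $\gamma_H(q) = j$, so it is contained in $R_{ij}$. Hence the language $\Gamma$ of $I$ satisfies $\feas{\Gamma} \leq \struct{\Gamma_{H_u,\gamma_H}}$ (possibly extended as above), and this language has a persistent majority triple. By the remark after Definition~\ref{def:triple}, $\feas{\Gamma}$ therefore has a persistent majority support. Lemma~\ref{lem:unif} then returns an optimal assignment in time polynomial in $|I| = O(n^2 h^2) + |\eta|$. This is the step that needs the most care. The difficulty is not the tractability itself but making the loop case and the color-restriction fit exactly inside the relations covered by the triple. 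A further point is that the language must be given to Lemma~\ref{lem:unif} as the finite set of cost functions actually occurring in $I$, which is fine.

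It remains to obtain $\gamma_H$.

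In the second statement of the lemma, $\gamma_H$ is computed by the assumed polynomial-time algorithm. This gives total time $2^{O(n)}(h+|\eta|)^{O(1)}$.

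In the first statement, we find $\gamma_H$ by brute force. We enumerate all $k^h = 2^{O(h)}$ colorings of $V(H)$, and for each one enumerate all candidate triples of ternary operations on $V(H)$. Enumerating triples is too many ($h^{3h^3}$), so I would instead avoid verifying: for each candidate coloring, run the full algorithm above. By Lemma~\ref{lem:unif}, each subproblem run either certifies optimality or reports that no support exists, and a report is never made when the coloring is genuinely persistent majority. Taking the minimum over the colorings where no failure was reported is correct, because any returned assignment is an actual color-preserving homomorphism whose cost is the optimum for that coloring, and the true persistent majority coloring is among those tried. This gives $2^{O(n+h)}|\eta|^{O(1)}$ time with polynomial space.
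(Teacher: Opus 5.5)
Your proposal is correct and follows essentially the paper's proof. You enumerate pairs $(\gamma_G,\gamma_H)$, solve each color-compliant subproblem with Lemma~\ref{lem:unif}, and trust a coloring of $H$ only when no call reports failure. The paper differs only in incrementing $k$ from $1$ rather than hard-coding the family bound, which keeps the algorithm uniform and stops at some $k \leq \majcover{H_u}$.

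One small correction: the special case is not loops of $G$ but arcs $(u,v)$ of $G$ with $\gamma_G(u)=\gamma_G(v)$. Since $\gamma_H$ is proper, such an arc (loop or not) can only be mapped to loops of $H$, so its feasibility relation lies in the diagonal rather than in $R_{ii}=\emptyset$. Your diagonal extension covers this (the paper uses $\struct{\Gamma_{H_u,\gamma_H}}\cup\{=\}$) once you split cases on the colors instead of on whether the arc is a loop.
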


\begin{proof}
Let $(G,H,\eta)$ be an instance of $\valhom{\mathcal H}$ and $H_u$ be the underlying graph of $H$. Let $V(G) = (v_i)_{i \in [n]}$. We suppose (without loss of generality) that each vertex $v_i \in V(G)$ is incident to at least one arc. Starting from $k = 1$, enumerate all pairs $(\gamma_G,\gamma_H)$ such that $\gamma_G : V(G) \to [k]$ is any mapping and $\gamma_H : V(H) \to [k]$ is a proper vertex coloring of $H_u$. Call a homomorphism $g$ from $G$ to $H$ \emph{compliant} with a pair $(\gamma_G,\gamma_H)$ if $\gamma_H(g(v_i)) = \gamma_G(v_i)$ for all $v_i \in V(G)$.
For each pair $(\gamma_G,\gamma_H)$, construct a \plainVCSP\ instance as follows. The domain is $V(H)$ and the set of variables $X$ contains one variable $x_i$ for each $v_i \in V(G)$. The objective function is
$\Phi_{(\gamma_G,\gamma_H)}(x_1,\ldots,x_n) = \sum_{(v_i,v_j) \in A(G)} \phi_{ij}(x_i,x_j)$,
where 
\[
\phi_{ij}(u,w) =
\begin{cases}
\eta((v_i,v_j),(u,w)) &\text{ if } (u,w) \in A(H) \text{ and } (\gamma_G(v_i),\gamma_G(v_j)) = (\gamma_H(u),\gamma_H(w))\\
\infty &\text{ otherwise.}
\end{cases}
\]
The feasible assignments of $\Phi_{(\gamma_G,\gamma_H)}$ are in one-to-one correspondence with the homomorphisms from $G$ to $H$ of finite cost that are compliant with $(\gamma_G,\gamma_H)$. Now, run the algorithm $A$ of Lemma~\ref{lem:unif} on $\Phi_{(\gamma_G,\gamma_H)}$ for each pair $(\gamma_G,\gamma_H)$. If there exists a coloring $\gamma_H^* : V(H) \to [k]$ such that $A$ returns an assignment for all pairs where $\gamma_H = \gamma_H^*$, then stop and return the minimum cost over all these assignments. Otherwise, increment $k$ and repeat the procedure above. This completes the description of the algorithm.

For each homomorphism $g$ from $G$ to $H$ and coloring $\gamma_H : V(H) \to [k]$, there exists a mapping $\gamma_G : V(G) \to [k]$ such that $g$ is compliant with $(\gamma_G,\gamma_H)$. (Simply define $\gamma_G(v_i)$ as $\gamma_H(g(v_i))$.) In addition, any assignment returned by $A$ has optimum cost for the input \plainVCSP\ instance. Thus, when the algorithm above stops it returns the optimum of $(G,H,\eta)$.

For termination, let $\gamma_H^*$ be a persistent majority coloring of $H_u$ with $\majcover{H_u}$ colors. For all pairs $(\gamma_G,\gamma_H^*)$, the \plainVCSP\ instance $\Phi_{(\gamma_G,\gamma_H^*)}$ is over a language that is a valuation of $\struct{\Gamma_{H_u,\gamma_H^*}} \cup \{ = \}$. (If $\gamma_G(v_i) \neq \gamma_G(v_j)$, then $\feas{\phi_{ij}}$ is composed only of arcs $(u,w) \in A(H)$ with $u \neq w$, $\gamma_H^*(u) = \gamma_G(v_i)$, and $\gamma_H^*(w) = \gamma_G(v_j)$. Therefore, $\feas{\phi_{ij}}$ is a subset of $R_{\gamma_G(v_i)\gamma_G(v_j)} \in \struct{\Gamma_{H_u,\gamma_H^*}}$. If $\gamma_G(v_i) = \gamma_G(v_j)$ instead, then each tuple in $\feas{\phi_{ij}}$ is a loop in $H$ so $\feas{\phi_{ij}}$ is a subset of the equality relation over $V(H)$.) If $\struct{\Gamma_{H_u,\gamma_H^*}}$ has a persistent majority triple then so does $\struct{\Gamma_{H_u,\gamma_H^*}} \cup \{ = \}$, hence by Theorem~\ref{thm:charmaj} the algorithm $A$ will return an assignment for all these instances. This implies that the algorithm will stop at a certain value $k \leq \majcover{H_u}$. Each call to $A$ takes time polynomial in $n, h, |\eta|$ (including the time to construct the \plainVCSP\ instance), and for any $k > 0$ there exist $O(k^{n+h})$ possible pairs of mappings $(\gamma_G,\gamma_H)$. The complexity of the whole algorithm is therefore in $O(\majcover{H_u}^{n+h} \cdot \text{poly}(n, h, |\eta|)) = 2^{O(n+h)} |\eta|^{O(1)}$, or $2^{O(n)} (h + |\eta|)^{O(1)}$ if an appropriate persistent majority coloring of $H_u$ can be computed in polynomial time.
\end{proof}

We will use Lemma~\ref{lem:algo} to identify new families of graphs $\mathcal H$ for which $\valhom{\mathcal H}$ (and hence $\ghom{\mathcal H}$) can be solved in single-exponential time. We start by presenting a few lemmas that establish elementary facts about persistent majority numbers and relate them to more standard graph measures.

\begin{lemma}
    \label{lem:wellbehaved}
    Let $G,H$ be graphs and $k$ be a positive integer. The following statements are true:
    \begin{itemize}
        \item If $H$ is a subgraph of $G$, then $\majcover{H} \leq \majcover{G}$.
        \item If $H$ can be obtained from $G$ by deleting at most $k$ vertices, then $\majcover{G} \leq \majcover{H} + k$.
    \end{itemize}
\end{lemma}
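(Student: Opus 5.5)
For the first item, I would take a persistent majority coloring $\gamma$ of $G$ with $\majcover{G}$ colors together with a persistent majority triple $F=(f_1,f_2,f_3)$ for $\struct{\Gamma_{G,\gamma}}$, and restrict both to $V(H)$. The restriction $\gamma_{\restriction V(H)}$ is a proper coloring of $H$ with at most $\majcover{G}$ colors. The point to check is that each $f_i$ restricts to an operation on $V(H)$. This follows from the second item of Definition~\ref{def:triple}: for $a_1,a_2,a_3 \in V(H)$ the multiset $\{\{f_1(a_1,a_2,a_3),f_2(a_1,a_2,a_3),f_3(a_1,a_2,a_3)\}\}$ equals $\{\{a_1,a_2,a_3\}\}$, so every output lies in $V(H)$. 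The restricted $f_1$ is still majority, and the second item is inherited directly. For the third item, every relation $R^H_{ij}$ of $\struct{\Gamma_{H,\gamma_{\restriction V(H)}}}$ is contained in the corresponding $R^G_{ij}$, because $E(H)\subseteq E(G)$ and the colors agree. Hence the multiset condition on triples of tuples in $R^H_{ij}$ is an instance of the condition for $R^G_{ij}$.

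For the second item, let $S$ be the set of at most $k$ deleted vertices, so $H=G - S$. I would take a persistent majority coloring $\gamma$ of $H$ with $\majcover{H}$ colors and a triple $F$ for it. I would extend $\gamma$ to $G$ by giving each $s\in S$ its own fresh color, which yields a proper coloring with at most $\majcover{H}+k$ colors. Next I would define $F'=(f'_1,f'_2,f'_3)$ on $V(G)^3$ as follows. On triples in $V(H)^3$, set $F'=F$. On every other triple, let $F'$ be a fixed permutation of its arguments: output $(a,a,b)$ if the multiset is $\{\{a,a,b\}\}$, output $(a,a,a)$ if all three entries are equal, and output $(a_1,a_2,a_3)$ if the entries are pairwise distinct. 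With this definition $f'_1$ is majority everywhere, and the second item of Definition~\ref{def:triple} holds everywhere by construction.

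The main step, though it is not deep, is verifying the third item for the relations $R_{ij}$ of the extended coloring. If neither $i$ nor $j$ is a fresh color, then every tuple of $R_{ij}$ lies in $V(H)^2$ and is an edge of $H$, because edges of $G$ between vertices of $V(H)$ are edges of $H$. In that case $F'=F$ componentwise and the condition is inherited. If, say, $i$ is the fresh color of $s\in S$, then every tuple of $R_{ij}$ has first coordinate $s$. So for $t_\ell=(s,b_\ell)$ with $\ell=1,2,3$, the first component of $F'(t_1,t_2,t_3)$ is $(s,s,s)$ by majority, and the second component is $F'(b_1,b_2,b_3)$. The second item says that the latter is a rearrangement of $b_1,b_2,b_3$, so the multiset of resulting tuples is $\{\{(s,b_1),(s,b_2),(s,b_3)\}\}$, as required. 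The case where $j$ is fresh is symmetric. The only subtlety is to make sure that the multiset condition holds simultaneously for both coordinates, and this is guaranteed because one coordinate is constant.
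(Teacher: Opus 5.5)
Your proof is correct and follows the paper's argument. For subgraphs you restrict the coloring and the triple; for vertex deletion you give each deleted vertex a fresh color, extend $F$ by a permutation of the arguments outside $V(H)^3$, and use the fact that every new relation has a constant coordinate. Your explicit check that the operations restrict to $V(H)$, via the multiset condition, fills in a domain detail that the paper passes over with $\struct{\Gamma_{H,\gamma}} \leq \struct{\Gamma_{G,\gamma}}$, and your choice of $F'$ off $V(H)^3$ differs from the paper's only cosmetically.
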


\begin{proof}
    If $H$ is a subgraph of $G$ and $\gamma$ is a vertex coloring of $G$, then $\struct{\Gamma_{H,\gamma}} \leq \struct{\Gamma_{G,\gamma}}$. Therefore, any persistent majority triple for $\struct{\Gamma_{G,\gamma}}$ is also a persistent majority triple for $\struct{\Gamma_{H,\gamma}}$. This establishes the first item of the lemma. For the second item, suppose that $H$ can be obtained from $G$ by deleting at most $k$ vertices. Let $\gamma : V(H) \to [\majcover{H}]$ be a persistent majority coloring of $H$ and $F=(f_1,f_2,f_3)$ be a persistent majority triple for $\struct{\Gamma_{H,\gamma}}$. Extend $\gamma$ to a coloring of $G$ by giving each deleted vertex $v \in V(G) \setminus V(H)$ its own unique color. Let $F'$ be a triple of operations $V(G)^3 \to V(G)$ such that 
    \[
    F'(u,v,w) = 
    \begin{cases}
        F(u,v,w) &\text{ if } (u,v,w) \in V(H)^3\\
        (u,v,w) &\text{ if } (u,v,w) \notin V(H)^3 \text{ and } v \neq w\\
        (w,v,u) &\text{ if } (u,v,w) \notin V(H)^3 \text{ and } v = w\\
    \end{cases}
    \]
    It is straightforward to check that the first two conditions of Definition~\ref{def:triple} are satisfied. In addition, any relation $R_{ij} \in \struct{\Gamma_{G,\gamma}} \setminus \struct{\Gamma_{H,\gamma}}$
    is such that all tuples $t \in R_{ij}$ coincide on either their first or second entry. For relations of this kind, any triple of operations that satisfies the second condition of Definition~\ref{def:triple} also satisfies the third. The extended coloring $\gamma$ uses $\majcover{H} + k$ colors in total. 
\end{proof}

\begin{lemma}
\label{lem:stack}
The persistent majority number of a graph is at most its track number.
\end{lemma}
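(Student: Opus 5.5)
The idea is to take a track layout $(\gamma,\prec)$ of $G$ with $k$ tracks (where $k$ is the track number) and show that $\gamma$ itself is a persistent majority coloring. The witness will be the triple of order statistics with respect to $\prec$. For $a_1,a_2,a_3\in V(G)$, let $f_1(a_1,a_2,a_3)$ be the $\prec$-median, $f_2(a_1,a_2,a_3)$ the $\prec$-minimum, and $f_3(a_1,a_2,a_3)$ the $\prec$-maximum of $a_1,a_2,a_3$. By definition $\gamma$ is a proper coloring, so the only thing to check is that $F=(f_1,f_2,f_3)$ is a persistent majority triple for $\struct{\Gamma_{G,\gamma}}$.

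The first two conditions of Definition~\ref{def:triple} are immediate. The median of a triple in which some value occurs at least twice is that value, so $f_1$ is majority. Sorting a triple, with repetitions kept, gives back the same multiset, so $\{\{f_1(a_1,a_2,a_3),f_2(a_1,a_2,a_3),f_3(a_1,a_2,a_3)\}\}=\{\{a_1,a_2,a_3\}\}$.

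The substance lies in the third condition. Consider a relation $R_{ij}$. If $i=j$ then $R_{ij}$ is empty because $\gamma$ is proper, so assume $i\neq j$. Every tuple $(u_1,u_2)\in R_{ij}$ is an edge from track $i$ to track $j$. The key step is to show that any two tuples $(u_1,u_2),(v_1,v_2)\in R_{ij}$ are comparable in the componentwise (product) order induced by $\prec$.
\begin{itemize}
\item If $u_1=v_1$ or $u_2=v_2$, comparability is automatic, since $\prec$ is linear on the other coordinate.
\item Otherwise, suppose without loss of generality that $u_1\prec v_1$. The non-crossing condition of the track layout, applied to these two edges between tracks $i$ and $j$, forbids $v_2\prec u_2$. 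Hence $u_2\prec v_2$, and the two tuples are comparable.
\end{itemize}
So any three tuples $t_1,t_2,t_3\in R_{ij}$ form a chain $s_1\le s_2\le s_3$ in the product order. On a chain, taking the minimum, median or maximum coordinate by coordinate returns $s_1$, $s_2$ and $s_3$ respectively. Therefore $\{\{f_1(t_1,t_2,t_3),f_2(t_1,t_2,t_3),f_3(t_1,t_2,t_3)\}\}=\{\{t_1,t_2,t_3\}\}$, which is exactly the third condition with $k=1$.

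I expect the main point needing care to be this chain argument, and specifically the handling of ties: tuples that share an endpoint, and the distinction between the weak product order and strict crossing. Once comparability is established, the rest is bookkeeping. It follows that $\majcover{G}\le k$, as claimed.
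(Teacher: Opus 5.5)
Your proposal is correct and follows the paper's proof essentially verbatim. You use the same witness triple $(\text{median}_\prec,\min_\prec,\max_\prec)$ and the same key step: the non-crossing condition makes the tuples of each $R_{ij}$ totally ordered componentwise, so the coordinatewise order statistics return a permutation of $(t_1,t_2,t_3)$. Your explicit handling of ties and of the empty relations $R_{ii}$ is just a slightly tidier rendering of the paper's induced order $\prec_{ij}$.
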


\begin{proof}
Let $G$ be a graph and $(\gamma,\prec)$ be a track layout of $G$ with $k$ tracks. Let $F=(f_1,f_2,f_3)$ be the triple of operations $V(G)^3 \to V(G)$ given by 
\begin{align*}
f_1(u,v,w) &= \text{median}_{\prec}(u,v,w),\\
f_2(u,v,w) &= \text{min}_{\prec}(u,v,w),\\
f_3(u,v,w) &= \text{max}_{\prec}(u,v,w).
\end{align*}
By definition, $f_1$ is majority. We will use the observation that for any set $S$ equipped with a linear order and $a,b,c \in S$, the triple $(\text{median}(a,b,c), \text{min}(a,b,c), \text{max}(a,b,c))$ is always a permutation of $(a,b,c)$. This implies in particular that $F$ satisfies the second item in Definition~\ref{def:triple}, leaving only the third to be verified.

Let $R_{ij} \in \struct{\Gamma_{G,\gamma}}$ and note that $\prec$ induces a linear order $\prec_{ij}$ on the tuples of $R_{ij}$, where $(v_1,v_2) \prec_{ij} (v_3,v_4)$ if either $v_1 \prec v_3$ or $v_2 \prec v_4$. (Totality is immediate. If $(v_1,v_2),(v_3,v_4) \in R_{ij}$ are two tuples such that $(v_1,v_2) \prec_{ij} (v_3,v_4)$, then $v_1 \preceq v_3$ and $v_2 \preceq v_4$ since otherwise the edges $v_1v_2$ and $v_3v_4$ would be $\prec$-crossing. This ensures both antisymmetry and transitivity.) Now, if $(v_1,v_2), (v_3,v_4), (v_5,v_6) \in R_{ij}$ and $(v_p,v_q) = \text{min}_{\prec_{ij}}((v_1,v_2), (v_3,v_4), (v_5,v_6))$, then $v_p = \text{min}_{\prec}(v_1,v_3,v_5)$ and $v_q = \text{min}_{\prec}(v_2,v_4,v_6)$. Therefore, we have $f_2((v_1,v_2), (v_3,v_4), (v_5,v_6)) = \text{min}_{\prec_{ij}}((v_1,v_2), (v_3,v_4), (v_5,v_6))$. With a similar argument, we deduce $f_3((v_1,v_2), (v_3,v_4), (v_5,v_6)) = \text{max}_{\prec_{ij}}((v_1,v_2), (v_3,v_4), (v_5,v_6))$ and $f_1((v_1,v_2), (v_3,v_4), (v_5,v_6)) = \text{median}_{\prec_{ij}}((v_1,v_2), (v_3,v_4), (v_5,v_6))$. By the observation of the first paragraph applied to $\prec_{ij}$ and $S = R_{ij}$, it follows that $F(t_1,t_2,t_3)$ is a permutation of $(t_1,t_2,t_3)$ for all $t_1,t_2,t_3 \in R_{ij}$. This is true for all choices of $i,j$ so $F$ is a persistent majority triple for $\struct{\Gamma_{G,\gamma}}$. As $\gamma$ uses $k$ colors, the lemma follows.
\end{proof}

In order to illustrate the relationship between track layouts and tractable constraint satisfaction problems, consider an input $(G,H)$ to \plainghom. Fix a track layout of $H$, and suppose that we have already decided to which track each vertex of $G$ will be mapped. Then, each vertex $u$ of $G$ has a list of candidate vertices $L(u)$, all belonging to the same track of $H$. If $uv$ is an edge of $G$ and some vertex $w  \in L(u)$ has no neighbors in $L(v)$, then $u$ cannot be mapped to $w$ as this would leave no option for mapping $v$. Thus, $w$ can safely be removed from $L(u)$. Repeating this process will gradually shrink the lists of each vertex. If a list becomes empty, then there is no homomorphism from $G$ to $H$ that respects this particular track assignment. On the other hand, if the reduction rule cannot be applied further and all lists are non-empty, then for each edge $uv$ in $G$ the maximum vertices (with respect to the layout ordering $\prec$) in the respective lists of $u$ and $v$ must be neighbors in $H$. This is because each of these maximum vertices has at least one neighbor in the other list and edges cannot cross. Therefore, the mapping that sends each $u \in V(G)$ to the maximum vertex that remains in $L(u)$ is a homomorphism from $G$ to $H$.
 
In standard \plainCSP\ terminology, the property obtained after repeated application of this reduction rule is called \emph{arc consistency}. The constraints of the linear program $\sa{2}{3}{\cdot}$ implicitly enforce arc consistency, in the sense that a feasible solution cannot give nonzero weight to a variable assignment that could have been ruled out by arc consistency.

The next theorem is a direct corollary of Lemma~\ref{lem:algo} and Lemma~\ref{lem:stack}.

\begin{theorem}
    \label{thm:trackhom}
    If $\mathcal H$ is a family of graphs of bounded track number, then $\valhom{\mathcal H}$ can be solved in time $2^{O(n+h)} |\eta|^{O(1)}$ and polynomial space.
\end{theorem}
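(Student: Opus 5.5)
The plan is to combine the two lemmas directly, as the paper announces. Let $\mathcal H$ be a family of graphs whose track number is at most some constant $c$. For any $H \in \mathcal H$, Lemma~\ref{lem:stack} gives $\majcover{H} \leq c$. The family therefore has bounded persistent majority number.

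One point needs care. In $\valhom{\mathcal H}$ the relevant object is the underlying graph $H_u$ of the directed input graph $H$, and by definition $H_u$ belongs to $\mathcal H$. So $\majcover{H_u} \leq c$ holds exactly as required by the proof of Lemma~\ref{lem:algo}. The first statement of Lemma~\ref{lem:algo} then applies verbatim. It yields an algorithm running in time $2^{O(n+h)}|\eta|^{O(1)}$ and polynomial space, which is precisely the claimed bound.

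We need not compute a track layout of $H$, or even a persistent majority coloring, because the enumeration in Lemma~\ref{lem:algo} runs over all proper colorings $\gamma_H$ of $H_u$ with $k$ colors. It stops by some $k \leq \majcover{H_u} \leq c$. The number of pairs $(\gamma_G,\gamma_H)$ tried is at most $\sum_{k \leq c} k^{n+h} = 2^{O(n+h)}$, since $c$ is a constant. Each pair costs polynomial time via Lemma~\ref{lem:unif}, and the space is reused between pairs, so polynomial space suffices.

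No step here is a genuine obstacle, since the argument is a composition of established results. The only thing to check is that the constant hidden in $2^{O(n+h)}$ depends on $\mathcal H$ only through the bound $c$ on the track number, which is immediate from the running-time analysis at the end of the proof of Lemma~\ref{lem:algo}. Improving the dependence to $h^{O(1)}$ would require computing a bounded-size track layout in polynomial time, which is not claimed here.
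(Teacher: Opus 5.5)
Your proposal is correct and follows the paper's route: the theorem is stated there as a direct corollary of Lemma~\ref{lem:stack}, applied to the underlying graph $H_u \in \mathcal H$, combined with the first part of Lemma~\ref{lem:algo}. Your extra remarks, that the enumeration stops at some $k \leq c$ and that space is reused between pairs, are consistent with the running-time analysis in the proof of Lemma~\ref{lem:algo}.
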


Dujmovi\'c and Wood~\cite{DBLP:journals/dmtcs/DujmovicW05} proved that every graph has a subdivision whose track number is at most $4$. This implies that the family of all graphs of track number at most $4$ does not exclude a topological minor, and hence Theorem~\ref{thm:trackhom} is incomparable with Theorem~\ref{thm:valhom} (which we will prove in the next section).

The connection between track layouts and persistent majority colorings has further algorithmic implications. Dujmovi\'c et al.~\cite{DBLP:journals/jacm/DujmovicJMMUW20} proved that all proper minor-closed graph families have bounded track number. Crucially, their proof is constructive and provides a polynomial-time algorithm for computing a layout with a bounded number of tracks. This algorithm, combined with Lemma~\ref{lem:stack}, will constitute a key ingredient in the proof of Theorem~\ref{thm:valhom}.

\begin{lemma}
\label{lem:minor}
For every graph $H$, there exists a constant $k=k(H)$ such that every graph $G$ excluding $H$ as a minor has persistent majority number at most $k$. In addition, there exists a polynomial-time algorithm that takes as input a graph $G$ and either outputs a persistent majority coloring of $G$ with $O(1)$ colors or concludes that $G$ contains $H$ as a minor.
\end{lemma}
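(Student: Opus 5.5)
This follows directly from Lemma~\ref{lem:stack} combined with the theorem of Dujmovi\'c et al.~\cite{DBLP:journals/jacm/DujmovicJMMUW20}. That theorem states that every proper minor-closed class has bounded track number. Let $\mathcal{C}_H$ be the class of graphs excluding $H$ as a minor. It is minor-closed, and it is proper because it does not contain $H$. Hence there is a constant $k(H)$ such that every $G \in \mathcal{C}_H$ has track number at most $k(H)$. By Lemma~\ref{lem:stack}, $\majcover{G} \leq k(H)$, which proves the first statement.

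For the algorithmic statement, I would rely on the constructive nature of the proof in~\cite{DBLP:journals/jacm/DujmovicJMMUW20}. That proof goes through layered partitions: a BFS layering of each component, then a partition of $G$ into parts of bounded layered width whose quotient has bounded treewidth, then a product structure. The track layout is then read off from the product structure and the bounded-treewidth quotient.

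The key observation is that each step either succeeds or certifies an $H$-minor. This holds because the steps rely on the Graph Minor Structure Theorem, whose decompositions (tree decompositions into almost-embeddable torsos) can be computed in polynomial time for fixed $H$. The computation either returns such a decomposition or finds an $H$-minor model. Alternatively, and more cheaply, we can first test in polynomial time whether $G$ contains $H$ as a minor using the Robertson--Seymour algorithm for fixed $H$. If it does, we conclude so. Otherwise we run the constructive layout algorithm, which is then guaranteed to succeed.

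Once a track layout $(\gamma,\prec)$ with $O(1)$ tracks is obtained, we output $\gamma$. By the proof of Lemma~\ref{lem:stack}, $\gamma$ is a persistent majority coloring of $G$, witnessed by the median, min and max operations with respect to $\prec$. The coloring is also proper, since a track layout is a proper coloring by definition.

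The main obstacle is justifying that the construction in~\cite{DBLP:journals/jacm/DujmovicJMMUW20} is genuinely polynomial-time for every fixed excluded minor. This requires checking that each ingredient is effective: the minor structure decomposition, the computation of the layered partitions, and the conversion of bounded treewidth plus layered width into tracks (via the Dujmovi\'c--Morin--Wood tree-partition arguments). I would cite the paper's explicit statement of constructivity rather than reprove it. The minor-testing preprocessing step then sidesteps any need to extract a certificate from a failed run.
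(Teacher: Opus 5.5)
Your proposal is correct and matches the paper: the paper gives no separate proof of this lemma and obtains it by combining the constructive theorem of Dujmovi\'c et al.\ (proper minor-closed classes have bounded track number, with a polynomial-time layout algorithm) with Lemma~\ref{lem:stack}, exactly as you do. Your Robertson--Seymour minor-testing preprocessing is a harmless way to make the ``concludes $G$ contains $H$'' branch explicit; note also that your sketch of their internals is slightly off, since bounded layered partitions exist only for apex-minor-free classes and the general case instead goes through the graph minor structure theorem and shadow-complete layerings, but this does not affect your argument because you only rely on their constructivity claim.
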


The final lemma of this section establishes that graphs of bounded degree have bounded persistent majority number.

\begin{lemma}
\label{lem:degree}
The persistent majority number of a graph of maximum degree $k$ is at most $k^2+1$.
\end{lemma}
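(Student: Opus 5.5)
The plan is to take a distance-$2$ coloring of the graph with $k^2+1$ colors and use the same triple of operations that appeared in the proof of Lemma~\ref{lem:wellbehaved}. The point of the distance-$2$ condition is that every relation $R_{ij}$ becomes a partial matching. On such relations, the componentwise action of an operation that merely permutes its arguments according to their equality pattern is automatically a permutation of the tuples.

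\emph{Step 1: the coloring.} Let $G$ have maximum degree $k$. Every vertex has at most $k + k(k-1) = k^2$ other vertices at distance at most $2$. A greedy procedure therefore yields a distance-$2$ coloring $\gamma : V(G) \to [k^2+1]$. This coloring is in particular proper.

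\emph{Step 2: each $R_{ij}$ is a matching.} Let $(u,v),(u,v') \in R_{ij}$ with $v \neq v'$. Then $v$ and $v'$ are joined by the path $v u v'$ of length $2$ and both have color $j$, which contradicts the choice of $\gamma$. The symmetric argument applies to the second coordinate. Hence, for $t,t' \in R_{ij}$, the three conditions $t[1]=t'[1]$, $t=t'$ and $t[2]=t'[2]$ are equivalent.

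\emph{Step 3: the triple.} Define $F=(f_1,f_2,f_3)$ on $V(G)^3$ by
\[
F(a,b,c) = \begin{cases} (a,b,c) & \text{if } b \neq c,\\ (b,a,c) & \text{if } b = c. \end{cases}
\]
\begin{itemize}
\item \emph{Majority.} We check the three cases. If $a=b$, then $f_1 = a$, since either $b \neq c$ (so $f_1 = a$) or $a=b=c$. If $a=c$, then $f_1=a$ by the same reasoning. If $b=c$, then $f_1=b$. So $f_1$ is majority.
\item \emph{Second condition of Definition~\ref{def:triple}.} In both cases $F(a,b,c)$ is a permutation of $(a,b,c)$.
\item \emph{Third condition of Definition~\ref{def:triple}.} Take $t_1,t_2,t_3 \in R_{ij}$. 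Which permutation $F$ applies depends only on whether the second and third arguments are equal. By Step~2, $t_2[1]=t_3[1]$ holds if and only if $t_2[2]=t_3[2]$. So the same permutation is applied in both coordinates, and $F(t_1,t_2,t_3)$ is a permutation of $(t_1,t_2,t_3)$.
\end{itemize}
Hence $F$ is a persistent majority triple for $\struct{\Gamma_{G,\gamma}}$, and $\majcover{G} \leq k^2+1$.

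The main obstacle is Step~2, namely recognizing that distance-$2$ colorings make each $R_{ij}$ injective in both coordinates. Once this is in place, Steps~1 and~3 are routine checks.
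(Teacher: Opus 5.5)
Your proof is correct and follows the paper's route. Like the paper, you take a distance-$2$ coloring with $k^2+1$ colors, observe that each $R_{ij}$ is a partial matching (distinct tuples have disjoint entries), and use a triple whose action depends only on the equality pattern of its arguments, so that it transfers from vertices to tuples. The only difference is the triple itself. The paper uses the Cohen et al.\ triple, with two majority operations and a minority-like third. Yours is the identity or the swap of the first two arguments, which makes the third-condition check a one-liner. Note that yours is not literally the triple from Lemma~\ref{lem:wellbehaved}, which outputs $(w,v,u)$ when $v=w$; this is harmless, since you verify your own triple directly.
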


\begin{proof}
Let $G$ be a graph of maximum degree $k$. Each vertex in $G$ has at most $k$ neighbors and $k(k-1)$ distance-$2$ neighbors, so there exists a proper distance-$2$ coloring $\gamma$ of $G$ with $c \leq k^2+1$ colors. A straightforward greedy algorithm can find $\gamma$ in polynomial time. Let $(f_1,f_2,f_3)$ be the triple of operations $V(G)^3 \to V(G)$ given by
\[
f_1(u,v,w) = 
\begin{cases}
    u \text{ if } v \neq w\\
    w \text{ otherwise}
\end{cases}
\hspace{15mm}
f_2(u,v,w) = 
\begin{cases}
    v \text{ if } u \neq w\\
    w \text{ otherwise}
\end{cases}
\]
\[
f_3(u,v,w) = 
\begin{cases}
    \text{minority}(u,v,w) \text{ if } |\{u,v,w\}| = 2\\
    w \text{ otherwise}
\end{cases}
\]
where $\text{minority}(u,v,w)$ denotes the element with lowest nonzero multiplicity in $\{\{u,v,w\}\}$. This specific triple of operations was introduced by Cohen et al.~\cite{DBLP:journals/ai/CohenCJK06}. By definition, both $f_1$ and $f_2$ are majority.

We start by proving the second item in Definition~\ref{def:triple}. Let $u,v,w$ be three vertices of $G$. If all three vertices are distinct, then $f_1(u,v,w) = u$, $f_2(u,v,w) = v$ and $f_3(u,v,w) = w$. If exactly two vertices among $u,v,w$ are equal, then $f_1$ and $f_2$ return these two vertices (because they are majority operations) and $f_3$ returns the third. If $(u,v,w) = (u,u,u)$, then all three operations return $u$. In all cases, we have $\{\{f_1(u,v,w), f_2(u,v,w), f_3(u,v,w)\}\} = \{\{u,v,w\}\}$.

Now, let $R_{ij} \in \struct{\Gamma_{G,\gamma}}$ and $t_1,t_2,t_3$ be three tuples of $R_{ij}$. Since $\gamma$ is a distance-2 coloring of $G$, no vertex of color $i$ has two distinct neighbors of color $j$ (and vice versa). Therefore, distinct tuples in $R_{ij}$ have pairwise disjoint entries. This means we can repeat the same reasoning as above: if all three tuples are distinct (or if they are all equal) then $f_i(t_1,t_2,t_3) = t_i$ for all $i \in \{1,2,3\}$, and if any two tuples are equal then $f_1$ and $f_2$ return these two tuples whereas $f_3$ returns the third. In all cases it holds that $\{\{f_1(t_1,t_2,t_3), f_2(t_1,t_2,t_3), f_3(t_1,t_2,t_3)\}\} = \{\{t_1,t_2,t_3\}\}$, so $(f_1,f_2,f_3)$ is a majority triple for $\struct{\Gamma_{G,\gamma}}$. The lemma then follows from the bound $c \leq k^2+1$.
\end{proof}

\section{Excluding a topological minor}

This section contains the proof of Theorem~\ref{thm:valhom}. Our starting point is the following theorem of Grohe and Marx, which is a topological analog of the graph minor structure theorem of Robertson and Seymour~\cite{DBLP:journals/jct/RobertsonS03a}.

\begin{theorem}[Grohe and Marx~\cite{DBLP:journals/siamcomp/GroheM15}]
\label{thm:structuretop}
For every graph $H$, there exists a constant $k_1 = k_1(H)$ such that every graph excluding $H$ as a topological minor has a tree decomposition whose torsos either exclude $K_{k_1}$ as a minor or contain at most $k_1$ vertices of degree greater than $k_1$. Moreover, there exists a polynomial-time algorithm that takes as input a graph $G$ and outputs either such a decomposition or a subdivision of $H$ in $G$.
\end{theorem}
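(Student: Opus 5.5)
This result is imported from Grohe and Marx and is not proved in the paper; the plan below is how I would reconstruct their argument. The proof is a recursive construction of a tree decomposition with bounded adhesion, in the spirit of the Robertson–Seymour decomposition algorithms.

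Fix $h=|V(H)|$. The construction works on pairs $(G',W)$, where $G'$ is an induced subgraph of $G$ and $W\subseteq V(G')$ is an interface of size at most some constant $c(h)$. It produces a decomposition of $G'$ with a root bag containing $W$. At each step I would consider the set $B$ of vertices of degree greater than $k_1$ in the current piece, and distinguish whether $B\cup W$ is highly linked in the piece or not.

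\emph{Poorly linked case.} Suppose some large subset of $B\cup W$ can be split in a balanced way by a separator $S$ with $|S|\le c(h)$. Then I split along $S$: the root bag becomes $W\cup S$, and I recurse on each component together with its attachment to $W\cup S$. Balancedness keeps the new interfaces of size at most $c(h)$. In the torso of the root, $W\cup S$ is a bounded-size set, and all remaining material has been pushed into children. So this torso has at most $k_1$ vertices of degree greater than $k_1$ once constants are chosen appropriately, in the usual way.

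\emph{Highly linked case.} If no such small separator exists, the piece contains a large set of high-degree vertices that are pairwise connected by many disjoint paths. Here the key lemma I would prove is the following. If $h$ vertices of degree at least $k_1$ are mutually well linked, then one can route a subdivision of $H$ with these vertices as branch vertices. The high degree supplies enough distinct exit edges at each branch vertex, and linkedness (via Menger-type arguments) supplies disjoint connecting paths. So either we find a topological $H$, and the algorithm returns it, or the number of well-linked high-degree vertices is bounded. In the latter situation one applies the Robertson–Seymour machinery instead.

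\emph{Main obstacle.} The hardest step is combining the two regimes. A piece may have unboundedly many high-degree vertices none of which are well linked together, yet still contain a large clique minor. I would first try the following. If the torso contains a $K_{k_1}$ minor, use the minor's branch sets as a highly connected "hub". Then show that every high-degree vertex either attaches to the hub through many disjoint paths, or is cut off from it by a small separator. In the first case, $h$ such vertices together with the hub yield a subdivision of $H$, with paths routed through the clique model. In the second case, we split off that vertex and recurse. Carrying this out while maintaining bounded adhesion, and without the separations interacting badly (via tangles or well-linked sets relative to the hub), is the delicate technical core. It is also where the dependence of $k_1$ on $H$ becomes huge.

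\emph{Running time.} Each separation and minor test is polynomial for fixed $h$, using Robertson–Seymour minor testing and flow computations. The recursion depth and branching are polynomially bounded because pieces shrink.
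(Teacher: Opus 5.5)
The paper does not prove this statement; it is quoted from Grohe and Marx. Your reconstruction therefore has to stand on its own, and it breaks at the step that carries the content.

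\textbf{The key lemma is false.} In your highly linked case you claim that high degree plus strong mutual linkedness forces a subdivision of $H$. It does not. Take $H = K_5$ and a large grid. Delete the interiors of five far-apart square regions, so that each becomes a face bounded by a cycle of length $m$. Put into each such face a new vertex adjacent to all $m$ vertices of its boundary cycle. The result is planar, so it has no subdivision of $K_5$. Yet the five new vertices have degree $m$, their neighbours are not dead ends, and any two of them are joined by $\Omega(m)$ internally disjoint paths. They are as well linked as one could ask.

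The underlying issue is that Menger-type arguments only give disjoint paths between two sets. A subdivision of $H$ needs a linkage joining \emph{prescribed} pairs of branch vertices, and neither high degree nor linkedness supplies that routing freedom. So the dichotomy your case analysis rests on fails: ``either a topological $H$, or only boundedly many well-linked high-degree vertices.'' On this example your procedure would report a subdivision of $H$ that does not exist. The correct outcome is the other alternative of the theorem, a torso excluding $K_{k_1}$ as a minor.

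\textbf{The working mechanism is only sketched.} What makes the theorem true is the idea you list only as something to try in your last paragraph. A large clique minor serves as the routing hub, because any prescribed pairing of its branch sets can be linked by disjoint paths inside the model. The genuine key lemma is then: if $|V(H)|$ high-degree vertices cannot be separated from such a minor by a small separator, they can be attached to distinct branch sets by disjoint fans, and a subdivision of $H$ is routed through the minor.

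The decomposition must then be organised around this hub, for instance via the tangle it induces. This requires two things your proposal does not do:
\begin{itemize}
\item The high-degree vertices that \emph{can} be separated from the minor must be split off along bounded-order separations chosen to be laminar.
\item Degrees must be controlled in the torso, where they include the clique edges added on adhesion sets. Otherwise a vertex can have large torso degree without being well connected to the minor, and the key lemma does not apply to it.
\end{itemize}
You explicitly leave all of this as ``the delicate technical core,'' so what you have is a plan rather than a proof.

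\textbf{A smaller point.} In your poorly linked case, the separator must be balanced with respect to the interface $W$ itself. If you balance an arbitrary large subset of $B \cup W$, a child can inherit all of $W$ together with $S$, and the bound $c(h)$ on interfaces is lost.
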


By our observations from the previous section (Lemma~\ref{lem:wellbehaved}, Lemma~\ref{lem:minor}, and Lemma~\ref{lem:degree}), each torso in the tree decomposition described in Theorem~\ref{thm:structuretop} has an efficiently computable persistent majority coloring with a constant number of colors. We show that these ``local'' colorings can be combined into a persistent majority coloring of the whole graph without increasing too much the number of colors. Our proof strategy closely follows an argument of Dujmovi\'c, Morin, and Wood~\cite{DBLP:journals/jct/DujmovicMW17}, who used it to prove that the track number of proper minor-closed families of graphs is at most polylogarithmic in the number of vertices.

We first borrow from them a couple of lemmas that connect Theorem~\ref{thm:structuretop} to shadow-complete layerings. The first is a straightforward adaptation of~\cite[Lemma~26]{DBLP:journals/jct/DujmovicMW17}. The original statement concerns graphs with excluded minors, rather than topological minors; however, the proof is essentially identical. 

\begin{lemma}[adapted from Dujmovi\'c, Morin and Wood~\cite{DBLP:journals/jct/DujmovicMW17}]
\label{lem:rich}
For every graph $H$, there exists a constant $k_2 = k_2(H)$ such that every graph $G$ excluding $H$ as a topological minor is a subgraph of a graph $G'$ that has a $k_2$-rich tree decomposition whose bags either exclude $K_{k_2}$ as a minor or contain at most $k_2$ vertices of degree greater than $k_2$. Moreover, there exists a polynomial-time algorithm that takes as input a graph $G$ and outputs either $G'$ and its tree decomposition or a subdivision of $H$ in $G$.
\end{lemma}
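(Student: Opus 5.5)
We start from the decomposition of Theorem~\ref{thm:structuretop} applied to $G$: either we get a subdivision of $H$ in $G$, or a tree decomposition $(T,\beta)$ whose torsos either exclude $K_{k_1}$ as a minor or have at most $k_1$ vertices of degree greater than $k_1$. We then make every adhesion set $\beta(x)\cap\beta(y)$ a clique of bounded size. To this end we let $G'$ be $G$ together with all torso edges, i.e.\ all edges $uv$ with $u,v\in\beta(x)\cap\beta(y)$ for some $xy\in E(T)$. Then $(T,\beta)$ is still a tree decomposition of $G'$, every bag of it induces exactly the corresponding torso in $G'$, and $G$ is a subgraph of $G'$.

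The next step is to bound the adhesion sizes. If a torso excludes $K_{k_1}$ as a minor, then every clique in it has fewer than $k_1$ vertices, so each adhesion set it contains has size less than $k_1$. The bounded-degree case is different: there the clique on an adhesion set contains at most $k_1$ high-degree vertices and at most $k_1+1$ low-degree vertices, because a low-degree vertex in a clique of size $s$ has degree at least $s-1$. So the size is at most $2k_1+1$. Since each adhesion set lies in both adjacent torsos, it is a clique in $G'$ of size at most $2k_1+1$. Taking $k_2=2k_1+1$ therefore makes the decomposition $k_2$-rich, and the bags satisfy the required alternative: either they exclude $K_{k_2}$ as a minor, or they have at most $k_2$ vertices of degree greater than $k_2$. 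Here the degree is measured in the bag, i.e.\ in the torso. If "degree" in the statement means degree in $G'$, the degree bound must be treated more carefully, as explained below.

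I expect the main obstacle to be the degree interpretation in the second alternative. It is subtle because vertices in adhesion sets can gain many torso edges, and because a vertex can belong to many bags, so its degree in all of $G'$ need not be bounded. I would read "bags" in the same sense as "torsos" in Theorem~\ref{thm:structuretop}, namely via $G'[\beta(x)]$, which equals the torso. Under this reading the degree condition carries over verbatim from the theorem. A second point to check is that Theorem~\ref{thm:structuretop}'s torso edges are exactly those added to $G'$, so that adding them for all tree edges simultaneously does not create extra edges inside a bag. This holds: two vertices of $\beta(x)$ that both lie in some $\beta(y)\cap\beta(z)$ also both lie in $\beta(x)\cap\beta(w)$, where $w$ is the neighbour of $x$ on the path from $x$ to $y$, by the connectivity property. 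So such an edge is already a torso edge of $x$.

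Finally, the construction of $G'$ and the decomposition takes polynomial time, given the polynomial-time algorithm of Theorem~\ref{thm:structuretop}. This gives the algorithmic statement, mirroring \cite[Lemma~26]{DBLP:journals/jct/DujmovicMW17}.
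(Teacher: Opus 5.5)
Your proposal is correct and follows the paper's proof. Both add every adhesion clique to $G$, note that each bag then induces exactly its torso in $G'$, and use the clique-size bound in each torso to get richness. You also explicitly check the torso identity that the paper leaves implicit, and you take the looser constant $k_2 = 2k_1+1$ where the paper uses $k_1+2$; your own remark that a clique containing a vertex of degree at most $k_1$ has size at most $k_1+1$ already yields the sharper bound.
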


\begin{proof}
Let $(T,\beta)$ be a tree decomposition of $G$ as in Theorem~\ref{thm:structuretop}. For each edge $xy \in E(T)$, add to $G$ all possible edges $uv \subseteq \beta(x) \cap \beta(y)$. Let $G'$ be the graph resulting from this process and $(T',\beta')$ be the tree decomposition of $G'$ with $T' = T$ and $\beta' = \beta$. The subgraphs of $G'$ induced by the bags of $(T',\beta')$ correspond to the torsos of $(T,\beta)$ in $G$, so each bag induces a subgraph of $G'$ that either excludes $K_{k_1}$ as a minor or contains at most $k_1$ vertices of degree greater than $k_1$. In particular, these subgraphs cannot contain a clique of size $k_1 + 2$. Since $\beta'(x) \cap \beta'(y)$ induces a clique of $G'$ for all $xy \in E(T')$, $(T',\beta')$ is a $k_2$-rich tree decomposition of $G'$ with $k_2 = k_1 + 2$.
\end{proof}

The second lemma corresponds to~\cite[Lemma~27]{DBLP:journals/jct/DujmovicMW17}, with no changes besides an algorithmic claim that follows immediately from the original proof.

\begin{lemma}[Dujmovi\'c, Morin and Wood~\cite{DBLP:journals/jct/DujmovicMW17}]
\label{lem:layers}
If a graph $G$ has a $k$-rich tree decomposition $(T,\beta)$, then $G$ has a shadow-complete layering $(V_0,\ldots,V_t)$ such that every shadow has size at most $k$ and each subgraph $G_i$ has a $(k-1)$-rich tree decomposition contained in $(T,\beta)$. Given $G$ and $(T,\beta)$ in input, the layering $(V_0,\ldots,V_t)$ and the tree decompositions of each $G_i$ can be computed in polynomial time.
\end{lemma}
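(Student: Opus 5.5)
This lemma is cited from~\cite[Lemma~27]{DBLP:journals/jct/DujmovicMW17}, so the plan is to reproduce their construction and check that each step is effective. The layering is a BFS layering, with the root chosen inside the tree decomposition. We may assume $G$ is connected; otherwise we handle each component separately and concatenate the layerings, which is harmless because no edges join different components. Fix a root node $r \in V(T)$ and a vertex $v_0 \in \beta(r)$. Let $V_i$ be the set of vertices at distance exactly $i$ from $v_0$ in $G$. This is clearly a layering: every edge joins vertices at equal or consecutive distances. It is computable by BFS in polynomial time.

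The first property to prove is shadow-completeness together with the size bound. Let $i > 0$ and let $C$ be a connected subgraph of $G_{\geq i}$ with shadow $S \subseteq V_{i-1}$. The plan is to locate a single edge $xy$ of $T$ whose separator $\beta(x)\cap\beta(y)$ contains all of $S$. Since that separator is a clique of size at most $k$ by richness, this gives both claims at once.

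To find $xy$, consider the subtree $T_C$ of nodes whose bags meet $V(C)$; it is connected because $C$ is connected. Let $x$ be the node of $T_C$ closest to $r$.
\begin{itemize}
\item If $x = r$, then $\beta(r)$ contains a vertex of $C$ (distance $\ge i$) together with $v_0$. I would rule this case out, or handle it, using the fact that every vertex of $\beta(r)$ is adjacent to or equal to $v_0$. This holds if we first add all richness cliques, or if we argue separately.
\item Otherwise let $y$ be the parent of $x$. Every path from $C$ to $v_0$ must pass through $\beta(x)\cap\beta(y)$. Each $s \in S$ has a shortest path to $v_0$ that avoids $V_{\ge i}$, and $s$ is adjacent to $C$, so an edge $sc$ with $c\in C$ lies in some bag of $T_C$.
\end{itemize}
One then argues that $s$ lies in $\beta(x)\cap\beta(y)$. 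The argument is that $s\notin V(C)$, $s$ appears in a bag in the subtree below $x$, and $s$ connects to $v_0$ through vertices outside $C$. This part is delicate and is where the original proof needs care. In particular, one must choose $C$ to be a whole component of $G_{\ge i}$ (or a subgraph of one) and use the ancestor-closest node argument. Shadows of subgraphs are contained in shadows of components, so it suffices to treat components.

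The second property is the $(k-1)$-rich tree decomposition of $G_i$. Restrict each bag to $V_i$: set $\beta_i(x)=\beta(x)\cap V_i$, keeping only the nodes in the relevant subtrees. This is a tree decomposition of $G_i$ contained in $(T,\beta)$. Its adhesion sets are cliques because they are subsets of cliques. The main obstacle I expect is showing that every adhesion set has size at most $k-1$. The idea is that any separator $\beta(x)\cap\beta(y)$ meeting $V_i$, for $i>0$, also contains a vertex of $V_{i-1}$ (or of a neighboring layer). Being a clique, the separator spans at most two consecutive layers. I would prove that the side of $xy$ away from $r$ contains a vertex at distance $\ge i$, so the separator must contain a vertex of smaller distance through which a path to $v_0$ exits. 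If this fails for some edge, I would instead modify the decomposition as Dujmovi\'c, Morin and Wood do: take the components of $G_i$ and the decompositions induced by the subtrees, and argue that the first separator encountered has lost at least one vertex to layer $i-1$. Every step (BFS, intersections, restriction) is explicit, so the algorithmic claim follows immediately.
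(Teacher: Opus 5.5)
The paper gives no argument of its own here; it defers to Lemma~27 of Dujmovi\'c, Morin and Wood. Your reconstruction has a genuine gap at its core: the BFS layering of $G$ itself is in general \emph{not} shadow-complete. So the step you call delicate (every shadow vertex lies in $\beta(x)\cap\beta(y)$) is false, not merely hard. The reason is that bags of a $k$-rich decomposition are arbitrary graphs, so distances inside a bag say nothing about the decomposition. For an example, take a $5$-cycle $a,b,c,d,e$ plus a vertex $z$ adjacent to $a$, with bags $\{z,a\}$ and $\{a,b,c,d,e\}$; this decomposition is $1$-rich. BFS from $z$ gives $V_0=\{z\}$, $V_1=\{a\}$, $V_2=\{b,e\}$, $V_3=\{c,d\}$. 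The edge $cd$ then has shadow $\{b,e\}$, which is not a clique and is not contained in the adhesion set $\{a\}$. BFS from any other vertex fails in the same way. Your fix for the root case (``add all richness cliques'') does nothing, because adhesion sets are already cliques. You also cannot make bags into cliques inside $G$ itself, since the lemma asks for shadows that are cliques of $G$.

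The missing idea is to compute distances in the supergraph $G^+$ obtained by turning every bag into a clique, while still checking shadows in $G$. Let $V_i$ be the set of vertices at $G^+$-distance $i$ from $v_0\in\beta(r)$; this is a layering of $G$ because $E(G)\subseteq E(G^+)$. Take a connected $H\subseteq G_{\geq i}$ and let $x$ be the node of $T_H$ closest to $r$.
\begin{itemize}
\item If $x=r$, then $\beta(r)$ is a $G^+$-clique containing $v_0$ and a vertex of $H$. This forces $i=1$, so the shadow is contained in $\{v_0\}$.
\item Otherwise let $y$ be the parent of $x$ and $S_x=\beta(x)\cap\beta(y)$. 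Each shadow vertex $s$ appears in a bag below $x$, since the edge joining it to $H$ does, and $v_0$ does not lie below $x$ outside $S_x$. If $s\notin S_x$, a shortest $G^+$-path from $s$ to $v_0$ must cross $S_x$ at some $z$ at distance at most $i-2$. But any $h\in V(H)\cap\beta(x)$ is $G^+$-adjacent to $z$, which puts $h$ at distance at most $i-1$, a contradiction.
\end{itemize}
Hence every shadow lies in an adhesion set, which is a clique of $G$ of size at most $k$.

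Your second part has a related gap. An adhesion set meeting $V_i$ need not meet $V_{i-1}$: an adhesion set inside $\beta(r)$ that avoids $v_0$ lies entirely in $V_1$ and can have size $k$. So restricting bags to $V_i$ is not enough; you need to prune. If $S_x\subseteq V_i$, then every vertex below $x$ outside $S_x$ is at distance at least $i+1$. Therefore $V_i$ meets the bags of the subtree at $x$ only in $S_x\subseteq\beta(y)$, and that whole subtree can be deleted. Every surviving adhesion set contains a vertex outside $V_i$, so its trace on $V_i$ has at most $k-1$ vertices.
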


Now, we will prove that the persistent majority number of a graph with a shadow-complete layering is bounded by a function that only depends on the maximum shadow size and the maximum persistent majority number of its layers. Starting from the $k_2$-rich tree decomposition of Lemma~\ref{lem:rich} and iterating Lemma~\ref{lem:layers} will then yield the desired result.

For the next couple of lemmas, it will be convenient to use the following alternative definition of persistent majority triples for binary crisp constraint languages. Call an operation $\pi : D^3 \to D^3$ a \emph{coordinate permutation} if there exists a bijective mapping $\sigma : \{1,2,3\} \to \{1,2,3\}$ such that $\pi(a_1,a_2,a_3) = (a_{\sigma(1)}, a_{\sigma(2)}, a_{\sigma(3)})$ for all $(a_1,a_2,a_3) \in D^3$. If $\struct{\Gamma}$ is a binary crisp constraint language over $D$ and $F = (f_1,f_2,f_3)$ is a triple of operations from $D^3$ to $D$, then $F$ is a persistent majority triple for $\struct{\Gamma}$ if and only if 
\begin{itemize}
    \item[(${\dagger}$)] $f_1$ is majority,
    \item[(${\dagger}{\dagger}$)] for all $a_1,a_2,a_3 \in D$, there exists a coordinate permutation $\pi$ such that $F(a_1,a_2,a_3) = \pi(a_1,a_2,a_3)$, and
    \item[(${\dagger}{\dagger}{\dagger}$)] for all $R \in \struct{\Gamma}$ and $(a_1,b_1),(a_2,b_2),(a_3,b_3) \in R$, there exists a coordinate permutation $\pi$ such that $F(a_1,a_2,a_3) = \pi(a_1,a_2,a_3)$ and $F(b_1,b_2,b_3) = \pi(b_1,b_2,b_3)$.
\end{itemize}

\begin{lemma}
\label{lem:permu-out}
Let $G$ be a graph, $\gamma$ be a persistent majority coloring of $G$, and $F$ be a persistent majority triple for $\struct{\Gamma}_{G,{\gamma}}$. If $\kappa_1,\kappa_2,\kappa_3$ are the vertex sets of three cliques of $G$, then there exists a coordinate permutation $\pi_{\kappa_1\kappa_2\kappa_3}$ such that $F(u,v,w) = \pi_{\kappa_1\kappa_2\kappa_3}(u,v,w)$ for all monochromatic triples $(u,v,w) \in \kappa_1 \times \kappa_2 \times \kappa_3$.
\end{lemma}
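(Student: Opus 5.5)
The plan is to reduce the statement to a compatibility argument between the permutations that (${\dagger}{\dagger}$) gives for each individual color class, with (${\dagger}{\dagger}{\dagger}$) supplying pairwise compatibility. First I record a structural fact. Since $\gamma$ is proper and each $\kappa_\ell$ is a clique, each $\kappa_\ell$ contains at most one vertex of each color. Hence, for each color $i$, there is at most one monochromatic triple $(u_i,v_i,w_i)\in\kappa_1\times\kappa_2\times\kappa_3$ of color $i$. Let $C$ be the set of colors for which such a triple exists.

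For $i\in C$, let $P_i$ be the set of coordinate permutations $\pi$ with $F(u_i,v_i,w_i)=\pi(u_i,v_i,w_i)$. By (${\dagger}{\dagger}$), $P_i\neq\emptyset$. Now take $i\neq j$ in $C$. Then $u_i,u_j\in\kappa_1$ are distinct (they have different colors) and adjacent, so $(u_i,u_j)\in R_{ij}$; likewise $(v_i,v_j),(w_i,w_j)\in R_{ij}$. Applying (${\dagger}{\dagger}{\dagger}$) to these three tuples gives one permutation lying in $P_i\cap P_j$. So the family $(P_i)_{i\in C}$ is pairwise intersecting, and the lemma is exactly the claim $\bigcap_{i\in C}P_i\neq\emptyset$. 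Taking $\pi_{\kappa_1\kappa_2\kappa_3}$ in this intersection finishes the proof.

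To prove that the intersection is nonempty I will split according to the equality pattern of each triple.
\begin{itemize}
\item If some triple of color $i$ has three distinct entries, then $P_i=\{\pi_i\}$ is a singleton. Pairwise intersection then forces $\pi_i\in P_j$ for every $j$, so $\pi_i$ works.
\item If a triple has all entries equal, then $P_i$ is the set of all permutations, so it imposes no constraint and can be discarded.
\item Otherwise every remaining triple has exactly two equal entries. Let $p_i$ be the position of its odd entry and $q_i$ the position where $F$ places that entry; this is forced because $F$ permutes the multiset. Then $P_i=\{\sigma : \sigma(q_i)=p_i\}$ in the indexing convention $\pi(a)=(a_{\sigma(1)},a_{\sigma(2)},a_{\sigma(3)})$.
\end{itemize}

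The crucial input in the last case is that $f_1$ is majority, so $f_1$ never outputs the odd entry and $q_i\in\{2,3\}$. Pairwise intersection of two sets of the form $\{\sigma:\sigma(q)=p\}$ and $\{\sigma:\sigma(q')=p'\}$ means that $q=q'$ forces $p=p'$, and $q\neq q'$ forces $p\neq p'$. Hence all indices $i$ with $q_i=2$ share one value $p$, and all with $q_i=3$ share one value $p'\neq p$. Any permutation $\sigma$ with $\sigma(2)=p$ and $\sigma(3)=p'$ (or just the first condition if only one $q$-value occurs) then lies in every $P_i$.

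I expect this Helly-type step to be the main obstacle. Pairwise compatibility of permutation constraints does not in general give a common permutation, and it is precisely the restriction $q_i\neq 1$ coming from the majority condition (${\dagger}$) that rescues the argument. I will therefore check the case analysis carefully, including the mixed case where some triples are all-distinct and others are not; that case is already absorbed by the first bullet.
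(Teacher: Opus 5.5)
Your proof is correct and follows essentially the paper's route. It uses the same split into the case of an all-distinct triple, whose unique permutation is propagated to the other colors via (${\dagger}{\dagger}{\dagger}$) on $R_{pp'}$, and the case of triples with exactly two distinct entries, tracked by the positions $p_i, q_i$ of the odd entry (the paper's $\alpha, \beta$). Pairwise compatibility again comes from (${\dagger}{\dagger}{\dagger}$) applied to the column tuples of the cliques.

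One correction to your commentary: the majority condition is not needed in the last step. Pairwise compatibility already makes $q_i \mapsto p_i$ a well-defined injective partial map on $\{1,2,3\}$, and such a map always extends to a bijection. This is how the paper finishes, without invoking (${\dagger}$).
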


\begin{proof}
Each clique of $G$ contains at most one vertex of each color, so $\kappa_1 \times \kappa_2 \times \kappa_3$ contains at most one monochromatic triple of each color. First, suppose that there exists a color $p$ whose associated triple $(u_1,u_2,u_3) \in \kappa_1 \times \kappa_2 \times \kappa_3$ contains three distinct vertices. Define $\pi_{\kappa_1\kappa_2\kappa_3}$ as the unique coordinate permutation such that $F(u_1,u_2,u_3) = \pi_{\kappa_1\kappa_2\kappa_3}(u_1,u_2,u_3)$. For any monochromatic triple $(v_1,v_2,v_3) \in \kappa_1 \times \kappa_2 \times \kappa_3$ with $(v_1,v_2,v_3) \neq (u_1,u_2,u_3)$, the vertices $v_1,v_2,v_3$ have color $p' \neq p$. The relation $R_{pp'} \in \struct{\Gamma}_{G,{\gamma}}$ contains the three tuples $(u_1,v_1), (u_2,v_2), (u_3,v_3)$ because each $\kappa_i$ induces a clique of $G$. Then, by condition (${\dagger}{\dagger}{\dagger}$) in the (alternative) definition of persistent majority triples we must have $F(v_1,v_2,v_3) = \pi_{\kappa_1\kappa_2\kappa_3}(v_1,v_2,v_3)$.

For the rest of the proof, suppose that every monochromatic triple in $\kappa_1 \times \kappa_2 \times \kappa_3$ contains at most two distinct vertices. We ignore triples where the same vertex occurs thrice because they are invariant under every coordinate permutation. For any monochromatic triple $m$, let $\alpha(m)$ denote the index $i  \in \{1,2,3\}$ such that the $i$th entry of $m$ is the vertex that appears exactly once in $m$. Similarly, let $\beta(m)$ denote the index of that vertex in $F(m)$. If there exists a bijective mapping $\sigma : \{1,2,3\} \to \{1,2,3\}$ such that $\beta(m) = \sigma(\alpha(m))$ for all monochromatic triples $m$, then the coordinate permutation $\pi_{\kappa_1\kappa_2\kappa_3}(a_1,a_2,a_3) = (a_{\sigma(1)}, a_{\sigma(2)}, a_{\sigma(3)})$ satisfies the statement of the lemma. In turn, it is sufficient to show that for any two monochromatic triples $m_1,m_2$, we have $\beta(m_1) = \beta(m_2)$ if and only if $\alpha(m_1) = \alpha(m_2)$.

Let $p_1$ and $p_2$ be the colors associated with $m_1$ and $m_2$, respectively. If $\alpha(m_1) = \alpha(m_2)$, then $R_{p_1 p_2}$ contains the tuples $t_1 = (u_1,u_2)$ and $t_2 = (v_1,v_2)$, where $u_i$ is the vertex that appears twice in $m_i$ and $v_i$ is the vertex that appears once. If $\beta(m_1) \neq \beta(m_2)$, then one triple among $F(t_1,t_1,t_2), F(t_1,t_2,t_1), F(t_2,t_1,t_1)$ (the one with $t_2$ at position $\alpha(m_1)$) does not contain $t_2$, which is impossible because $F$ is a persistent majority triple for $\struct{\Gamma}_{G,{\gamma}}$. Hence, $\alpha(m_1) = \alpha(m_2)$ implies $\beta(m_1) = \beta(m_2)$. Conversely, if $\alpha(m_1) \neq \alpha(m_2)$ then $t_1 = (u_1,u_2)$, $t_2 = (v_1,u_2)$, and $t_3 = (u_1,v_2)$ belong to $R_{p_1 p_2}$. If $\beta(m_1) = \beta(m_2)$, then there exists a coordinate permutation $\pi$ such that $(v_1,v_2)$ appears as the $\beta(m_1)$th entry of $F(\pi(t_1,t_2,t_3))$ despite not being present among $t_1,t_2,t_3$, a final contradiction.
\end{proof}

\begin{lemma}
\label{lem:shadowcombine}
If a graph G has a shadow-complete layering $(V_0,\ldots,V_t)$ such that each layer induces a subgraph with persistent majority number at most $c$ and each shadow has size at most $s$, then G has persistent majority number at most $3(c+1)^{s+1}$. In addition, a persistent majority coloring of $G$ matching that bound can be computed in polynomial time if a persistent majority coloring of each layer is provided in input.
\end{lemma}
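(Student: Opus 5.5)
The plan is to adapt the Dujmovi\'c--Morin--Wood argument for track layouts, with Lemma~\ref{lem:permu-out} playing the role of the ``ordering inherited from the shadow''. For each layer $i$, fix a persistent majority coloring $\gamma_i$ of $G_i$ with at most $c$ colors and a persistent majority triple $F_i$ for $\struct{\Gamma}_{G_i,\gamma_i}$. For $v \in V_i$ with $i>0$, let $C_v$ be the connected component of $G_{\geq i}$ containing $v$, and let $S_v \subseteq V_{i-1}$ be its shadow. By shadow-completeness, $S_v$ is a clique of size at most $s$.

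The coloring $\gamma(v)$ is the triple consisting of
\begin{itemize}
\item the residue $i \bmod 3$;
\item the color $\gamma_i(v)$;
\item the set $\gamma_{i-1}(S_v)$, padded to a vector of $s$ symbols over $[c]\cup\{\bot\}$.
\end{itemize}
Since $S_v$ is a clique, $\gamma_{i-1}$ is injective on it, so the third entry determines which colors occur. This uses at most $3(c+1)^{s+1}$ colors and is computable in polynomial time from the $\gamma_i$. It is proper because edges inside a layer are properly colored by $\gamma_i$, and edges between consecutive layers have distinct residues. Two facts matter. First, every relation $R_{pq}$ contains only edges inside layers of a fixed residue, or edges between layers $i-1$ and $i$ for a fixed residue pair. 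Second, if $u \in V_{i-1}$ is adjacent to $v \in V_i$, then $u \in S_v$, and $\gamma_{i-1}(u)$ is one of the colors recorded in $\gamma(v)$.

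Next I will define $F=(f_1,f_2,f_3)$ on $V(G)^3$ by cases.
\begin{itemize}
\item If $v_1,v_2,v_3$ do not all lie in a common layer, $F$ acts as in Lemma~\ref{lem:wellbehaved}: identity if $v_2\neq v_3$, and swap of the first and third coordinates otherwise. This is a coordinate permutation and makes $f_1$ majority.
\item If all three lie in $V_i$ and in the same component $C$ of $G_{\geq i}$ (or $i=0$), then $F = F_i$.
\item If all three lie in $V_i$ but not in a common component, then $F = \pi_{S_{v_1}S_{v_2}S_{v_3}}$, the coordinate permutation given by Lemma~\ref{lem:permu-out} applied to $F_{i-1}$ and the cliques $S_{v_1},S_{v_2},S_{v_3}$.
\end{itemize}
If two of the $v_j$ coincide in the third case, the permutation must still realize majority. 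This follows from Lemma~\ref{lem:permu-out} together with the fact that $F_{i-1}$ has majority first coordinate on the monochromatic triples $(u,u,w)$, assuming at least one such triple exists. If none exists, the permutation can be chosen freely, so I will choose one realizing majority. Conditions ($\dagger$) and ($\dagger\dagger$) are then immediate.

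The main work is ($\dagger\dagger\dagger$), which I will split by the type of $R_{pq}$.

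\emph{Within-layer edges, same residue.} Tuples $(a_j,b_j)$ with $a_jb_j$ an edge lie in a common layer and component, so the $a$-triple and the $b$-triple fall into the same case. If the layers are mixed, the layer pattern and equality pattern agree on both sides, so the same default permutation applies. If the component pattern is the same, $F_i$ is used on both sides and ($\dagger\dagger\dagger$) for $F_i$ applies. Otherwise the shadows coincide for $a_j$ and $b_j$, so the same $\pi$ is used.

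\emph{Edges $(u_j,v_j)$ with $u_j\in V_{i-1}$, $v_j\in V_i$.} Each $u_j$ lies in $S_{v_j}$, and all $u_j$ share a color. If the $v_j$ share a component, then $S_{v_j}$ is a common clique, so $u_1=u_2=u_3$ and the $u$-triple is invariant under any permutation. Otherwise $F(v_1,v_2,v_3)=\pi_{S_{v_1}S_{v_2}S_{v_3}}(v_1,v_2,v_3)$, and by Lemma~\ref{lem:permu-out} the same $\pi$ gives $F_{i-1}(u_1,u_2,u_3)$. That agrees with $F$ on the $u$-triple whenever the $u_j$ share a component of $G_{\geq i-1}$.

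I expect the main obstacle to be this last subcase when the $u_j$ lie in different components of $G_{\geq i-1}$, or when the layer indices of the $v_j$ differ. Then $F$ on the $u$-triple is itself a shadow permutation one level further down, and it may not match $\pi_{S_{v_1}S_{v_2}S_{v_3}}$. My first attempt will be a ``mixed layers or components'' consistency argument. The components of the $v_j$ lie inside the components of the $u_j$, so differing $u$-components force differing $v$-components. I would then try to redefine the third case recursively: choose $\pi$ for layer-$i$ triples to agree with whatever $F$ does on any monochromatic triple in the shadows, arguing inductively on $i$ via Lemma~\ref{lem:permu-out} that this is well defined. The residue mod $3$ in the color is what should prevent conflicts between non-consecutive layers.
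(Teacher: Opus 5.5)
\textbf{Verdict: the proposal has a genuine gap.} It stops exactly at the hard case, your default rule for triples spanning several layers is false, and the recursive fix you sketch only reaches part of what is needed.

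\textbf{Missing idea: induct over $G_{\le i}$, not over single layers.} This is how the paper argues. Take $F_{i-1}$ to be a triple on all of $V_{\le i-1}$, with induction hypothesis that it is a persistent majority triple for $\struct{\Gamma}_{G_{\le i-1},\gamma_{\le i-1}}$ for the \emph{global} coloring $\gamma$. Then apply Lemma~\ref{lem:permu-out} to this $F_{i-1}$ and the parent cliques. These are cliques of $G_{\le i-1}$, possibly in different layers, and monochromatic is meant with respect to $\gamma$. The resulting $\pi_{\kappa(u)\kappa(v)\kappa(w)}$ agrees with $F_{i-1}$ on every $\gamma$-monochromatic triple of the cliques, whichever case produced $F_{i-1}$ there. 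So there is no separate well-definedness to prove, and the inter-layer case of $(\dagger\dagger\dagger)$ becomes immediate. You instead apply the lemma to the layer-local triple of $G_{i-1}$ with $\gamma_{i-1}$. That only controls $u$-triples on which your $F$ happens to equal that local triple, which is exactly where you got stuck.

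\textbf{Your default for triples not inside one layer is wrong, not just unverified.} Your recursive fix only touches triples inside a single layer, so it does not reach this case. Monochromatic triples can meet layers $i$ and $i+3$. Your claim that the equality patterns agree on both sides fails, because a vertex can have several neighbours of the same color.
\begin{itemize}
\item Let layer $j$ be a path $b_ja_jb_j'$ with local color $1$ on $a_j$ and $2$ on $b_j,b_j'$, and let the edges $a_{j-1}a_j$ be the only edges between layers. This layering is shadow-complete with $s=1$.
\item Under your coloring, $(a_1,b_1),(a_4,b_4),(a_4,b_4')$ lie in one relation.
\item Your $F$ sends $(a_1,a_4,a_4)\mapsto(a_4,a_4,a_1)$ and $(b_1,b_4,b_4')\mapsto(b_1,b_4,b_4')$. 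This produces the pair $(a_4,b_1)$, which is not even an edge.
\end{itemize}

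\textbf{How the paper routes these triples.} Every monochromatic triple meeting $V_i$ whose parent cliques are not all equal goes through $\pi_{\kappa(u)\kappa(v)\kappa(w)}$, including triples spanning several layers. The identity/swap default is used only when the three parent cliques coincide but the vertices lie in different components of $G_i$ (case (ii.c)). There it must be a function of the component pattern, which both sides of a within-layer relation share, not chosen freely per triple. Relatedly, the residue mod $3$ does not separate non-consecutive layers. Its job is properness and orientation: in a relation between distinct residues, the lower endpoint is always on the same side.

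\textbf{A point in your favour.} You insist that $\pi$ be chosen to realise majority when Lemma~\ref{lem:permu-out} imposes no constraint, and this is warranted. The paper's Claim~\ref{claim:majinduc} derives a $\gamma$-monochromatic witness $(a,a,b)$ from equal signatures. Equal signatures only give equal local colors; in the example above, $\gamma(a_0)\neq\gamma(a_3)$.
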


\begin{proof}
We assume without loss of generality that $G$ is connected and $V_0 \neq \emptyset$. For each $0 \leq i \leq t$, let $\comp{G_i}$ denote the set of connected components of $G_i$, and define ${\mathcal C} = \cup_i \comp{G_i}$. For $i > 0$ and $v \in V_i$, define $\kappa(v)$ as the shadow of the (unique) graph $X \in \mathcal C$ whose vertex set contains $v$. The layering $(V_0,\ldots,V_t)$ is shadow-complete, so $\kappa(v)$ induces a clique of $G$ that we call the \emph{parent clique} of $v$. By convention, we set $\kappa(v) = \emptyset$ for all $v \in V_0$. We will use the following observation.

\renewcommand\qedsymbol{$\blacksquare$}
\begin{claim}
    \label{claim:g0}
    $G_0$ is connected and $\kappa(v) \neq \emptyset$ for all $v \in V_{\geq 1}$.
\end{claim}

\begin{proof}
Let $v_1,v_2$ be two vertices that belong to the same layer $V_i$, for some $0 \leq i \leq t$. We show that $v_1$ and $v_2$ belong to the same connected component of $G_i$ if and only if they belong to the same connected component of $G_{\geq i}$. This will establish both parts of the claim: $G_{\geq 0} = G$ is connected so $G_0$ must be connected as well, and $\kappa(v) = \emptyset$ implies that the connected component of $v \in V_i$ in $G_{\geq i}$ is also a connected component of $G$. Since $G$ is connected and $V_0$ is not empty, this is only possible if $v \in V_0$.

The forward implication is immediate, so let us assume that $v_1$ and $v_2$ belong to the same connected component of $G_{\geq i}$. Let $P = (u_1,\ldots,u_m)$ be a shortest path from $u_1 = v_1$ to $u_m = v_2$ in $G_{\geq i}$. If $u_j \in V_i$ for all $j$, then $v_1$ and $v_2$ belong to the same connected component of $G_{i}$. Otherwise, let $(u_a,\ldots,u_b)$ be a maximal subpath of $P$ that only contains vertices in $V_{\geq i+1}$. The layering $(V_0,\ldots,V_t)$ is shadow-complete and $u_{a-1},u_{b+1} \in V_i$, so $u_{a-1}u_{b+1}$ is an edge of $G$. Then, $(u_1,\ldots,u_{a-1},u_{b+1},\ldots,u_m)$ is a path from $v_1$ to $v_2$ in $G_{\geq i}$ that is shorter than $P$, which is impossible.
\end{proof}

We will construct a persistent majority coloring of $G$ recursively, starting from $G_0$ and extending it to the whole graph one layer at a time. We first define two auxiliary vertex colorings of $G$. For each $X \in {\mathcal C}$, let $\mu_X : V(X) \to [c]$ be a persistent majority coloring of $X$ with $c$ colors and $F_X$ be a persistent majority triple for $\struct{\Gamma}_{X,\mu_X}$. We call $\mu_X(v)$ the \emph{local color} of the vertex $v \in V(X)$. Define the \emph{signature} of a subset $S \subseteq V(G)$ as the set of all local colors assigned to vertices of $S$. For $i \geq 0$, the \emph{layer color} of $v \in V_i$ is $l(v) = (i\mod3)$.

Let $\gamma$ be a vertex coloring of $G$ such that two vertices have the same color if and only if they have the same local color, layer color, and signature of their respective parent cliques. Two adjacent vertices may only have the same local color if they belong to consecutive layers, and in that case they have different layer colors. Thus, $\gamma$ is a proper vertex coloring of $G$. There are at most $c$ distinct local colors, $3$ distinct layer colors, and $(c+1)^{s}$ possible signatures for parent cliques (because every shadow has size at most $s$), so $\gamma$ uses at most $3c(c+1)^{s} \leq 3(c+1)^{s+1}$ colors in total.

For $i \geq 0$, let $\gamma_{\leq i}$ denote the restriction of $\gamma$ to $V_{\leq i}$. We prove by induction on $i$ that $\struct{\Gamma}_{G_{\leq i},\gamma_{\leq i}}$ has a persistent majority triple $F_i = (f_1^i,f_2^i,f_3^i)$, starting from $i=0$. For the base case, notice that all vertices in $V_0$ have the same layer color and parent clique. Thus, $\gamma_{\leq 0}$ defines the same vertex partition as the local coloring. By Claim~\ref{claim:g0} we know that $G_0$ is connected, so $F_0 = F_{G_0}$ is a persistent majority triple for $\struct{\Gamma}_{G_{\leq 0},\gamma_{\leq 0}}$. 

Now, let $i > 0$ and suppose that $\struct{\Gamma}_{G_{\leq i-1},\gamma_{\leq i-1}}$ has a persistent majority triple $F_{i-1}$. We define $F_i(u,v,w)$ only for monochromatic triples $(u,v,w) \in V_{\leq i}^3$. We distinguish four cases.

\begin{itemize}
\item[(i)] \textbf{$u$, $v$, and $w$ all belong to $V_{\leq {i-1}}$.} In this case, we set 
\[
F_{i}(u,v,w) = F_{i-1}(u,v,w).
\]
\item[(ii)] \textbf{At least one of $u$, $v$, $w$ belongs to $V_i$.} Since $i > 0$ and $u,v,w$ are of the same color, $\kappa(u)$, $\kappa(v)$, and $\kappa(w)$ are non-empty (by Claim~\ref{claim:g0}) and have the same signature. There are three sub-cases.
\begin{itemize}
\item[(ii.a)] \textbf{$\kappa(u), \kappa(v)$, and $\kappa(w)$ are not all equal.} We set
\[
F_{i}(u,v,w) = \pi_{\kappa(u)\kappa(v)\kappa(w)}(u,v,w)
\]
where $\pi_{\kappa(u)\kappa(v)\kappa(w)}$ is as in Lemma~\ref{lem:permu-out}, with $F = F_{i-1}$.
\item[(ii.b)] \textbf{$\kappa(u), \kappa(v)$, and $\kappa(w)$ are all equal and $u,v,w \in V(X)$ for some $X \in {\mathcal C}$.} We set 
\[
F_{i}(u,v,w) = F_X(u,v,w).
\]
\item[(ii.c)] \textbf{None of the cases above applies.} Then, $\kappa(u)$, $\kappa(v)$, and $\kappa(w)$ are all equal and there exists $X \in {\mathcal C}$ such that $V(X)$ contains exactly one vertex among $u$, $v$ and $w$. We set
\[
F_{i}(u,v,w) = 
\begin{cases}
    (u,v,w) \text{ if } u \notin V(X)\\
    (w,v,u) \text{ otherwise.}
\end{cases}
\]
\end{itemize}
\end{itemize}

We prove that $F_i$ is a persistent majority triple for $\struct{\Gamma}_{G_{\leq i},\gamma_{\leq i}}$. There are three properties to verify, corresponding to (${\dagger}$), (${\dagger}{\dagger}$), and (${\dagger}{\dagger}{\dagger}$) in the alternative definition above Lemma~\ref{lem:permu-out}. We omit the proof for (${\dagger}{\dagger}$) (each tuple $F_i(u_1,u_2,u_3)$ is a permutation of $(u_1,u_2,u_3)$) because it is immediate in all cases.

\begin{claim}
    \label{claim:majinduc}
    The operation $f_1^i$ is majority.
\end{claim}

\begin{proof}
    This is equivalent to the statement that the first entry of $F_i(u,u,v)$, $F_i(u,v,u)$, and $F_i(v,u,u)$ is $u$ for any choice of $u,v \in V_{\leq i}$. This is true by the induction hypothesis for all triples that fall in case (i). Case (ii.b) is also immediate. Case (ii.c) can only happen if $u \neq v$, and then we have $v \in V(X)$, $u \notin V(X)$. Thus, the first entry must be $u$. 
    
    For the final case (ii.a), suppose that the triple is of the form $(u,u,v)$. Neither $\kappa(u)$ nor $\kappa(v)$ is empty, and $\kappa(u) \neq \kappa(v)$.  In addition, $\kappa(u)$ and $\kappa(v)$ have the same signature so $\kappa(u) \times \kappa(u) \times \kappa(v)$ contains a monochromatic triple of the form $(a,a,b)$ with $a \neq b$. We have $\pi_{\kappa(u)\kappa(u)\kappa(v)}(a,a,b) = F_{i-1}(a,a,b)$, so by the induction hypothesis the first entry of $\pi_{\kappa(u)\kappa(u)\kappa(v)}(u,u,v)$ must be $u$. The argument for triples of the form $(u,v,u)$ and $(v,u,u)$ is symmetrical. 
\end{proof}

\begin{claim}
    \label{claim:tupleinduc}
    For any relation $R_{p_1p_2} \in \struct{\Gamma}_{G_{\leq i},\gamma_{\leq i}}$ and $(u_1,v_1),(u_2,v_2),(u_3,v_3) \in R_{p_1p_2}$, there exists a coordinate permutation $\pi$ such that $F_{i}(u_1,u_2,u_3) = \pi(u_1,u_2,u_3)$ and $F_{i}(v_1,v_2,v_3) = \pi(v_1,v_2,v_3)$.
\end{claim}

\begin{proof}
    Let $l_1$ be the layer color of $u_1,u_2,u_3$ and $l_2$ be the layer color of $v_1,v_2,v_3$. 

    \begin{itemize}
        \item If $l_1 = l_2$, then each edge $u_jv_j$ belongs to some connected component $X_j \in {\mathcal C}$. Since $u_j$ and $v_j$ also share the same parent clique, $F_{i}(u_1,u_2,u_3)$ and $F_{i}(v_1,v_2,v_3)$ are determined by the same case. For cases (ii.a) and (ii.c), $F_{i}$ maps $(u_1,u_2,u_3)$ and $(v_1,v_2,v_3)$ according to the same coordinate permutation, so the claim holds. For case (i), we have $F_{i}(u_1,u_2,u_3) = F_{i-1}(u_1,u_2,u_3)$ and $F_{i}(v_1,v_2,v_3) = F_{i-1}(v_1,v_2,v_3)$. The tuples $(u_1,v_1)$, $(u_2,v_2)$, $(u_3,v_3)$ all belong to $R_{p_1p_2} \cap (V_{\leq i-1})^2 \in \struct{\Gamma}_{G_{\leq i-1},\gamma_{\leq i-1}}$, so the claim holds by the induction hypothesis. Case (ii.b) follows from a similar argument, as we have $F_{i}(u_1,u_2,u_3) = F_{X}(u_1,u_2,u_3)$, $F_{i}(v_1,v_2,v_3) = F_{X}(v_1,v_2,v_3)$, and $R_{p_1p_2} \cap (V(X))^2$ is a subset of a relation in $\struct{\Gamma}_{X,\mu_X}$.

        \item If $l_1 \neq l_2$, then each edge $u_jv_j$ has $u_j$ and $v_j$ in consecutive layers. The structure of the layer coloring ensures that either $v_j \in \kappa(u_j)$ for all $j$, or $u_j \in \kappa(v_j)$ for all $j$. Without loss of generality, we will suppose that $v_j \in \kappa(u_j)$ for all $j$. This implies in particular $v_1, v_2, v_3 \in V_{\leq i-1}$ and hence $F_{i}(v_1,v_2,v_3) = F_{i-1}(v_1,v_2,v_3)$.

        \begin{itemize}
            \item If $u_1, u_2, u_3 \in V_{\leq i-1}$, then $F_{i}((u_1,v_1),(u_2,v_2),(u_3,v_3)) = F_{i-1}((u_1,v_1),(u_2,v_2),(u_3,v_3))$ and the claim holds by the induction hypothesis.

            \item If $\kappa(u_1) = \kappa(u_2) = \kappa(u_3)$, then $v_1 = v_2 = v_3$ because a clique of $G$ contains at most one vertex of each color. In this case, $F_{i}(v_1,v_2,v_3) = \pi(v_1,v_2,v_3)$ for any coordinate permutation $\pi$. Since $F_i$ satisfies condition (${\dagger}{\dagger}$), we can select $\pi$ such that $F_{i}(u_1,u_2,u_3) = \pi(u_1,u_2,u_3)$ and the claim holds.
            
            \item Otherwise, $F_{i}(u_1,u_2,u_3)$ is determined by case (ii.a). By Lemma~\ref{lem:permu-out}, we have $F_{i}(v_1,v_2,v_3) = F_{i-1}(v_1,v_2,v_3) = \pi_{\kappa(u_1)\kappa(u_2)\kappa(u_3)}(v_1,v_2,v_3)$. Since $F_{i}(u_1,u_2,u_3) = \pi_{\kappa(u_1)\kappa(u_2)\kappa(u_3)}(u_1,u_2,u_3)$, the claim holds as well.
        \end{itemize}
    \end{itemize}
\end{proof}
\renewcommand\qedsymbol{$\square$}

By Claim~\ref{claim:majinduc} and Claim~\ref{claim:tupleinduc}, $F_i$ is a persistent majority triple for $\struct{\Gamma}_{G_{\leq i},\gamma_{\leq i}}$. By induction, this is true for all $i \geq 0$. For $i = t$, we obtain that $F_t$ is a persistent majority triple for $\struct{\Gamma}_{G,\gamma}$. The coloring $\gamma$ uses at most $3(c+1)^{s+1}$ colors and can be constructed in polynomial time.
\end{proof}

\begin{theorem}
\label{thm:topobound}
Let $H$ be a graph and ${\mathcal H}$ be a family of graphs that excludes $H$ as a topological minor. Then, there exists a constant $k = k(H)$ such that every graph in ${\mathcal H}$ has persistent majority number at most $k$. Moreover, there exists a polynomial-time algorithm that takes as input a graph $G$ and outputs either a persistent majority coloring of $G$ with at most $k$ colors or a subdivision of $H$ in $G$.
\end{theorem}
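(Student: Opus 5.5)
We combine Lemma~\ref{lem:rich}, Lemma~\ref{lem:layers}, and Lemma~\ref{lem:shadowcombine}, with Lemma~\ref{lem:minor}, Lemma~\ref{lem:degree}, and Lemma~\ref{lem:wellbehaved} handling the base case. By the first item of Lemma~\ref{lem:wellbehaved}, it suffices to color the supergraph $G'$ of Lemma~\ref{lem:rich}. Applied to $G$, that lemma either returns a subdivision of $H$ or returns $G'$ with a $k_2$-rich tree decomposition $(T,\beta)$. Every bag of $(T,\beta)$ either excludes $K_{k_2}$ as a minor or has at most $k_2$ vertices of degree greater than $k_2$. Restricting a coloring of $G'$ to $G$ keeps it a persistent majority coloring with the same number of colors.

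We then prove, by induction on $r$ from $0$ to $k_2$, a statement $P(r)$. It concerns any graph $G''$ that has an $r$-rich tree decomposition contained in $(T,\beta)$. $P(r)$ asserts that $\majcover{G''} \le c_r$ for a constant $c_r$ depending only on $k_2$, and that a coloring achieving this bound can be computed in polynomial time.

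\textbf{Base case $r=0$.} A $0$-rich decomposition has empty adhesions, so every connected component of $G''$ lies inside a single bag of $(T,\beta)$. That component is then a subgraph of $G'[\beta(y)]$ for some $y$. If this bag excludes $K_{k_2}$ as a minor, Lemma~\ref{lem:minor} gives a coloring with $O(1)$ colors. Lemma~\ref{lem:minor} can report a $K_{k_2}$-minor instead, but then the bag must be in the bounded-degree case. In that case we delete the at most $k_2$ high-degree vertices. The rest has maximum degree at most $k_2$, so Lemma~\ref{lem:degree} gives at most $k_2^2+1$ colors, and the second item of Lemma~\ref{lem:wellbehaved} (whose proof is constructive) adds at most $k_2$ more. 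Subgraphs inherit colorings by the first item of Lemma~\ref{lem:wellbehaved}. Components can share one palette, because the relations $R_{ij}$ of a disjoint union have no tuples that mix components. We then define the triple componentwise on monochromatic triples inside one component. On triples that mix components we use the rule $(u,v,w)\mapsto(u,v,w)$ or $(w,v,u)$ from the proof of Lemma~\ref{lem:wellbehaved}. This step needs a short check, in the style of case (ii.c), that the majority property and condition $({\dagger}{\dagger}{\dagger})$ survive for tuples of $R_{ij}$ drawn from different components.

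\textbf{Inductive step.} For $r \ge 1$, Lemma~\ref{lem:layers} gives a shadow-complete layering of $G''$ with all shadows of size at most $r$. Each layer has an $(r-1)$-rich decomposition contained in the one for $G''$, and hence contained in $(T,\beta)$, since containment is transitive. By induction each layer has a computable coloring with at most $c_{r-1}$ colors. Lemma~\ref{lem:shadowcombine} then yields $c_r = 3(c_{r-1}+1)^{r+1}$. Setting $k = c_{k_2}$ gives the bound, and all steps run in polynomial time because the recursion depth is the constant $k_2$.

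\textbf{Main obstacle.} The delicate point is the base case. We must make sure that the pieces reached at the bottom of the recursion really inherit the bag structure; containment in $(T,\beta)$ is what guarantees this. We must also handle disconnected layers, or alternatively apply Lemma~\ref{lem:shadowcombine} componentwise, since its proof assumes connectivity and uses one triple $F_X$ per component. If combining components by hand proves awkward, the fallback is to apply everything to each connected component of $G$ separately and reuse the case (ii.c) style gluing.
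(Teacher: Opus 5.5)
Your plan follows the paper's proof: pass to the supergraph $G'$ of Lemma~\ref{lem:rich}, bound the bags via Lemmas~\ref{lem:minor}, \ref{lem:degree} and~\ref{lem:wellbehaved}, induct on the richness with Lemmas~\ref{lem:layers} and~\ref{lem:shadowcombine} (recurrence $c_r = 3(c_{r-1}+1)^{r+1}$), and restrict back to $G$.

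The paper simply asserts $p(0)\le k_3$ and assumes connectivity ``without loss of generality'' in Lemma~\ref{lem:shadowcombine}. Both tacitly use the fact that a disjoint union of graphs with persistent majority number at most $c$ again has persistent majority number at most $c$, so you are right to isolate this step. Disconnected \emph{layers} are not actually the issue: Lemma~\ref{lem:shadowcombine} already handles them through the per-component triples $F_X$ and case (ii.c). Only connectivity of the whole graph is assumed there.

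\textbf{The gluing rule you name for this step fails.} The rule in the proof of Lemma~\ref{lem:wellbehaved} chooses between $(u,v,w)$ and $(w,v,u)$ according to whether $v=w$, i.e.\ by the equality pattern. Here is a counterexample:
\begin{itemize}
\item Let component $A$ be a path $b\,a\,b'$ with $a$ colored $i$ and $b,b'$ colored $j$. This is a $2$-track layout, hence a persistent majority coloring by Lemma~\ref{lem:stack}.
\item Let component $B$ be an edge $a_1b_1$ with the same two colors.
\item Take $(a_1,b_1),(a,b),(a,b')\in R_{ij}$. Your rule sends $(a_1,a,a)$ to $(a,a,a_1)$ but fixes $(b_1,b,b')$.
\item This produces the pair $(a,b_1)$, which is not even an edge, so $({\dagger}{\dagger}{\dagger})$ fails.
\end{itemize}
In Lemma~\ref{lem:wellbehaved} that rule survives only because each deleted vertex has its own color, which makes every relation involving it constant in one coordinate.

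\textbf{The fix.} Use the rule of case (ii.c), which depends only on the component pattern. On triples not contained in a single component, swap the first and third entries if the first entry's component contains neither the second nor the third entry, and apply the identity otherwise.
\begin{itemize}
\item This keeps $f_1$ majority: among $(u,u,v)$, $(u,v,u)$, $(v,u,u)$ only the last is swapped, and it becomes $(u,u,v)$.
\item Since $a_j$ and $b_j$ always lie in the same component, the two coordinate triples receive the same permutation, which gives $({\dagger}{\dagger}{\dagger})$.
\end{itemize}
With this replacement your argument is complete, and it makes explicit a step the paper leaves implicit.
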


\begin{proof}
Let $G \in \mathcal H$. By Lemma~\ref{lem:rich}, $G$ is a subgraph of a graph $G'$ that has a $k_2$-rich tree decomposition $(T, \beta)$ whose bags either exclude $K_{k_2}$ or contain at most $k_2$ vertices of degree greater than $k_2$. Combining Lemma~\ref{lem:wellbehaved}, Lemma~\ref{lem:minor}, and Lemma~\ref{lem:degree}, each bag in $(T, \beta)$ induces a subgraph of $G'$ whose persistent majority number is at most $k_3 = f(k_2)$ for a certain function $f$. If $p(j)$ denotes the maximum persistent majority number of a graph that has a $j$-rich tree decomposition whose bags have persistent majority number at most $k_3$, then we have $p(0) \leq k_3$ and $p(j) \leq 3 (p(j-1)+1)^{j+1}$ for all $j>0$ by Lemma~\ref{lem:layers} and Lemma~\ref{lem:shadowcombine}. This yields the loose upper bound $\majcover{G'} \leq p(k_3) \leq (4(k_3+1))^{(k_3+1)!}$. Since $G$ is a subgraph of $G'$, we can apply Lemma~\ref{lem:wellbehaved} again and derive $\majcover{G} \leq (4(k_3+1))^{(k_3+1)!}$. A polynomial-time algorithm for computing a persistent majority coloring of $G$ that respects this bound (or a subdivision of $H$ in $G$, if we do not have the promise $G \in \mathcal H$) follows from the algorithmic part of the lemmas involved.
\end{proof}

We can now prove Theorem~\ref{thm:valhom}, which we restate below.

\thmtopo*

\begin{proof}
The theorem follows from Lemma~\ref{lem:algo} and Theorem~\ref{thm:topobound}.
\end{proof}

\section{Odd cycle homomorphism}

In this section we prove Theorem~\ref{thm:randodd}. Once again, the idea is to reduce $\ghom{C_{2k+1}}$ to an exponential number of constraint satisfaction problems over a tractable language. In order to minimize the base of the exponent, we use a relaxation of persistent majority colorings that relies on a weaker form of tractability and allow the subproblems to have overlapping solution sets. This property makes it possible to achieve runtimes of the form $O(c^n)$ with $c < 2$ via sampling.

Let $k \geq 1$. By convention, we suppose that the vertex set of $C_{2k+1}$ is $[2k+1]$ and vertices are numbered according to a fixed orientation of the cycle, i.e. all edges other than $\{1,2k+1\}$ are of the form $\{i,j\}$ with $|i-j|=1$. Define ${\mathcal S}_k = \{S_k,S_k^A,S_k^B\}$, where $S_k$ is the set of all vertices $i > 1$, $S_k^A$ is the set of all odd vertices, and $S_k^B$ is the set of all even vertices together with the vertex $1$. The family ${\mathcal S}_k$ is illustrated in Figure~\ref{fig:sets} for $k=4$. 

The crisp constraint language $\struct{\Gamma}_{C_{2k+1},{\mathcal S}_k}$ is composed of the six relations of the form $R_{U_1U_2} = \{ (u,v) \mid (u \in U_1) \land (v \in U_2) \land (uv \in E(C_{2k+1}))\}$ with $U_1,U_2 \in {\mathcal S}_k$.

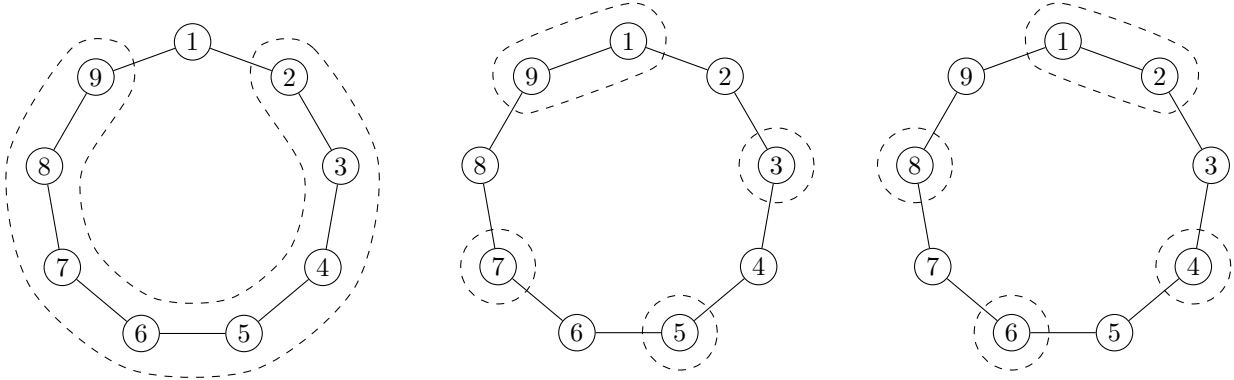
\begin{figure}
\begin{center}
\begin{minipage}[t]{0.3\textwidth}
\begin{tikzpicture}[every node/.style={circle, draw, inner sep=2pt}, scale=2]
\foreach \i in {1,...,9} {
    \node (v\i) at ({90 - 360/9*(\i-1)}:1) {$\i$};
}
\foreach \i [evaluate=\i as \j using {int(mod(\i,9)+1)}] in {1,...,9} {
    \draw (v\i) -- (v\j);
}
\draw[dashed] plot[smooth cycle] coordinates {
    (65:1.08)
    (50:1.25)
    (10:1.25)
    (-30:1.25)
    (-70:1.25)
    (-110:1.25)
    (-150:1.25)
    (-190:1.25)
    (-230:1.25)
    (-245:1.08)
    (-240:0.8)
    (-190:0.75)
    (-150:0.75)
    (-110:0.75)
    (-70:0.75)
    (-30:0.75)
    (10:0.75)
    (60:0.8)
};
\end{tikzpicture}
\end{minipage}
\hfill
\begin{minipage}[t]{0.3\textwidth}
\begin{tikzpicture}[every node/.style={circle, draw, inner sep=2pt}, scale=2]
\foreach \i in {7,5,3} {
    \node[draw, dashed, inner sep=10pt] (lab\i) at ({90 - 360/9*(\i-1)}:1) {};
}
\node[draw=none] (ghostleft) at (-190:1.25) {};
\node[draw=none] (ghostbot) at (-90:1.21) {};
\foreach \i in {1,...,9} {
    \node (v\i) at ({90 - 360/9*(\i-1)}:1) {$\i$};
}
\foreach \i [evaluate=\i as \j using {int(mod(\i,9)+1)}] in {1,...,9} {
    \draw (v\i) -- (v\j);
}
\draw[dashed] plot[smooth cycle] coordinates {
    (78:1.13)
    (90:1.25)
    (130:1.25)
    (142:1.13)
    (144:0.89)
    (130:0.75)
    (90:0.75)
    (76:0.89)
};
\end{tikzpicture}
\end{minipage}
\hfill
\begin{minipage}[t]{0.3\textwidth}
\begin{tikzpicture}[every node/.style={circle, draw, inner sep=2pt}, scale=2]
\foreach \i in {4,6,8} {
    \node[draw, dashed, inner sep=10pt] (lab\i) at ({90 - 360/9*(\i-1)}:1) {};
}
\node[draw=none] (ghostleft) at (-190:1.25) {};
\node[draw=none] (ghostbot) at (-90:1.21) {};
\foreach \i in {1,...,9} {
    \node (v\i) at ({90 - 360/9*(\i-1)}:1) {$\i$};
}
\foreach \i [evaluate=\i as \j using {int(mod(\i,9)+1)}] in {1,...,9} {
    \draw (v\i) -- (v\j);
}
\draw[dashed] plot[smooth cycle] coordinates {
    (102:1.13)
    (90:1.25)
    (50:1.25)
    (38:1.13)
    (36:0.89)
    (50:0.75)
    (90:0.75)
    (104:0.89)
};
\end{tikzpicture}
\end{minipage}
\end{center}
\caption{The sets $S_4$, $S_4^A$ and $S_4^B$ (from left to right) for $C_9$.}
\label{fig:sets}
\end{figure}

\begin{lemma}
\label{lem:majodd}
$\struct{\Gamma}_{C_{2k+1},{\mathcal S}_k}$ has a majority polymorphism.
\end{lemma}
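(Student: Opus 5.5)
The plan is to write down an explicit ternary operation $m$ on $[2k+1]$, check that it is majority, and then check it against each relation $R_{U_1U_2}$ with $U_1,U_2 \in \mathcal S_k$. All relations are binary, so preservation means the following: whenever $(a_1,b_1),(a_2,b_2),(a_3,b_3) \in R_{U_1U_2}$, the pair $(m(a_1,a_2,a_3),m(b_1,b_2,b_3))$ is again an edge of $C_{2k+1}$ with first coordinate in $U_1$ and second in $U_2$. If two of the three tuples coincide, majority settles this, so only triples with pairwise distinct tuples matter.

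First I would describe each relation as a bipartite graph between a left copy of $U_1$ and a right copy of $U_2$. The key observation is that removing the vertex $1$ from one side breaks the cycle. Concretely:
\begin{itemize}
\item $R_{S_kS_k}$ is the edge relation of the path $2,3,\ldots,2k+1$.
\item $R_{S_k^AS_k^A}=\{(1,2k+1),(2k+1,1)\}$, because all other cycle edges join an odd vertex to an even one.
\item $R_{S_k^AS_k^B}$ and $R_{S_k^BS_k^A}$ join odd vertices to even vertices, together with the edge $\{1,2k+1\}$.
\item In $R_{S_k^BS_k}$ the vertex $1$ occurs only on the left, with neighbours $2$ and $2k+1$.
\end{itemize}
In every case the bipartite graph is a disjoint union of paths, i.e.\ a forest. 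A forest relation is preserved by a majority operation that acts as a ``median'' along each path. So for each relation separately there is a natural orientation of the domain for which the median works.

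Next I would fix the candidate. On triples of pairwise distinct elements not containing $1$, set $m(a,b,c)=\text{median}(a,b,c)$ in the natural order. On triples containing $1$, define $m$ by a small case table; the natural choice is to treat $1$ as lying ``just after $2k+1$'' or ``just before $2$'', depending on the parities of the other two entries. On triples with a repeated entry, $m$ is forced by the majority condition.

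The relations $R_{S_kS_k}$ and $R_{S_k^AS_k^A}$ then follow directly. For the first, the natural median on a path preserves its edge relation. For the second, it contains only two tuples, so every triple of its tuples has a repeated tuple.

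The main obstacle is the relations involving both $1$ and the remaining path simultaneously, namely $R_{S_k^BS_k}$, $R_{S_kS_k^B}$ and the odd--even relations. There, a single global order fails: for instance, in $R_{S_k^BS_k}$ the left vertex $1$ is matched to both ends $2$ and $2k+1$ of the right-side order. I would handle this by exploiting the parity constraints. In $R_{S_k^BS_k}$ the left coordinates are $1$ or even, so a triple $(1,x,y)$ on the left has $x,y$ even, and the right coordinates are then determined up to $\pm1$. I would choose $m(1,x,y)$ for $x,y$ even, and $m(1,x,y)$ for $x,y$ odd, separately: in each parity class, $1$ is placed at whichever end of the order makes the corresponding path monotone. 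Triples with $x,y$ of mixed parity never arise in the same relation on that coordinate, which gives the freedom to do this consistently. I would first try to verify this by a direct case check over the three possible positions of $1$ among the three tuples, using symmetry under transposition to halve the work.
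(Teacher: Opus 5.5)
\textbf{There is a genuine gap.} The operation you fix is not a polymorphism, and it fails in the part you treat as routine.

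Taking the natural median on every pairwise distinct triple avoiding $1$ ignores parity. For $k \geq 2$ this already breaks $R_{S_kS_k}$. The tuples $(2,3),(4,3),(3,4)$ belong to $R_{S_kS_k}$. The second coordinates $(3,3,4)$ force the value $3$ by majority, so \emph{any} majority polymorphism must send $(2,4,3)$ into $\{2,4\}$. The median gives $3$, and $(3,3)$ is not an edge.

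So your claim that the natural median preserves the edge relation of the path $2,\ldots,2k+1$ is false. The median does preserve monotone ``staircase'' relations. But the symmetric edge relation of a path splits, as a bipartite graph, into two paths (even-left/odd-right and odd-left/even-right). On triples of tuples drawn from both components, the median ignores which component holds the majority.

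The same defect breaks the relations through $1$, however you fill in the case table for triples containing $1$. Take $(1,2k+1),(1,2),(5,4) \in R_{S_k^AS_k}$. The first coordinates give $1$ by majority. The second coordinates $(2k+1,2,4)$ avoid $1$ and have median $4$, and $(1,4)$ is not an edge. The offending value sits on a coordinate triple that your plan has already committed to the median, so adjusting where $1$ is placed cannot repair it.

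\textbf{The missing idea} is a parity-sensitive rule on mixed-parity triples; this is the heart of the paper's construction. The paper's $f$ is symmetric and equals the median when all three entries have the same parity. When $a \geq b$ share a parity and $c$ has the other, $f$ returns $2k+3-c$ clamped into the interval $[b,a]$. In words, the odd-one-out is replaced by its image under the reflection of the cycle fixing vertex $1$ (with $1$ itself sent above every vertex) and then clamped. The output therefore always has the majority parity.

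This gives $f(2,4,3)=4$ and $f(2k+1,2,4)=2$, which repairs both examples. On triples $(1,x,y)$ it realizes your intended placement of $1$: $f(1,x,y)=\max(x,y)$ for $x,y$ even and $\min(x,y)$ for $x,y$ odd.

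The verification then has three parts:
\begin{enumerate}
\item $f$ preserves each set in $\mathcal S_k$.
\item $f$ maps three edges to an edge whenever $\{1,2k+1\}$ is not among them. This uses monotonicity of the median in the same-parity case, and a clamping-plus-parity argument otherwise.
\item When $\{1,2k+1\}$ is used, the case is handled because neither coordinate triple contains $1$ together with two larger vertices of different parity. This is exactly your observation that such triples never occur inside a single set of $\mathcal S_k$.
\end{enumerate}
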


\begin{proof}
Let $e = \{1,2k+1\}$ be the only edge in $C_{2k+1}$ with endpoints of the same parity. Call a triple $(u_1,u_2,u_3) \in [2k+1]^3$ a \emph{bad triple} if there is no set $S \in {\mathcal S}_k$ such that $\{u_1,u_2,u_3\} \subseteq S$. (Equivalently, $(u_1,u_2,u_3)$ is a bad triple if the set $\{u_1,u_2,u_3\}$ contains $1$ alongside two vertices strictly greater than $1$, exactly one of which is even.) For $a,b,c \in \mathbb{N}$ with $a \geq b$, let 
\[
\prescript{a}{}{\lceil} c \rfloor_b =
\begin{cases}
    a \text{ if } c > a\\
    b \text{ if } c < b\\
    c \text{ otherwise }
\end{cases}
\]
and define $f : [2k+1]^3 \to [2k+1]$ as the unique operation such that
\begin{itemize}
    \item $f$ is invariant under any permutation of its arguments;
     \item $f(a,b,c) = \text{median}(a,b,c)$ if $a$, $b$, $c$ all have the same parity;
    \item $f(a,b,c) = \prescript{a}{}{\lceil} 2k + 3 - c \rfloor_b$ if $a \geq b$ and the parity of $c$ differs from that of both $a$ and $b$.
\end{itemize}
The operation $f$ is majority. We will prove that $f$ is a polymorphism of $\struct{\Gamma}_{C_{2k+1},{\mathcal S}_k}$ by combining the following claims.

\begin{claim}
\label{claim:fpreserves}
If $S \in {\mathcal S}_k$ and $u_1,u_2,u_3 \in S$, then $f(u_1,u_2,u_3) \in S$.
\end{claim}

\renewcommand\qedsymbol{$\blacksquare$}
\begin{proof}
Observe that $f$ can only map a triple $(u_1,u_2,u_3)$ to $1$ if $1 \in \{u_1,u_2,u_3\}$. Therefore, the claim holds for $S = S_k$. If $S = S_k^A$ instead, then $f(u_1,u_2,u_3) = \text{median}(u_1,u_2,u_3) \in \{u_1,u_2,u_3\}$ and the claim holds as well. Finally, if $S = S_k^B$ then either:
\begin{itemize}
    \item at least two vertices among $u_1,u_2,u_3$ are equal and $f(u_1,u_2,u_3) \in \{u_1,u_2,u_3\}$ by majority, or
    \item all three vertices are even and $f(u_1,u_2,u_3) = \text{median}(u_1,u_2,u_3) \in \{u_1,u_2,u_3\}$, or
    \item one vertex is $1$ and the other two are even, in which case $f(u_1,u_2,u_3)$ is the largest even vertex among $u_1,u_2,u_3$.
\end{itemize}
In all cases, $f(u_1,u_2,u_3) \in S$.
\end{proof}

\begin{claim}
\label{claim:e}
    If $u_1v_1$, $u_2v_2$, $u_3v_3$ are edges of $C_{2k+1}$ such that $e \notin \{u_1v_1,u_2v_2,u_3v_3\}$, then $f(u_1,u_2,u_3)f(v_1,v_2,v_3)$ is an edge of $C_{2k+1}$.
\end{claim}

\begin{proof}
The only edge of $C_{2k+1}$ whose endpoints are not consecutive numbers in $\mathbb{N}$ is $e$. Therefore, if $e$ does not appear in $u_1v_1, u_2v_2, u_3v_3$ then for all $i \leq 3$ we have either $v_i = u_i - 1$ or $v_i = u_i + 1$. In particular, $u_i$ and $v_i$ always have opposite parity. We distinguish two cases.

\begin{itemize}
\item $u_1,u_2,u_3$ all have the same parity. Then, the same is true for $v_1,v_2,v_3$ and we have $f(u_1,u_2,u_3) = \text{median}(u_1,u_2,u_3)$, $f(v_1,v_2,v_3) = \text{median}(v_1,v_2,v_3)$. If there exist two indices $i,j$ such that $u_i = u_j$, then $|v_i - v_j| \in \{0,2\}$ and hence one of $v_i,v_j$ is the median of $(v_1,v_2,v_3)$. It follows that $f(u_1,u_2,u_3)f(v_1,v_2,v_3)$ is either $u_iv_i$ or $u_jv_j$. In both cases, $f(u_1,u_2,u_3)f(v_1,v_2,v_3)$ is an edge of $C_{2k+1}$. On the other hand, if $u_1,u_2,u_3$ are all distinct then $u_i < u_j$ implies $v_i \leq v_j$ (for any choice of $i,j$). Consequently, $\text{median}(u_1,u_2,u_3) = u_i$ implies $\text{median}(v_1,v_2,v_3) = v_i$ and hence $f(u_1,u_2,u_3)f(v_1,v_2,v_3)$ is an edge of $C_{2k+1}$ as well.

\item $u_1,u_2,u_3$ do not have the same parity. Again, the same is true for $v_1,v_2,v_3$. Since $f$ is invariant under any permutation of its arguments, we can assume without loss of generality that $u_1,u_2$ have the same parity, $u_1 \geq u_2$, and $v_1 \geq v_2$. Let $w_u = 2k + 3 - u_3$ and $w_v = 2k + 3 - v_3$. The vertices $u_3$ and $v_3$ are consecutive numbers in $\mathbb{N}$, so $w_u$ and $w_v$ are consecutive as well and $w_uw_v$ is an edge of $C_{2k+1}$. We have $f(u_1,u_2,u_3) = \prescript{u_1}{}{\lceil} w_u \rfloor_{u_2}$ and $f(v_1,v_2,v_3) = \prescript{v_1}{}{\lceil} w_v \rfloor_{v_2}$, so if $f(u_1,u_2,u_3)f(v_1,v_2,v_3)$ is not an edge of $C_{2k+1}$ then either $w_u \notin [u_2,u_1]$ or $w_v \notin [v_2,v_1]$. If $w_u > u_1$, then $w_v \geq w_u - 1 > u_1 - 1 \geq v_1 - 2$. Since $w_v$ and $v_1$ have the same parity, we deduce $w_v \geq v_1$ and hence $f(u_1,u_2,u_3)f(v_1,v_2,v_3) = u_1v_1$ is an edge of $C_{2k+1}$. The cases $w_u < u_2$, $w_v < v_2$ and $w_v > v_1$ are symmetrical.
\end{itemize}
\end{proof}

\begin{claim}
\label{claim:bad}
    If $u_1v_1$, $u_2v_2$, $u_3v_3$ are three edges of $C_{2k+1}$ such that neither $(u_1,u_2,u_3)$ nor $(v_1,v_2,v_3)$ is a bad triple, then $f(u_1,u_2,u_3)f(v_1,v_2,v_3)$ is an edge of $C_{2k+1}$.
\end{claim}

\begin{proof} By Claim~\ref{claim:e}, it is sufficient to prove Claim~\ref{claim:bad} in the case where $\{u_1v_1,u_2v_2,u_3v_3\}$ contains $e$. Since $f$ is invariant under permutation of its arguments, we can further assume that $u_1 = 1$ and $v_1 = 2k+1$. We proceed again by case analysis.

\begin{itemize}
    \item There exists $i \in \{2,3\}$ such that $u_i = 1$. Then, we have either $v_i = 2$ or $v_i = 2k+1$. If $v_i = 2k+1$ then $f(u_1,u_2,u_3)f(v_1,v_2,v_3) = e$ (because $f$ is majority) and the claim follows. If $v_i = 2$ instead, then $v_1$, $v_2$, $v_3$ do not all have the same parity and either $f(v_1,v_2,v_3) = 2$ (if two vertices among $v_1$, $v_2$, $v_3$ are even) or $f(v_1,v_2,v_3) = 2k+1$ (if two vertices among $v_1$, $v_2$, $v_3$ are odd). In both cases, $f(u_1,u_2,u_3)f(v_1,v_2,v_3)$ is an edge of $C_{2k+1}$.
    \item Both $u_2$ and $u_3$ are odd and strictly greater than $1$. We have three sub-cases.
        \begin{itemize}
        \item Both $v_2$ and $v_3$ are even. We can assume without loss of generality that $u_2 \geq u_3$ and $v_2 \geq v_3$. We have $f(u_1,u_2,u_3) = \text{median}(1,u_2,u_3) = u_3$ and $f(v_1,v_2,v_3) = \prescript{v_2}{}{\lceil} 2k + 3 - (2k+1) \rfloor_{v_3} = v_3$, hence $f(u_1,u_2,u_3)f(v_1,v_2,v_3) = u_3v_3$ is an edge of $C_{2k+1}$.
        \item Exactly one vertex among $v_2,v_3$ is even. We can assume that $v_2$ is even and $v_3$ is odd. Both $u_3$ and $v_3$ are odd and $u_3 \neq 1$, so $u_3 = 2k+1$ and $v_3 = 1$. This implies that $(v_1,v_2,v_3)$ is a bad triple, so this case is impossible.
        \item Both $v_2$ and $v_3$ are odd. Then, $v_2 = v_3 = 1$ and $u_2 = u_3 = 2k+1$. By majority we deduce $f(u_1,u_2,u_3)f(v_1,v_2,v_3) = \{2k+1,1\}$, which is an edge of $C_{2k+1}$.
        \end{itemize}
    \item Both $u_2$ and $u_3$ are even. Then, $v_1$ and $v_2$ are odd and we can assume $u_2 \geq u_3$, $v_2 \geq v_3$. In this case we have $f(u_1,u_2,u_3) = \prescript{u_2}{}{\lceil} 2k + 3 - 1 \rfloor_{u_3} = u_2$ and $f(v_1,v_2,v_3) = \text{median}(2k+1,v_2,v_3) = v_2$, so $f(u_1,u_2,u_3)f(v_1,v_2,v_3) = u_2v_2$ is an edge of $C_{2k+1}$.
\end{itemize}

If none of the cases above apply, then both $u_2$ and $u_3$ are strictly greater than $1$ and exactly one of them is even. This is impossible because $(u_1,u_2,u_3)$ is not a bad triple.

\end{proof}

\renewcommand\qedsymbol{$\square$}

We can now finish the proof of the lemma. Let $R \in \struct{\Gamma}_{C_{2k+1},{\mathcal S}_k}$ and $(u_1,v_1), (u_2,v_2), (u_3,v_3)$ be three tuples in $R$. These tuples correspond to edges in $C_{2k+1}$ and no set in ${\mathcal S}_k$ contains a bad triple, so we deduce from Claim~\ref{claim:bad} that $f(u_1,u_2,u_3)f(v_1,v_2,v_3)$ is an edge of $C_{2k+1}$. Moreover, by Claim~\ref{claim:fpreserves} we have that $f(u_1,u_2,u_3)$ (resp. $f(v_1,v_2,v_3)$) belongs to the same set as $u_1,u_2,u_3$ (resp. $v_1,v_2,v_3$). This implies $(f(u_1,u_2,u_3),f(v_1,v_2,v_3)) \in R$ and hence $f$ is a majority polymorphism of $\struct{\Gamma}_{C_{2k+1},{\mathcal S}_k}$.
\end{proof}

\thmodd*

\begin{proof}
Let $G$ be a graph. Consider the following algorithm, which we call Algorithm B. Let $p : {\mathcal S}_k \to [0,1]$ be the probability distribution over ${\mathcal S}_k$ such that $p(S_k) = (2k-1)/(2k+1)$ and $p(S_k^A) = p(S_k^B) = 1/(2k+1)$.
For each vertex $v \in V(G)$, draw $S \sim p$ and set $L(v) = S$. Cast the problem of finding a homomorphism $g$ from $G$ to $C_{2k+1}$ such that $g(v) \in L(v)$ for all $v \in V(G)$ as a \plainVCSP\ instance $I$, with one variable $x_v$ with domain $[2k+1]$ for each $v \in V(G)$ and a term $\phi_{uv}(x_u,x_v)$ for each edge $uv \in E(G)$. The cost function $\phi_{uv} : [2k+1]^2 \to \{0,\infty\}$ is such that $\phi_{uv}(a,b) = 0 \iff (a \in L(u)) \land (b \in L(v)) \land (ab \in E(C_{2k+1}))$. Construct the linear program $\sa{2}{3}{I}$ and compute its optimum $s^*$. If $s^* = 0$ then return YES. Otherwise, return NO.

\renewcommand\qedsymbol{$\blacksquare$}
\begin{claim}
\label{claim:corr}
If Algorithm B returns YES, then there exists a homomorphism from $G$ to $C_{2k+1}$.
\end{claim}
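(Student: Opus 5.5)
The plan is to show that if $\sa{2}{3}{I}$ has optimum $0$, then $I$ has a feasible assignment. Such an assignment is a homomorphism from $G$ to $C_{2k+1}$ respecting the lists, and in particular a homomorphism from $G$ to $C_{2k+1}$. The natural tool is Theorem~\ref{thm:vcspwidth}: a conservative language whose support contains a majority operation has valued relational width $(2,3)$. We therefore need a valued language containing the cost functions of $I$ to which that theorem applies.

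\textbf{Step 1: identify the feasibility relations.} For each edge $uv$ of $G$, the set $\feas{\phi_{uv}}$ is exactly the relation $R_{L(u)L(v)}$, which belongs to $\struct{\Gamma}_{C_{2k+1},{\mathcal S}_k}$. Since each $\phi_{uv}$ is $\{0,\infty\}$-valued, $I$ is an instance over the valued language $\Gamma$ consisting of the six $\{0,\infty\}$-valued functions whose feasibility relations are the relations of $\struct{\Gamma}_{C_{2k+1},{\mathcal S}_k}$. Let $\Gamma_c$ be $\Gamma$ together with all $\{0,1\}$-valued unary cost functions on $[2k+1]$.

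\textbf{Step 2: build a fractional polymorphism from the majority polymorphism.} Take the majority polymorphism $f$ of $\struct{\Gamma}_{C_{2k+1},{\mathcal S}_k}$ from Lemma~\ref{lem:majodd}. The unary relations of $\Gamma_c$ all equal the full domain, so $f$ is a polymorphism of $\feas{\Gamma_c}$. Let $\omega$ put weight $1$ on $f$.

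For the crisp functions in $\Gamma$, the required inequality is immediate: feasible inputs have cost $0$, and $f$ maps feasible tuples to feasible tuples, so both sides are $0$.

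The unary $\{0,1\}$-valued functions are the delicate part. Here we need $\phi(f(a,b,c)) \leq \frac13(\phi(a)+\phi(b)+\phi(c))$, and this fails in general because $f$ is not conservative. For example, $f(1,2,3)=\prescript{3}{}{\lceil} 2k+1 \rfloor_1=3$, which is in the input set, but other inputs may produce values outside it.

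\textbf{Step 3: avoid the unary functions entirely.} The cleaner route is not to go through Theorem~\ref{thm:vcspwidth} at all and instead argue directly with majority. This is the step I expect to be the main obstacle. I would use the classical result that a crisp language with a majority polymorphism has bounded width. The standard fact is that $(2,3)$-consistency decides such CSPs, and an $\sa{2}{3}{}$-feasible solution implies that the $(2,3)$-consistency algorithm does not refute the instance. The reason is that the support of a feasible LP solution is closed under the consistency propagation rules.

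Concretely:
\begin{itemize}
\item From a feasible solution $\lambda$ of $\sa{2}{3}{I}$, take the supports of the local distributions on pairs and triples of variables. These form a nonempty $(2,3)$-consistent system of partial assignments, because marginalization constraints make every supported pair extend to a supported triple.
\item By the Baker--Pixley / Feder--Vardi theorem for majority polymorphisms, a nonempty $(2,3)$-consistent (strong $3$-consistent for binary constraints) instance over a language with a majority polymorphism has a solution.
\end{itemize}

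If one prefers to stay within the paper's toolkit, the alternative is to apply Theorem~\ref{thm:vcspwidth} to the non-conservative $\Gamma$ via the bounded-width theory in Thapper--\v{Z}ivn\'y directly. For crisp languages, having a majority polymorphism already gives relational width $(2,3)$ in their sense, and that suffices. Either way, the optimum $s^*=0$ yields a feasible assignment of $I$, which is the desired homomorphism.
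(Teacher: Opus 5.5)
Your argument is correct, and it departs from the paper exactly where you expected trouble. The paper's proof is two lines. It takes the majority polymorphism $f$ from Lemma~\ref{lem:majodd}, notes that the all-zero valuation $\Gamma$ has the fractional polymorphism $\omega$ with $\omega(f)=1$, and invokes Theorem~\ref{thm:vcspwidth} to conclude that $\sa{2}{3}{I}$ is exact. You are right that Theorem~\ref{thm:vcspwidth}, as stated in the paper, covers only conservative languages, and $\Gamma$ is not conservative. That citation is therefore really an appeal to the general (non-conservative) Thapper--\v{Z}ivn\'y characterization via bounded width, which is exactly your fallback in the last paragraph. Your main route replaces it with classical crisp facts. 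The supports of a feasible LP solution on sets of at most three variables are never deleted by the $(2,3)$-consistency pruning rules, because the marginalization constraints make them stable. So the algorithm does not refute $I$, and a majority polymorphism gives strict width $2$ (Baker--Pixley, Jeavons--Cohen--Cooper). Your version is more elementary and self-contained, and it loses nothing here since all costs lie in $\{0,\infty\}$. The paper's version is shorter and stays within the valued machinery it uses elsewhere.

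Two small corrections. First, in your second bullet, Baker--Pixley must be applied to the \emph{output} of the consistency algorithm, not to the raw LP supports. The output's relations arise from those of $I$ by intersections and projections, so they are preserved by $f$; the raw supports need not be closed under $f$. The sentence before your bullets already states the right chain. Second, in Step~2 the obstruction is not that $f$ is non-conservative. With $\omega(f)=1$ the inequality fails for \emph{any} majority $f$: on $(a,a,b)$ with $a\neq b$ and $\phi$ the indicator of $a$, the left side is $1$ and the right side is $2/3$. Also, $f(1,2,3)=3$ does not exhibit non-conservativity, whereas $f(5,1,2k)=3$ for $k\geq 2$ does.
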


\begin{proof}
By Lemma~\ref{lem:majodd}, the crisp constraint language $\struct{\Gamma}_{C_{2k+1},{\mathcal S}_k}$ has a majority polymorphism $f$. The valued constraint language $\Gamma$ of $I$ is the all-zero valuation of $\struct{\Gamma}_{C_{2k+1},{\mathcal S}_k}$, and it is easy to verify that $\Gamma$ admits as fractional polymorphism the mapping $\omega$ that gives weight $1$ to $f$ and weight $0$ to all other ternary polymorphisms of $\struct{\Gamma}_{C_{2k+1},{\mathcal S}_k}$. By Theorem~\ref{thm:vcspwidth}, $\opt{I}$ is exactly the optimum of $\sa{2}{3}{I}$. Therefore, if the algorithm above returns YES then $\opt{I} = 0$ and hence there exists a homomorphism from $G$ to $C_{2k+1}$.
\end{proof}

\begin{claim}
\label{claim:prob}
If there exists a homomorphism from $G$ to $C_{2k+1}$, then Algorithm B returns YES with probability at least $(\alpha_k)^{-n}$.
\end{claim}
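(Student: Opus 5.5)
Suppose $g$ is a homomorphism from $G$ to $C_{2k+1}$. Algorithm B returns YES as soon as every vertex $v$ satisfies $g(v) \in L(v)$: in that case $g$ is an assignment of cost $0$ to $I$, so $\opt{I} = 0$, and since $\sa{2}{3}{I}$ is a relaxation with nonnegative objective its optimum is $0$ too. It therefore suffices to lower-bound the probability that $g(v) \in L(v)$ for all $v$. The lists are drawn independently, so this probability equals $\prod_{v} P(g(v))$, where $P(x) = \sum_{S \ni x} p(S)$. From the definition of $\mathcal{S}_k$ we get $P(1) = p(S_k^A) + p(S_k^B) = 2/(2k+1)$. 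Every other vertex $x > 1$ lies in $S_k$ and in exactly one of $S_k^A, S_k^B$, so $P(x) = (2k-1)/(2k+1) + 1/(2k+1) = 2k/(2k+1)$.

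The key step is to use the rotational automorphisms of $C_{2k+1}$ to choose a homomorphism that rarely visits vertex $1$. The rotations $\rho_j(x) = x + j \bmod (2k+1)$ are automorphisms, so each $\rho_j \circ g$ is again a homomorphism. Summing over $j$, each vertex of $G$ is mapped to $1$ exactly once, so there is a $j$ for which $\rho_j \circ g$ maps at most $n/(2k+1)$ vertices to $1$. We replace $g$ by this homomorphism and let $m \leq n/(2k+1)$ be the number of vertices it sends to $1$. The success probability is then at least
\[
\left(\frac{2}{2k+1}\right)^{m} \left(\frac{2k}{2k+1}\right)^{n-m}.
\]
Since $2/(2k+1) \leq 2k/(2k+1)$, this expression is nonincreasing in $m$. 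So the worst case is $m = n/(2k+1)$, which gives exactly $(\alpha_k)^{-n}$ by the definition of $\alpha_k$.

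One technicality needs care. When $n/(2k+1)$ is not an integer, we only have $m \leq \lfloor n/(2k+1) \rfloor$, and monotonicity still gives a bound at least as large, so this causes no trouble. Beyond that the calculation is routine. The only conceptual step is the averaging over rotations, together with the observation that $P$ takes its minimum on the single vertex $1$.
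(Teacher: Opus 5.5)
Your proof is correct and follows the paper's argument: you compute the same per-vertex success probabilities $2/(2k+1)$ for vertex $1$ and $2k/(2k+1)$ for the others, use a rotation of $C_{2k+1}$ to make vertex $1$ receive at most $n/(2k+1)$ vertices of $G$, and apply the same monotonicity bound. You also spell out why $g(v)\in L(v)$ for all $v$ forces $s^*=0$ and hence a YES answer, a step the paper leaves implicit.
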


\begin{proof}
Let $g$ be a homomorphism from $G$ to $C_{2k+1}$. For each $u \in [2k+1]$, let $\mu(u) = |\{ v \in V(G) \mid g(v) = u \}|$. Clearly, there exists $u^* \in [2k+1]$ such that $\mu(u^*) \leq n/(2k+1)$. Due to the automorphisms of $C_{2k+1}$, we can assume without loss of generality that $u^* = 1$. The probability that $g(v) \in L(v)$ for all $v \in V(G)$ is
\begin{align*}
\prod_{v \in V(G)} \text{Pr}(g(v) \in L(v)) &= \prod_{u \in V(H)} \left( \sum_{S \in {\mathcal S}_{k} \, : \, u \in S} p(S) \right)^{\mu(u)}\\
&= \left( \frac{2}{2k+1} \right)^{\mu(1)} \left( \frac{2k}{2k+1} \right)^{n-\mu(1)}\\
&\geq \left( \left( \frac{2}{2k+1} \right)^{\frac{1}{2k+1}} \left( \frac{2k}{2k+1} \right)^{\frac{2k}{2k+1}} \right)^n\\
&= (\alpha_k)^{-n}
\end{align*}
where the inequality comes from the fact that $(2/(2k+1))^x(2k/(2k+1))^{1-x}$ is a nonincreasing function of $x$ for $0 \leq x \leq 1/(2k+1)$.
\end{proof}
\renewcommand\qedsymbol{$\square$}

By Claim~\ref{claim:corr} and Claim~\ref{claim:prob}, invoking $\lceil \ln(1/2) /\ln(1-(\alpha_k)^{-n})\rceil  + 1 = O((\alpha_k)^n)$ times Algorithm B and returning the disjunction of the outputs (seen as Boolean values) yields a Monte Carlo algorithm that returns the correct answer with probability $> 1/2$. The runtime of Algorithm B is polynomial in $n$ (and $k$), so the theorem follows.
\end{proof}

We could verify through computer search that $\struct{\Gamma}_{C_{9},{\mathcal S}_4}$ does not have a persistent majority support, despite having a majority polymorphism. This explains why Theorem~\ref{thm:randodd} only applies to $\ghom{C_{2k+1}}$ and not its optimization counterpart $\valhom{C_{2k+1}}$. We also observed that the family ${\mathcal S}_4$ is maximal, in the sense that adding any vertex to a set (or including any additional set that is not contained in either $S_4$, $S_4^A$, or $S_4^B$) always results in an NP-hard constraint language. This suggests that improving upon Theorem~\ref{thm:randodd} will require a more sophisticated approach than pure sampling.

\section{Concluding remarks}

An obvious open question would be to find a graph property that smoothly generalizes both bounded persistent majority number and bounded (extended) cliquewidth, and implies the existence of an algorithm for $\ghom{\mathcal H}$ with single-exponential dependency on $n+h$. We believe that settling the complexity of $\ghom{\mathcal H}$ when $\mathcal H$ has bounded twin-width could shed some light on this question, as a single-exponential algorithm in this case would cover both bounded track number (see Bonnet et al.~\cite{DBLP:journals/combtheory/BonnetGKTW22}) and bounded cliquewidth.

Another way to pursue this line of research would be to shift the focus from $\ghom{\mathcal H}$ to $\valhom{\mathcal H}$. The increased expressivity of $\valhom{\mathcal H}$ makes complexity lower bounds much easier to derive. For example, any algorithm for $\valhom{\mathcal H}$ also works on the subgraph closure of $\mathcal H$; under the ETH, this observation rules out many properties that allow dense graphs (such as bounded cliquewidth) from admitting single-exponential algorithms. Technicalities related to homomorphic equivalence are also largely removed from consideration, which may result in a cleaner complexity landscape.

Our last question concerns the $\subiso{\mathcal H}$ problem, which asks for the existence of an \emph{injective} homomorphism from $G$ to $H \in {\mathcal H}$. If $\mathcal H$ excludes a topological minor, then coupling Theorem~\ref{thm:valhom} with a standard color-coding argument yields an algorithm for $\subiso{\mathcal H}$ that runs in time $2^{O(n \log n)}h^{O(1)}$ and polynomial space. This matches the complexity of the best known algorithm for this problem, due to combined results of Pilipczuk and Siebertz~\cite{DBLP:journals/jctb/PilipczukS21} and D\c{e}bski et al.~\cite{Debski2021Improved}. These two algorithms work in a somewhat different fashion, although we note that both rely on branching schemes that share several critical ingredients: a variant of the planar product structure theorem, the Grohe-Marx structure theorem for excluded topological minors, and color-coding. Can either of these approaches be refined to yield a single-exponential time, polynomial-space algorithm for $\subiso{\mathcal H}$? 

\bibliographystyle{plain}
\bibliography{sample}

\end{document}